\DeclareSymbolFontAlphabet{\mathbbm}{bbold}
\DeclareSymbolFontAlphabet{\mathbb}{AMSb}%
\definecolor{imperialBlue}{RGB}{0, 62, 116}
\definecolor{imperialBrick}{RGB}{165,25,0}
\definecolor{imperialProcess}{RGB}{0,133,202}
\definecolor{imperialGreen}{RGB}{2,137,59}
\definecolor{imperialRed}{RGB}{221,37,1}
\definecolor{imperialOrange}{RGB}{210,64,0}
\definecolor{imperialBlue2}{RGB}{0,110,175}
\definecolor{imperialTangerine}{RGB}{236,115,0}
\definecolor{imperialPurple}{RGB}{101,48,152}
\definecolor{imperialLime}{RGB}{196,214,0}
\definecolor{imperialKermit}{RGB}{102,164,10}
\newtheorem{theorem}{Theorem}[section]
\newtheorem{corollary}[theorem]{Corollary}
\newtheorem{lemma}[theorem]{Lemma}
\newtheorem{proposition}[theorem]{Proposition}
\newtheorem{remark}[theorem]{Remark}
\theoremstyle{definition}
\newtheorem{definition}[theorem]{Definition}
\newtheorem{example}[theorem]{Example}
\newcommand{\C}{\mathbb{C}}
\newcommand{\Ex}{\mathbb{E}}
\newcommand{\N}{\mathbb{N}}
\renewcommand{\P}{\mathbb{P}}
\newcommand{\R}{\mathbb{R}}
\renewcommand{\AA}{\mathcal{A}}
\newcommand{\CC}{\mathcal{C}}
\newcommand{\DD}{\mathcal{D}}
\newcommand{\HH}{\mathcal{H}}
\newcommand{\MM}{\mathcal{M}}
\newcommand{\NN}{\mathcal{N}}
\newcommand{\OO}{\mathcal{O}}
\newcommand{\PP}{\mathcal{P}}
\renewcommand{\SS}{\mathcal{S}}
\newcommand{\WW}{\mathcal{W}}
\newcommand{\XX}{\mathcal{X}}
\DeclarePairedDelimiterX{\normop}[1]{\lVert}{\rVert_{\mathrm{op}}}{#1}
\DeclarePairedDelimiterX{\normopp}[1]{\lVert}{\rVert_{\mathrm{op}}^p}{#1}
\DeclarePairedDelimiterX{\normhs}[1]{\lVert}{\rVert_{\mathrm{HS}}}{#1}
\newcommand{\tr}[1]{\mathrm{tr} \big(#1\big)}
\newcommand{\Tr}[1]{\mathrm{Tr} \big(#1\big)}
\newcommand{\pdg}[2]{\langle\gamma_{#1,#2}\rangle_{\mu_V^\infty}}
\newcommand{\pdgn}[2]{\langle\gamma_{#1,#2}\rangle_{\mu_V^N}}
\newcommand{\wt}[1]{\mathrm{wt} (#1)}
\newcommand{\id}{\mathbbm{1}}
\renewcommand{\del}{\partial}
\def\upintkern@{\mkern-7mu\mathchoice{\mkern-3.5mu}{}{}{}}
\def\upintdots@{\mathchoice{\mkern-4mu\@cdots\mkern-4mu}%
	{{\cdotp}\mkern1.5mu{\cdotp}\mkern1.5mu{\cdotp}}%
	{{\cdotp}\mkern1mu{\cdotp}\mkern1mu{\cdotp}}%
	{{\cdotp}\mkern1mu{\cdotp}\mkern1mu{\cdotp}}}
\newcommand{\UpMultiIntegral}[1]{%
	\edef\ints@c{\noexpand\upintop
		\ifnum#1=\z@\noexpand\upintdots@\else\noexpand\upintkern@\fi
		\ifnum#1>\tw@\noexpand\upintop\noexpand\upintkern@\fi
		\ifnum#1>\thr@@\noexpand\upintop\noexpand\upintkern@\fi
		\noexpand\upintop
		\noexpand\ilimits@
	}%
	\futurelet\@let@token\ints@a
}
\newcommand{\di}{\mathrm{d}}
\renewcommand*{\dif}%
    {\@ifnextchar^{\DIfF}{\di}} 
\def\DIfF^#1{%
    \mathop{\mathrm{\mathstrut d}}% 
        \nolimits^{#1}\gobblespace}
\def\gobblespace{% 
        \futurelet\diffarg\opspace}
\def\opspace{%
        \let\DiffSpace\!%
        \ifx\diffarg(%
            \let\DiffSpace\relax
        \else
            \ifx\diffarg[%
               \let\DiffSpace\relax
            \else
               \ifx\diffarg\{%
                   \let\DiffSpace\relax
               \fi\fi\fi\DiffSpace}
\DeclareFontFamily{OMX}{mdbch}{}
\DeclareFontShape{OMX}{mdbch}{m}{n}{ <->s * [0.8]  mdbchr7v }{}
\DeclareFontShape{OMX}{mdbch}{b}{n}{ <->s * [0.8]  mdbchb7v }{}
\DeclareFontShape{OMX}{mdbch}{bx}{n}{<->ssub * mdbch/b/n}{}
\DeclareSymbolFont{uplargesymbols}{OMX}{mdbch}{m}{n}
\DeclareMathSymbol{\upintop}{\mathop}{uplargesymbols}{82}
\DeclareMathSymbol{\upointop}{\mathop}{uplargesymbols}{"48}
\DeclareFontFamily{MDB}{mdbch}{}
\DeclareFontShape{MDB}{mdbch}{m}{n}{ <->s * [0.8]  mdbchrmb }{}
\DeclareFontShape{MDB}{mdbch}{b}{n}{ <->s * [0.8]  mdbchbmb }{}
\DeclareFontShape{MDB}{mdbch}{bx}{n}{<->ssub * mdbch/b/n}{}
\DeclareSymbolFont{mathdesignB}{MDB}{mdbch}{m}{n}%
\DeclareMathSymbol{\upintclockwise}{\mathop}{mathdesignB}{128}
\DeclareMathSymbol{\upointclockwise}{\mathop}{mathdesignB}{130}
\DeclareMathSymbol{\upointctrclockwise}{\mathop}{mathdesignB}{132}
\DeclareMathSymbol{\upoiint}{\mathop}{mathdesignB}{134}
\DeclareMathSymbol{\upoiiint}{\mathop}{mathdesignB}{136}
\newcommand{\upint}{\DOTSI\upintop\ilimits@}
\newcommand{\upoint}{\DOTSI\upointop\ilimits@}
\renewcommand{\int}{\upint}
\newcommand{\Cd}{\C\langle X_1,\dots,X_d\rangle}
\renewcommand{\tr}[1]{\mathrm{tr} \big(#1\big)}
\renewcommand{\Tr}[1]{\mathrm{Tr} \big(#1\big)}
\title{\textsc{Quantum Path Signatures}}
\author{
Samuel Crew\textsuperscript{1,2,*}, 
Cristopher Salvi\textsuperscript{1,*}, 
William F. Turner\textsuperscript{1,*}, 
Thomas Cass\textsuperscript{1}, 
Antoine Jacquier\textsuperscript{1}
}
\affil{\small
\textsuperscript{1}Department of Mathematics, Imperial College London, UK.\\
\textsuperscript{2}Department of Physics, National Tsing Hua University, Taiwan.
}
\date{\today}
\begin{document}

\maketitle

\footnotetext[1]{These authors contributed equally to this work.}

\vspace{-1em} 
\begin{abstract}
    We elucidate physical aspects of path signatures by formulating randomised path developments within the framework of matrix models in quantum field theory. Using tools from physics, we introduce a new family of randomised path developments and derive corresponding loop equations. We then interpret unitary randomised path developments as time evolution operators on a Hilbert space of qubits. This leads to a definition of a quantum path signature feature map and associated quantum signature kernel through a quantum circuit construction. In the case of the Gaussian matrix model, we study a random ensemble of Pauli strings and formulate a quantum algorithm to compute such kernel.
\end{abstract}

\setcounter{tocdepth}{2}
\vspace{-1em} 
\tableofcontents

\section{Introduction}

On sufficiently fine time scales, many types of sequential data, such as audio, video, time series, or text, can naturally be modelled as paths $\gamma : [0,1] \to \mathbb{R}^d$. The idea that such paths can be characterised (up to reparameterisation) by their \emph{path signature} was first introduced by Chen~\cite{chen1957integration} for smooth paths, and later extended to more irregular paths in the context of \emph{rough path theory}~\cite{lyons1998differential, hambly2010uniqueness, boedihardjo2016signature}.  In this work, we elucidate a connection between \emph{quantum field theory} (QFT) and the theory of path signatures. Adopting the language of QFT, refined later in Section~\ref{sec:background}, we treat a path $\gamma$ as a Wilson line and interpret its signature as arising from parallel transport in gauge theory. This perspective allows us to draw on tools from high-energy theoretical physics to study a class of differential equations driven by random vector fields sampled from so-called matrix model distributions, which develop the path into a unitary group. An important instance of this framework is the Gaussian unitary development, whose physical interpretation naturally leads to a representation of the control system as a quantum circuit acting on a Hilbert space of qubits. This construction, in turn, gives rise to a quantum algorithm whose output approximates the associated quantum signature kernels.

The path signature $\mathcal{S}(\gamma)$ is formally defined as the solution to the tensor differential equation \( \dif y = y \otimes \dif\gamma \) in the free tensor algebra \( T((\mathbb{R}^d)) := \prod_{n=0}^\infty (\mathbb{R}^d)^{\otimes n} \), and thereby can be seen as a non-commutative analogue of the exponential function. By applying Picard iteration, one recovers the familiar formulation of the signature as the sequence of iterated integrals $\left( \int_{0<t_1<\cdots<t_n<1} \dif\gamma_{t_1} \otimes \cdots \otimes \dif\gamma_{t_n} \right)_{n \in \mathbb{N}}$. It is a classical result that linear functionals on \( T((\mathbb{R}^d)) \), when restricted to the image of the signature map, form a unital algebra that separates points. Consequently, the classical Stone--Weierstrass theorem ensures that linear functionals on signatures are dense in the space of continuous real-valued functions defined on compact subsets of unparameterised paths~\cite{CT_topologies_a}. This makes the signature a powerful representation, enabling the approximation of arbitrary path-dependent functionals using linear models~\cite{salvi2023structure}. Indeed, signature methods have seen a rapid rise in popularity in recent years, finding applications across a broad range of domains in data science, including deep learning for sequence modelling~\cite{kidger2019deep, morrill2021neural, salvi2022neural, hoglund2023neural, cirone2024theoretical, issa2024non, barancikova2024sigdiffusions, cirone2025parallelflow}, quantitative finance~\cite{arribas2020sigsdes, salvi2021higher, horvath2023optimal, pannier2024path, cirone2025rough}, cybersecurity~\cite{cochrane2021sk}, and computational neuroscience~\cite{holberg2024exact}, among many others. The interested reader is referred to \cite{cass_salvi_notes} for a detailed and pedagogical account on the subject and to \cite{fermanian2023new} for an overview of recent applications. 

In practice, the paths encountered in data-driven applications are typically constructed via piecewise linear interpolation of discrete time series. By combining Chen's identity with the fact that the signature of a linear segment is given by the tensor exponential of its increment, one obtains a concise formula for the signature of a piecewise linear path \( \gamma = \gamma^1 * \cdots * \gamma^L \) as $\mathcal S(\gamma) = \exp(\Delta \gamma^1) \otimes \cdots \otimes \exp(\Delta \gamma^L)$. This expression underlies the implementation of signature computations in standard Python libraries such as \texttt{esig}~\cite{esig}, \texttt{iisignature}~\cite{reizenstein2018iisignature}, and \texttt{signatory}~\cite{signatory}. It enables an efficient and exact evaluation of the signature, with time complexity \(\OO(Ld^n)\), where \(n \in \mathbb{N}\) denotes the truncation level. However, due to the exponential growth in the dimension \(d\), the method becomes computationally prohibitive as \(n\) increases. A commonly-taken approach solution to this dimensionality issue making use of signature kernel methods has become popular~\cite{ lemercier2021distribution, salvi2021rough, lemercier2021siggpde, manten2024signature, shmelev2024sparse}. These allow for efficient computation of kernels of the form \(\langle \mathcal{S}(\gamma), \mathcal{S}(\sigma) \rangle,\) for various choices of inner product \(\langle \cdot, \cdot \rangle\) on \(T((\mathbb R^d))\), without the need for computing \(\mathcal{S}(\gamma)\) directly, for example leveraging the recent result that the signature kernel satisfies a Goursat PDE \cite{SigPDE}.

Another promising recent alternative to these ideas is a proposal to use the development of \(\gamma\) into a matrix Lie group~\cite{lou2022path}. In this formulation, the central object becomes not the signature but related representations obtained by the evolution 
\begin{equation}\label{eqn:first_pd}
    \dif U(\gamma; M)_t = U(\gamma; M)_t \cdot M(\dif\gamma_t) = \sum_{j=d}^d U(\gamma; M)_t A_j \dif\gamma^j, \quad U(\gamma;M)_0 = I_N,
\end{equation}
where \(U(\gamma; M)\) belongs to a matrix Lie group \(G_N\), \(M\) is a map in \(\mathrm{Hom}(\mathbb R^d, \mathfrak{g}_N)\) such that $M(v) = \sum_{j=1}^d A_jv^j$ for $A_1,\cdots,A_d \in \mathfrak{g}_N$, with $\mathfrak{g}_N$ denoting the associated Lie algebra, the latter being isomorphic to the space of left-invariant vector fields on \(G_N\), and in which \(\cdot\) denotes matrix multiplication. In this work we consider the physical and geometrical interpretation of the development of $\gamma$ as a parallel transport operator on a $G_N$ bundle. 
The solution to~\eqref{eqn:first_pd} can be expressed in terms of the signature as
\begin{equation}
    U(\gamma; M)_t =\sum_{\bm{w}\in\WW_d}i^{|\bm{w}|}A_{\bm{w}}\SS_{0,t}^{\bm{w}}(\gamma),
\end{equation}
where the sum is taken over all words in~$d$ letters. An important example of this framework is when \(G_N = U(N)\) is the unitary group, whose corresponding Lie algebra $\mathfrak{g}_N = \mathfrak{u}(N)$ is the set of anti-Hermitian matrices. We note that the critical issue of dimensionality outlined above is transferred, rather than resolved, from the selection of a truncation parameter in the signature to the choice of a Lie group of sufficiently high dimension for the method to be effective. A promising approach recently proposed in the literature~\cite{CT_free, SKlimit} consists of randomising the choice of vector fields $M = (A_1,\ldots,A_d)\sim\mu_V^N$ over a suitable class of (Borel) probability measures $\mu_V^N$ on the Lie algebra and study the limiting behaviour of (inner products of such) path developments as $N$ tends to infinity. 

The physical perspective adopted in this work interprets path development as parallel transport in gauge theory. This viewpoint leads to an interpretation of vector field randomisation as a zero-dimensional quantum field theory, where integration over gauge fields is realised via a matrix-valued path integral. Conversely, we introduce path developments as a novel class of observables in matrix model theory, interpolating between zero-dimensional quantum field theory and Yang–Mills theory. More precisely, we will be interested in characterising the large-$N$ behaviour of the law of the random variables
\begin{equation}
    \langle U(\gamma;M), U(\sigma; M) \rangle_{\text{HS}}:= \lim_{N\to \infty}\frac{1}{N}\Tr{U(\gamma;M) U(\sigma; \cdot)^\dag} \quad \text{under } M \sim \mu_V^N,
\end{equation}
where, importing tools from theoretical physics, we consider $\mu_V^N$ a matrix model measure on $\mathrm{u}(N)$ given by~\cite{comb_matrix_models,matrix_models}
\begin{equation}
    \dif \mu_V^N (M) \propto \exp\left(-N \tr{V(M)}\right)\dif M,
\end{equation}
where $\tr{\cdot}$ denotes the normalised trace defined by $\tr{\cdot} := \frac{1}{N}\Tr{\cdot}$ and $V(A)$ denotes a Gaussian-perturbed polynomial potential 
\begin{equation}
    V(A) = V(A_1,\cdots,A_d) = \frac{1}{2} \sum_{i=1}^d A_i^2 + W,
\end{equation}
where $W$ is a polynomial satisfying some suitable convexity conditions. By combining techniques from non-commutative probability theory~\cite{Guionnet_SD} and rough path theory \cite{cass_salvi_notes}, we can define the limit
\begin{equation}
    k^{\mu_V^\infty}(\gamma,\sigma) := \lim_{N \to \infty} \mathbb E_{M \sim \mu_V^N}\left[\tr{U(\gamma;M) U(\sigma;M)^{\dag}}\right],
\end{equation}
and moreover understand $k^{\mu_V^\infty}(\gamma,\sigma)$ in terms of the signature of the path $y = \gamma * \overleftarrow{\sigma}$:
\begin{align}
    k^{\mu_V^\infty}(\gamma,\sigma) &= \sum_{\bm{w}\in\WW_d}i^{|\bm{w}|} \lim_{N\to\infty}\mathbb{E}_{\mu_V^N}\big[\tr{A_{\bm{w}}}\big]\SS^{\bm{w}}(y) \\
    &:= \sum_{\bm{w}\in\WW_d}i^{|\bm{w}|} \tau_V(A_{\bm{w}})\SS^{\bm{w}}(y),
\end{align}
where $\tau_V(A_{\bm{w}})$ is a non-commutative law satisfying the Schwinger-Dyson equation (\Cref{thm: S-D_uniqueness}). We use this structure to prove that $k^{\mu_V^\infty}(\gamma,\sigma) = k^y(1,1)$, where $k^y : [0,1]^2 \to \mathbb R$ satisfies an integro-differential equation of the form (\Cref{thm: integro_diff_eq})
\begin{equation}\label{eqn:pde}
\pdg{s}{t}=1-\int\limits_{s\leq u\leq v\leq t}\pdg{s}{u}\pdg{u}{v}\langle\dif\gamma_u,\dif\gamma_v\rangle - \sum_{k=1}^d\int_s^t\DD_V^k\pdg{s}{u}\dif\gamma_u^k,
\end{equation}
with the boundary condition $\pdg{s}{s}=1$ for all $s\in [0,T]$, and where the pathwise derivative $\DD_V^k$ will be rigorously defined in \Cref{sec:unitarydevelopments}.

This result generalises the findings in \cite{CT_free}, where the special case of quadratic potential $V=\tfrac{1}{2}\sum_{i=1}^dX_i^2$ was considered along with variants\footnote{The work \cite{CT_free} shows, in addition, stability of the limit beyond the setting of i.i.d. Gaussian entries.} leading to the same limiting PDE, for which $\mu_V^N$ corresponds to the classical Gaussian unitary ensemble (GUE). 
We show in addition that the integro-differential equation~\eqref{eqn:pde} has a unique solution when the solution space is restricted to (infinite) linear functionals on the signature.

In Section~\ref{sec:quantumsignature} we turn our attention to the GUE ensemble and piecewise linear paths $\gamma: [0,T] \to \mathbb{R}^d$ given as $L$ specified increment vectors \( \{\Delta_l^1, \Delta_l^2, \ldots, \Delta_l^d\}_{l=1}^L \). We consider the path development as a unitary operator on $\mathcal{H} = (\mathbb{C}^2)^{\otimes n}$, a Hilbert space of $n$ qubits. A linear basis of generators of unitary operators on qubits is given by Pauli strings: length $n$ tensor products of the Pauli matrices $P = \{\sigma_I,\sigma_X,\sigma_Y,\sigma_Z\}$. We define a random ensemble of linear combinations of Pauli strings
\begin{equation}
    A_{\nu} = \sum_{\mathbf{w} \in P} \alpha_{\nu}^{\mathbf{w}}\sigma_{\mathbf{w}},
    \quad \nu=1,\ldots, d,
\end{equation}
where the coefficients are randomly distributed as defined precisely in \cref{def:SPS}.

We first demonstrate that the GUE randomised path development may be approximated by a development driven by this sparse ensemble of random Pauli strings (\cref{thm:sparse_approximation}). 
We then consider a Trotterisation of this path development into~$K$ subdivisions and thereby define a random quantum circuit approximating the unitary development. The circuit depends on the parameters: $K$ Trotter subdivisions, $m$ densities of Pauli ensemble and~$n$ qubits, and is given by
\begin{equation}
    U^Q_{\gamma}(\alpha(m),n,K) = \prod_{l=1}^L \left[\prod_{\nu=1}^d \prod_{i=1}^m P_{\mathbf{w}_{i,\nu}}(\Delta_l^{\nu} \alpha_{\nu}^{\mathbf{w}_i}/K)\right]^K,
\end{equation}
where $P_{\mathbf{w}}(\theta)$ is a Pauli rotation of angle $\theta$ associated to the Pauli string $\sigma_{\mathbf{w}}$. Finally, we use this circuit to propose an efficient quantum algorithm (\cref{thm:quantumalg}) in the one clean qubit model of quantum computation to approximate the GUE path development and the associated kernel.

\paragraph{Outline and summary.} This work lies at the intersection of matrix models in quantum field theory (QFT) and the theory of path signatures, with a particular focus on their applications in machine learning and data science. Our aim is to make this work accessible to researchers from both communities. To that end, we provide a concise introduction to the relevant background and establish our notation in \Cref{sec:background}. In \Cref{sec:unitarydevelopments}, we apply classical tools from non-commutative probability theory to derive the integro-differential equation (\ref{eqn:pde}) satisfied by $\pdg{s}{t}$ under small perturbations of the Gaussian potential defined by matrix models. Finally, in \Cref{sec:quantumsignature}, we introduce quantum analogues of the classical path signature and signature kernel. The main results of this section demonstrate that the path development driven by the random Pauli ensemble approximates well the unitary path development and that the associated quantum algorithm is efficient.

\paragraph{Acknowledgements.}
SC, AJ and CS are supported by the Innovate UK "Quantum Machine Learning for Financial Data Streams" (Project ref 10073285), in collaboration with Standard Chartered and Rigetti Computing. SC, AJ and CS are grateful to Marco Paini (Rigetti),  Ernesto Palidda (Rigetti) and Mattia Fiorentini (Rigetti) for many interesting discussions. 
AJ is also supported by  the EPSRC Grant EP/W032643/1.
TC is  supported by the EPSRC Programme Grant EP/S026347/1.
SC is supported by National Science and Technology Council of Taiwan (NSTC 113-2112-M-007-019) and is grateful to Masazumi Honda (RIKEN) and Heng-Yu Chen (NTU) for many interesting discussions and RIKEN for hospitality whilst part of this work was completed. WFT has been supported by the EPSRC Centre for Doctoral Training in Mathematics of Random Systems: Analysis, Modelling and Simulation (EP/S023925/1). SC, WFT and CS are grateful to Pochung Cheng and Po-Yao Chang for their organisation of the workshop `Random matrices and quantum machine learning' at National Tsing Hua University, Taiwan where part of this work was finalised. 
For the purpose of open access, the authors have applied for a creative commons attribution (CC BY) licence to any author accepted manuscript version arising.

\section{Background}\label{sec:background}
In this section, we provide a brief introduction to quantum field theory (QFT) in zero dimensions and path signatures and set the notation for the rest of the work. We also introduce a novel geometric-physical interpretation of path signatures and signature kernels as matrix model Wilson line correlation functions.

\subsection{Quantum field theory}
The physical context of our work is that of zero-dimensional QFT in the form of Hermitian matrix models. A thorough introduction to zero-dimensional QFT aimed at a mathematical audience may be found in \cite{hori2003mirror} and we refer the reader to \cite{eynard2015random} for a comprehensive review of matrix model theory. 
The theory of matrix models is a classical subject in theoretical Physics~\cite{t1993planar} with deep connections to enumerative geometry via topological recursion \cite{eynard2007invariants} and quantum gravity in two dimensions \cite{gross1990nonperturbative}. We provide a cursory overview of the subject and focus on reviewing the basic aspects of loop operators that the present work generalises.

We consider matrix-valued fields fluctuating on a zero-dimensional spacetime. Let us consider the unitary matrix Lie group
\begin{equation}
    U_N=\{U\in\MM_N(\C):U^\dag U = I_N\},
\end{equation}
where $\mathcal{M}_{N}(\mathbb{C})$ denotes $N\times N$ complex matrices, and the corresponding Lie algebra
\begin{equation}
    \mathfrak{u}_N=\{A\in\MM_N(\C) : A^\dag = -A\}.
\end{equation}
We now consider a connection (or gauge field in the Physics terminology); a Lie algebra-valued one-form which, in coordinates, may be expressed as an $N\times N$ matrix $A \in \mathfrak{u}_N$. Global gauge transformations act on these matrices by the adjoint action $g \cdot A = g^{-1} A g$ with $g \in U_N$. In QFT, the Feynman path integral averages over fluctuating field configurations weighted by a gauge-invariant action. In the present context we integrate over anti-Hermitian $N \times N$ matrices and the measure we choose is a so-called matrix model measure 
\begin{equation*}
    \dif \mu_V (A) = \exp\left(-N \tr{V(A)}\right)\dif A.
\end{equation*}
We study interacting polynomial potentials of the form
\begin{equation*}
    V(A,\{g_k\}) = \frac{1}{2} A^2 + \sum_{k=3}^m \frac{g_k}{k!} A^k,
\end{equation*}
depending on a set of coupling constants $\{g_k\}_{k \ge 3}$ and $A \in\mathfrak{u}_N$. We will provide a more mathematically precise exposition of such matrix model measures in the context of non-commutative probability theory in \cref{sec:unitarydevelopments}. Finally, we should specify the set of (gauge-invariant) observables we are interested in computing. We write expectation values as
\begin{equation}\label{eq:ExpectationValue}
    \left\langle \tr{f(A)} \right\rangle := \frac{1}{Z}\int \mathrm{d} \mu_{V}(A)\, \tr{f(A)},
\end{equation}
for a matrix function $f(A)$ and where $Z:=\langle 1 \rangle$ denotes the partition function. 
Solving the matrix model corresponds to determining the moment generating function. In the matrix model context, this object is the multi-trace correlation function\footnote{The right hand side of this expression is shorthand for the formal generating function of the single trace correlators $\langle \tr{A^{l_1}}\ldots \tr{A^{l_n}} \rangle_c$.}
\begin{equation}
    W_n\left(x_1,\ldots,x_n;\{g_k\}\right) := \left\langle \tr{\tfrac{1}{x_1 I - A}}\cdots \tr{\tfrac{1}{x_n I - A}}\right\rangle_{\mu_V,c}
\end{equation}
where the subscript $c$ denotes the connected part of the correlator. Correlation function with $n$ insertions enjoy a formal asymptotic expansion, the so-called genus expansion, of the form
\begin{equation}
    W_n\left(x_1,\ldots,x_n;\{g_k\}\right) = \sum_{g=0}^{\infty}N^{2-2g-n} W_{g,n}\left(x_1,\ldots,x_n;\{g_k\}\right) \quad \text{as } \quad N \to \infty.
\end{equation}
The single trace function $W_{0,1}$ satisfies an algebraic equation, known as the spectral curve of the matrix model. 
Its Laplace transform, the so-called loop operator denoted $\widehat{W}_{0,1}$, may be expressed as
\begin{equation}\label{eq:linearloopeqn}
    \widehat{W}_{0,1}(t) = \lim_{N \to \infty} \frac{1}{N}\left\langle \tr{e^{tA}} \right \rangle,
\end{equation}
and solves an integral equation known as the loop equation\footnote{$\star$ denotes convolution. $(f \star g) (t) = \int_{0}^t \dif s \, f(s)g(t-s)$.}
\begin{equation}\label{eq:matrixmodelloop}
    V'\left(\dif/\dif t\right) \cdot \widehat{W}_{0,1}(t) = \left(\widehat{W}_{0,1} \star \widehat{W}_{0,1}\right)(t).
\end{equation}
In the present work we develop a generalisation of this matrix model. 
We study a $d$-dimensional Hermitian matrix model together with a smooth path $\gamma: [0,1] \to \mathbb{R}^d$ and replace the exponentials in the loop operator with path-ordered exponentials
\begin{equation}
    \widehat{W}_{0,1}(t) = \lim_{N \to \infty} \left\langle \tr{e^{t A}} \right \rangle \longrightarrow \widehat{W}_{0,\gamma}(t) = \lim_{N \to \infty} \left\langle \tr{\textrm{P}\exp \int_{\gamma} A } \right \rangle.
\end{equation}
In the following we derive corresponding loop equations generalising~\eqref{eq:matrixmodelloop} for these path loop operators.

\subsection{Wilson lines}\label{subsec:Wilsonlines}
We now turn to the study of smooth paths $\gamma: [0,T] \to \mathbb{R}^d$. 
In the context of learning and kernel methods, such paths could be, for example, trajectories of the price a collection of financial instruments or patient heart rate data. In the present context we think of them as the world line of a charged particle moving through a gauge field---a so-called \textit{Wilson line} in the physical terminology---we now explain this construction.

We consider $V = \mathbb{R}^d$ as a smooth Riemannian manifold with the Euclidean metric.\footnote{The following construction applies more generally to parallelisable manifolds.} We consider a (topologically trivial) principal $G=U_N$ bundle $P = V \times G \xrightarrow{\pi} V$.
A connection on $P$ is then uniquely specified by a Lie algebra valued one-form, written $A \in \Omega^1(V,\mathfrak{g})$, and we consider the case where $A$ is constant on $V$. A path $\gamma: [0,T] \to V$ then has a unique horizontal lift\footnote{A horizontal lift of a path $\gamma \in V$ is a path $\tilde{\gamma} \in P$ such that $\pi(\tilde{\gamma}) = \gamma$ and $\dot{\tilde{\gamma}}$ lies in the horizontal subspace $V \cong TV$.} $\tilde{\gamma}: [0,T] \to P$ where $\tilde{\gamma}(t) = (\gamma(t),g(t))$ is determined by the solution to the ODE
\begin{equation}\label{eq:paralleltransport}
    \dot{g}(t) = g(t) \cdot A(\dot{\gamma}(t)),\quad g(a) = e.
\end{equation}
We now consider a finite-dimensional $G$-representation $\mathcal{H}$. We will take throughout $G=U_N$ with $\mathcal{H} = \mathbb{C}^N$ the fundamental representation and therefore, in a minor abuse of notation, conflate $A$ with its representation. 
Following the Dirac notation convention, vectors in the associated bundle are written as $|\psi\rangle \in \mathcal{H}$. We then define the topologically trivial associated bundle by $\mathbb{H} = V \times \mathcal{H}$ and write the fibre at $x \in V$ as $\mathcal{H}_x$. The parallel transport of a vector $|\psi\rangle \in \mathcal{H}_{\gamma(a)}$ along $\gamma$ from $\gamma(0) \in V$ to $\gamma(T) \in V$ is then determined by the solution to~\eqref{eq:paralleltransport}. In coordinates this equation reads
\begin{equation*}
    \dif U_{0,t}(\gamma) = U_{0,t}(\gamma) \cdot A_{\mu} \dif\gamma^{\mu},\quad U_{0,0} = I_N,
\end{equation*}
and the solution to this equation defines the path-ordered exponential
\begin{equation}\label{eq:unitarydevelopment}
    U_{0,T}(\gamma) = \textrm{P}\exp\left( \sum_{\mu=1}^d\int_{0}^T A_\mu\dif\gamma^\mu\right).
\end{equation}
The setup is illustrated in \cref{fig:wilsonline}. Later, in \cref{sec:quantumsignature}, the associated bundle will be further interpreted physically as a $N=2^n$ dimensional Hilbert space of qubits $\mathcal{H} = (\mathbb{C}^2)^{\otimes n}$ and $|\psi\rangle$ will represent the state of a register of qubits. 
The development $U_{0,T}(\gamma)$ is then a time-dependent many-body time evolution operator. 

In the physical terminology $\gamma: [0,T] \to V$ describes the worldline of a charged particle moving from $\gamma(0)$ to $\gamma(T)$ through a (constant) electromagnetic field described by the connection $A$, 
that is a Wilson line. The charge of the particle is described by a vector in $\mathcal{H}$ and the parallel transport operator~\eqref{eq:unitarydevelopment} describes how the charge changes throughout the motion. The constant connection has non-trivial curvature and thus introduces a path-dependence in this charge evolution. In the data science context, this vector may thus be used as a feature associated to the path.

In the context of defining kernels on path space, we will later be interested in the expectation values of the trace of this quantity, $\tr{U_{0,T}(\gamma)}$. In summary, we are interested in integrating over the choice of connection using $d$-dimensional generalisations of the matrix model integrals discussed in the previous subsection.

\begin{figure}
    \centering
    \includegraphics[scale=0.3]{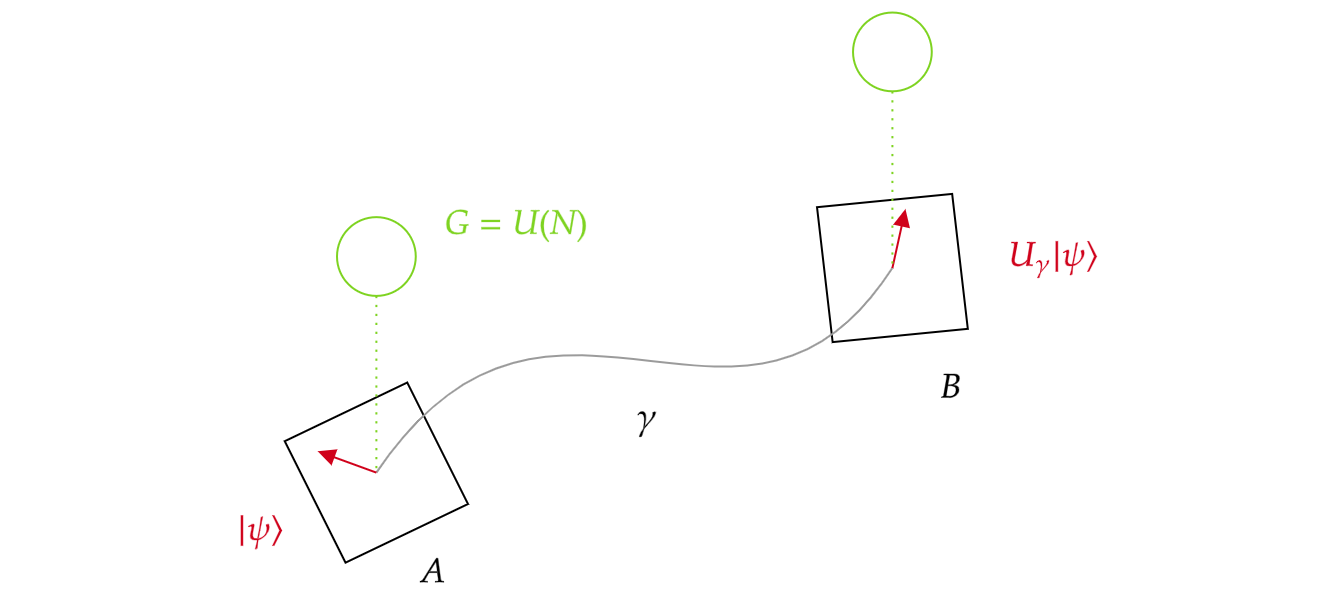}
    \caption{The path development as parallel transport in a $G=U(N)$ bundle along a Wilson line.}
    \label{fig:wilsonline}
\end{figure}

\begin{remark}
The quantum theory we define interpolates between zero-dimensional QFT and $d$-dimensional Yang-Mills theory. The unitary development $U_{0,T}(\gamma)$ may be considered more generally as a Wilson line in Yang-Mills theory on $\mathbb{R}^d$ where $A_{\mu}$ now varies over spacetime. Instead, we simplify the setup and consider a constant off-shell gauge field on a parallel manifold averaged over a matrix model.
\end{remark}

\subsection{Path signatures}
When analysing the path-ordered exponentials appearing in~\eqref{eq:unitarydevelopment}, it will be convenient to consider power series expansions. The terms involving the path $\gamma$ will be given by its iterated integrals, the collection of which is called its \emph{path signature}.\footnote{In fact, the path signature itself fits into the geometric setup of Section~\ref{subsec:Wilsonlines} as a parallel transport on a principal $T((V))$ bundle. The corresponding Lie algebra $\mathcal{L}(V)$ has a universal property meaning its $\mathfrak{g}$-representations are determined by homomorphisms $\phi: V \to \mathfrak{g}$. Hence connections on $G$-bundles determine representations of $T((V))$ uniquely.} Arising in the work of K.T. Chen in the 1950s \cite{Chen_2,chen1957integration} and forming a central part of Lyons' rough path theory \cite{lyons1998differential}, the signature has more recently found success both as a theoretical and as a practical tool in modern machine learning. The signature forms the analogue of the collection of monomials for vector valued data: under certain conditions, linear functionals of the signature are dense in continuous functions on compact subsets of (unparameterised) path space \cite{Levin_Lyons_Ni,CT_topologies_a}. Before we give the definition of a signature we require some basic definitions.

\begin{definition}[Words]\label{defn:words}
    For $d\in \N$, let $\WW_d$ denote the set of words in the $d$-letters $\{1,\ldots,d\}$ and let $\varnothing\in\WW_d$ denote the empty word. Define $\AA_d$ to be the algebra over $\WW_d$ with base field $\C$ and multiplication defined by 
    \[
    \bm{w}\bm{u}\coloneqq w_1\cdots w_n u_1\cdots u_m,
    \]
    for $\bm{w},\bm{u}\in\WW_d$. Define also the map $\abs{\cdot}:\WW_d\to\N$ to be the length of a word.
\end{definition}

\begin{definition}[Time-Simplex]
    For any $n\in\N$, define the $n$-simplex over an interval $[a,b]$ by
    \[
    \Delta_{[a,b]}^n\coloneqq \big\{(t_1,\ldots,t_n):a\leq t_1\leq\cdots\leq t_n\leq b\big\}.
    \]
\end{definition}

Throughout, we will take $V\equiv \R^d$ equipped with the standard Euclidean inner product. 
The signature of a path $\gamma:[0,T] \to V$ is then defined as follows.
\begin{definition}[Paths and Signatures]
    Let $\XX\coloneqq C^{\text{1-var}}\big([0,T], V\big)$ be the space of continuous $1$-variation paths $\gamma:[0,T]\to\R^d$ with $\gamma_0=0$. For $\gamma\in\XX$, the signature of $\gamma$ is a function
    \[
    \SS_{\cdot,\cdot}(\gamma):\Delta_{[0,T]}^2\to T((V))\coloneqq \prod_{m=0}^\infty V^{\otimes m},
    \]
    where $\SS_{s,s}(\gamma) = (1,0,\ldots)$ for all $s\in [0,T]$ and
    \[
    \SS_{s,t}(\gamma)^m\coloneqq \int_{\Delta_{[s,t]}^m}\dif\gamma_{t_1}\otimes\cdots\otimes \dif \gamma_{t_m}\in V^{\otimes m},
    \]
    for $m\geq 1$ with $\SS_{s,t}(\gamma)^0\coloneqq 1\in\R$. For a word $\bm{w}\in\WW_d$ with $|\bm{w}|=m$, we also define the coordinate iterated integral
    \[
    \SS_{s,t}^{\bm{w}}(\gamma)\coloneqq \int_{{\Delta_{[s,t]}^m}}\dif\gamma_{t_1}^{w_1}\cdots \dif \gamma_{t_m}^{w_m}\in \R,
    \]
    where $\SS_{s,t}^{\varnothing}(\gamma)\coloneqq \SS_{s,t}(\gamma)^0=1.$ 
    We will commonly write $\SS(\gamma)\coloneqq \SS_{0,T}(\gamma)$ for the signature over the entire interval.
\end{definition}
\begin{proposition}[Chen's identities]
    We define the concatenation of two paths $\gamma,\sigma\in\XX$ by 
    \[
    (\gamma\star\sigma)_t\coloneqq \begin{cases}
        \gamma_{2t},\quad &t\in [0,T/2],\\
        \gamma_{T}+\sigma_{2(t-T/2)},\quad &t\in [T/2,T].
    \end{cases}
    \]
    Then
    \[
    \SS(\gamma\star\sigma)=\SS(\gamma)\otimes\SS(\sigma).
    \]
    Additionally, we may define the reversal of the path $\gamma$ by
    \[
    \overleftarrow{\gamma}_t\coloneqq \gamma_{T-t}-\gamma_{T}.
    \]
    Then
    \[
    \SS(\gamma\star\overleftarrow{\gamma})=\SS(\gamma)\otimes\SS(\overleftarrow{\gamma})=\mathbf{1}.
    \]
\end{proposition}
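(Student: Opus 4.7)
The plan is to prove the two identities in turn, working component-by-component in the tensor algebra, noting that the $m$-th level of $\SS(\gamma)\otimes \SS(\sigma)$ is $\sum_{k=0}^m \SS(\gamma)^k \otimes \SS(\sigma)^{m-k}$. It therefore suffices to verify each identity at an arbitrary fixed degree $m\geq 1$.

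For the concatenation identity, I would decompose the time-simplex $\Delta_{[0,T]}^m$ (up to a boundary of measure zero) according to how many of the $t_i$ fall in each half-interval:
\[
\Delta_{[0,T]}^m \;=\; \bigsqcup_{k=0}^m \Delta_{[0,T/2]}^k \,\times\, \Delta_{[T/2,T]}^{m-k}.
\]
By definition of the concatenation, $d(\gamma\star\sigma)_t$ is a time-rescaled copy of $d\gamma$ on $[0,T/2]$ and of $d\sigma$ on $[T/2,T]$. The changes of variable $s=2t$ on the first factor and $s=2t-T$ on the second therefore transform the piece indexed by $k$ precisely into $\SS_{0,T}(\gamma)^k\otimes\SS_{0,T}(\sigma)^{m-k}$, and summing over $k$ yields the $m$-th level of $\SS(\gamma)\otimes\SS(\sigma)$.

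For the reversal identity, applying Part~1 with $\sigma=\overleftarrow{\gamma}$ immediately gives $\SS(\gamma\star\overleftarrow{\gamma})=\SS(\gamma)\otimes\SS(\overleftarrow{\gamma})$, so it remains to show this equals $\mathbf{1}$. Substituting $u_i = T - t_{m+1-i}$ in the iterated integral for $\SS(\overleftarrow{\gamma})^m$ (and using $d\overleftarrow{\gamma}_u = -d\gamma_{T-u}$) yields the antipode-type formula
\[
\SS(\overleftarrow{\gamma})^m \;=\; (-1)^m\,\rho\big(\SS(\gamma)^m\big),
\]
where $\rho:V^{\otimes m}\to V^{\otimes m}$ is the linear map reversing the order of tensor factors. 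The claim then reduces to the purely algebraic identity
\[
\sum_{k=0}^m(-1)^{m-k}\,\SS(\gamma)^k\otimes\rho\big(\SS(\gamma)^{m-k}\big)\;=\;0 \qquad\text{for } m\geq 1.
\]

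The main obstacle is this last identity. The cleanest route is to recognise $(-1)^{|\cdot|}\rho$ as the antipode in the concatenation–deconcatenation Hopf algebra on $T((V))$: since $\SS(\gamma)$ is group-like, convolving it with its antipode returns the counit, which vanishes in positive degree. A more elementary alternative is induction on $m$: the cases $m=1,2$ reduce to the identities $\gamma_T-\gamma_T=0$ and $\SS(\gamma)^2+\rho\SS(\gamma)^2=\gamma_T^{\otimes 2}$ (the latter because the two simplex integrals cover $[0,T]^2$), and the inductive step writes each summand as an integral over a product of cells of $[0,T]^m$ indexed by a split of $\{1,\dots,m\}$, then uses a shuffle-of-orderings argument to see the signed contributions cancel.
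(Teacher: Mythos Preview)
The paper states Chen's identities as a classical proposition without proof, so there is no ``paper's own proof'' to compare against. Your Part~1 (the simplex decomposition for the concatenation identity) is the standard argument and is correct.

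Part~2 contains a genuine gap. First, there is no ``concatenation--deconcatenation Hopf algebra'': concatenation as product and deconcatenation as coproduct do \emph{not} satisfy the bialgebra compatibility (check $\Delta_{\mathrm{dec}}(ab)\neq\Delta_{\mathrm{dec}}(a)\Delta_{\mathrm{dec}}(b)$ already on two letters). The Hopf structure in which $(-1)^{|\cdot|}\rho$ is the antipode is concatenation product with \emph{shuffle} coproduct; there the antipode argument works, but only once you know $\SS(\gamma)$ is group-like, i.e.\ that $\log\SS(\gamma)$ is a Lie series. That is a non-trivial fact in its own right and is at least as deep as what you are trying to prove, so invoking it is close to circular. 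Your inductive alternative is gestured at but not carried out; the $m=2$ case you cite is fine, but ``a shuffle-of-orderings argument'' is not a proof.

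Two clean ways to close the gap. \emph{Direct telescoping:} for a fixed word $i_1\cdots i_m$, set $A_k(T)=\SS^{i_1\cdots i_k}_{0,T}(\gamma)$, $B_k(T)=\SS^{i_m\cdots i_{k+1}}_{0,T}(\gamma)$ and $f_k=A_kB_k$. Then $f_k'=g_k+g_{k+1}$ with $g_k:=A_{k-1}B_k\dot\gamma^{i_k}$, so $\sum_{k=0}^m(-1)^kf_k$ has vanishing derivative and vanishes at $T=0$, giving the identity. \emph{ODE uniqueness:} on the second half of $[0,T]$ the path $\eta=\gamma\star\overleftarrow{\gamma}$ retraces $\gamma$ backwards, so $\SS_{0,\cdot}(\eta)$ on $[T/2,T]$ solves the same tensor ODE $\dif Y=Y\otimes\dif\gamma$ run in reverse from the value $\SS_{0,T}(\gamma)$; uniqueness forces $\SS_{0,T}(\eta)=\mathbf{1}$. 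Either argument is elementary and avoids importing Hopf-algebraic machinery.
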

Chen's identities imply that the space of signatures $\SS(\XX)\subset T((V))$ forms a group with multiplication given by path concatenation and inverse given by path reversal. Using the signature, we may expand~\eqref{eq:unitarydevelopment} as
\begin{equation}
    U_{0,T}(\gamma) = \textrm{P}\exp\left( \sum_{i=1}^d\int_{0}^T A_i\dif\gamma_u^i\right)=\sum_{\bm{w}\in\WW_d}\mathrm{i}^{|\bm{w}|}A_{\bm{w}}\SS_{a,b}^{\bm{w}}(\gamma),
\end{equation}
where $\mathrm{i}$ denotes the imaginary unit. In this work, particularly later in the context of kernels, we are interested in computing the expectation value of this quantity over the matrix models in the large $N$ limit discussed in the previous subsection. Provided the quantities exist, we introduce the notations

\begin{align}
    \langle \gamma_{0,T}\rangle_{\mu^N} &\coloneqq \sum_{\bm{w}\in\WW_d}i^{|\bm{w}|} \mathbb{E}_{\mu^N}\big[\tr{A_{\bm{w}}}\big]\SS_{a,b}^{\bm{w}}(\gamma),\label{eq:tracedevelopment}\\
    \langle \gamma_{0,T}\rangle_{\mu^\infty} &\coloneqq \sum_{\bm{w}\in\WW_d}i^{|\bm{w}|} \lim_{N\to\infty}\mathbb{E}_{\mu^N}\big[\tr{A_{\bm{w}}}\big]\SS_{a,b}^{\bm{w}}(\gamma).\label{eq:tracepathdevelopment}
\end{align}
Finally, we note that in the case $d=1$ when $\gamma_t = t$ is a straight line, the signature becomes $\SS_{0,T}(\gamma)^n = T^n/n!$. The right-hand side of~\eqref{eq:tracedevelopment} then becomes the matrix model moment generating function of trace correlators discussed in the previous section
\begin{equation*}
    \langle 1_{0,T} \rangle_{\mu^N} = \sum_{n=0}^{\infty}\langle\tr{A^n}\rangle_{\mu^N} \frac{T^n}{n!} = W_N(T).
\end{equation*}
The present article may then be considered a generalisation of traditional Physics matrix models to the case of~$d$ multi matrices and the observable is upgraded to a Wilson line.

\subsection{Signature kernels}\label{subsec:sigkernels}
We now discuss kernels on path space which are important from an applications perspective. Typically, one considers a feature map $\varphi: \mathcal{X} \to \mathcal{H}$ that embeds the data $\mathcal{X}$ into a typically high-dimensional (possibly infinite) Hilbert space $\mathcal{H}$ where linear methods may be applied. A kernel $k:\mathcal{X} \times \mathcal{X} \to \mathbb{R}$ is simply the inner product of the feature map, namely
\begin{equation}\label{eq: reg_kernel}
    k(x,y)\coloneqq \langle \varphi(x),\varphi(y)\rangle_{\HH}.
\end{equation}
If one can access the Gram matrix $G_{ij}=k(x_i,x_j)$ for all pairs of points in the dataset, then one can perform a variety of linear separation techniques without ever evaluating $\varphi$. Such an ability is known as a ``kernel trick''. 
Conversely, every kernel corresponds to some feature map into some Hilbert space, known as the reproducing kernel Hilbert space (RKHS). Indeed let $\HH^0=\text{span}\big\{k(x,\cdot):x\in\XX\big\}$, and for functions
\[
f=\sum_{i=1}^n\alpha_ik(x_i,\cdot)\quad\text{and}\quad g=\sum_{j=1}^mk(y_j,\cdot),
\]
define
\[
 \langle f,g\rangle_{\HH^0}\coloneqq\sum_{i=1}^n\sum_{j=1}^m\alpha_i\beta_jk(x_i,y_j),
\]
then $\HH\coloneq \overline{\HH^0}$ with respect to this inner product. This space satisfies the reproducing property
\[
\langle f, k(x,\cdot)\rangle_{\HH}=f(x),\quad\text{for all }f\in\HH\text{ and }x\in\XX,
\]
and the feature map $\varphi$ is simply given by the embedding $x\mapsto k(x,\cdot)$. By equipping (an appropriate subspace of) $T((V))$ with the standard Hilbert-Schmidt inner product, we arrive at the ordinary signature kernel defined by
\[
k_{s,t}^{\text{sig}}(\gamma,\sigma)\coloneqq \langle \SS_{0,s}(\gamma),\SS_{0,t}(\sigma)\rangle\coloneqq \sum_{n=0}^\infty \langle \SS_{0,s}(\gamma)^n,\SS_{0,t}(\sigma)^n\rangle_{V^{\otimes n}},
\]
which is of the form~\eqref{eq: reg_kernel} with feature map $\varphi(\cdot)=\SS(\cdot)$. The ordinary signature kernel comes with a kernel trick, being that $k_{s,t}^{\text{sig}}$ solves the integral equation \cite{SigPDE}
\begin{equation}\label{eq: sig_kernel}
k_{s,t}^{\text{sig}}=1+\int_0^s\int_0^tk_{u,v}^{\text{sig}}\langle \dif \gamma_u,\dif\gamma_v\rangle.
\end{equation}
A variety of algorithms \cite{lemercier_scheme,Piatti_scheme,SigPDE} have been proposed to solve~\eqref{eq: sig_kernel}, however, they all suffer from quadratic complexity in the length of the input time series. An alternative approach is to compute the inner product between low-dimensional random projections of the signature that approximately preserve their inner products. Indeed, consider the differential equation
\[
\dif Z^\gamma_t = \sum_{i=1}^dZ_t^{\gamma}A_i\dif\gamma^i_t,\quad Z_0^\gamma= I_N,
\]
where each $A_i\in\R^{N\times N}$ is a random matrix with i.i.d. $\NN\big(0, \tfrac{1}{N}\big)$ entries and $N$ is large. Then $Z^\gamma$ is  a $\text{GL}_N(\R)$-valued random variable where
\[
\langle Z_{s}^\gamma, Z_{t}^\sigma\rangle\coloneqq \frac{1}{N}\Tr{(Z_{s}^{\gamma})^\top Z_{t}^\sigma} \overset{N\to\infty}{\rightarrow}k^{\text{sig}}_{s,t}(\gamma,\sigma),
\]
see for example \cite{SKlimit,CT_free}. Solving for $Z^\gamma$ is then only linear in the length of the input time series. Supposing that $\gamma=\gamma_{1}\star\cdots\star\gamma_{n}$ is a piecewise linear paths valued in $\mathbb{R}^d$ with linear increments $\gamma_i$ for $i=1,\ldots,n$, then $Z_T^\gamma$ is given exactly by the matrix product
\[
Z_T^\gamma=\exp\Big(\sum_{i=1}^dA_i\gamma_1^i\Big)\cdots \exp\Big(\sum_{i=1}^dA_i\gamma_n^i\Big).
\]
Clearly, this exact solution becomes costly to compute when~$N$ is large. An approach to circumvent this issue might be to consider sparse approximations of the~$A_i$. However, even for sparse~$A_i$, each matrix exponential is typically dense and so the products are still an obstacle. Supposing however, that the matrices $A_i$ where instead anti-Hermitian, then each matrix exponential would be a unitary matrix with sparse representation. Such a setting naturally leads one to consider whether quantum algorithms are applicable, which is the study of \cref{subsec:quantumalg}.

In alignment with the previous discussion, in the present work we are interested in kernels on paths induced by unitary matrix path developments, that is the Wilson line observables introduced above averaged over matrix models with potential $V$. The Lie group $U_N$ has a natural unitary inner product (the Hilbert-Schmidt inner product) and we may define
\begin{equation}
    k^{\mu_V^N}(\gamma,\sigma) := \langle\tr{U_{\gamma}U_{\sigma}^{\dag}}\rangle_{\mu_V^N}\coloneqq \frac{1}{N}\langle\Tr{U_\gamma U_\sigma^{\dag}}\rangle_{\mu_V^N},
\end{equation}
where we have replaced $Z^\gamma$ with $U_\gamma$ to reflect the unitary nature of the developments. 
The particular case where~$\mu_V$ is the free Gaussian multi-matrix model, i.e. the Gaussian Unitary Ensemble (GUE), will be the focus of \cref{sec:quantumsignature} where we will discuss its interpretation as the kernel associated to a quantum feature map built from an efficient quantum circuit. Further, in the case that $\mu_V$ is a free Gaussian multi-matrix model, a governing integral for
\[
k^{\text{GUE}}(\gamma,\sigma)\coloneqq \lim_{N\to\infty}k^{\mu_V^N}(\gamma,\sigma)=\lim_{N\to\infty}\langle\tr{U_{\gamma}U_{\sigma}^{\dag}}\rangle_{\mu_V^N}
\]
was derived in \cite{CT_free}. The derivation of the limiting functional equation relied heavily on the Schwinger-Dyson equation that is asymptotically satisfied by the GUE ensemble. \cref{sec:unitarydevelopments} is devoted to finding the corresponding functional equations in the case of a small perturbation of the Gaussian law. Such perturbations asymptotically satisfy Schwinger-Dyson equations that are perturbations of the one asymptotically satisfied by the GUE ensemble. Kernels like $k^{\text{GUE}}$ are examples of kernels where the feature map is not known explicitly, and whose existence is only guaranteed through the construction of its RKHS.

\paragraph{Geometric interpretation.}
The kernel also enjoys a geometric interpretation in the context of \cref{subsec:Wilsonlines}, the setup is illustrated in \cref{fig:PT}. 
We begin with a charge vector $|\psi\rangle$ living in the associated $N$~dimensional $U_N$ bundle $\mathcal{H}$. In Euclidean space $\mathbb{R}^d$, we may identify $\mathcal{H}_{\tau(0)}$ with $\mathcal{H}_{\sigma(0)}$ and $\mathcal{H}_{\tau(T)}$ with $\mathcal{H}_{\sigma(T)}$. The kernel is then a measure of the difference between a charge vector $|\psi\rangle$ being transported along $\tau$, compared with the same vector being transported along $\sigma$. That is we may compare the angles of $U_{\tau}|\psi\rangle$ with $U_{\sigma}|\psi \rangle$ by taking the Hermitian inner product. If we average the resulting product uniformly over a basis $\{|\psi_i\rangle\}$ of $\mathcal{H}$ then we find
\begin{equation*}
    \frac{1}{N}\sum_{i=1}^N \langle \psi_i | U_{\sigma}^{\dag}U_{\tau} | \psi_i \rangle = \frac{1}{N}\tr{U_{\sigma}^{\dag} U_{\tau}},
\end{equation*}
the right-hand side is the definition of the kernel as the Hilbert-Schmidt inner product and thus we have a geometrical interpretation of the path development signature kernel as the average change in angle between charge vectors under two Wilson lines. 

\begin{figure}
    \centering
    \includegraphics[width=0.75\linewidth]{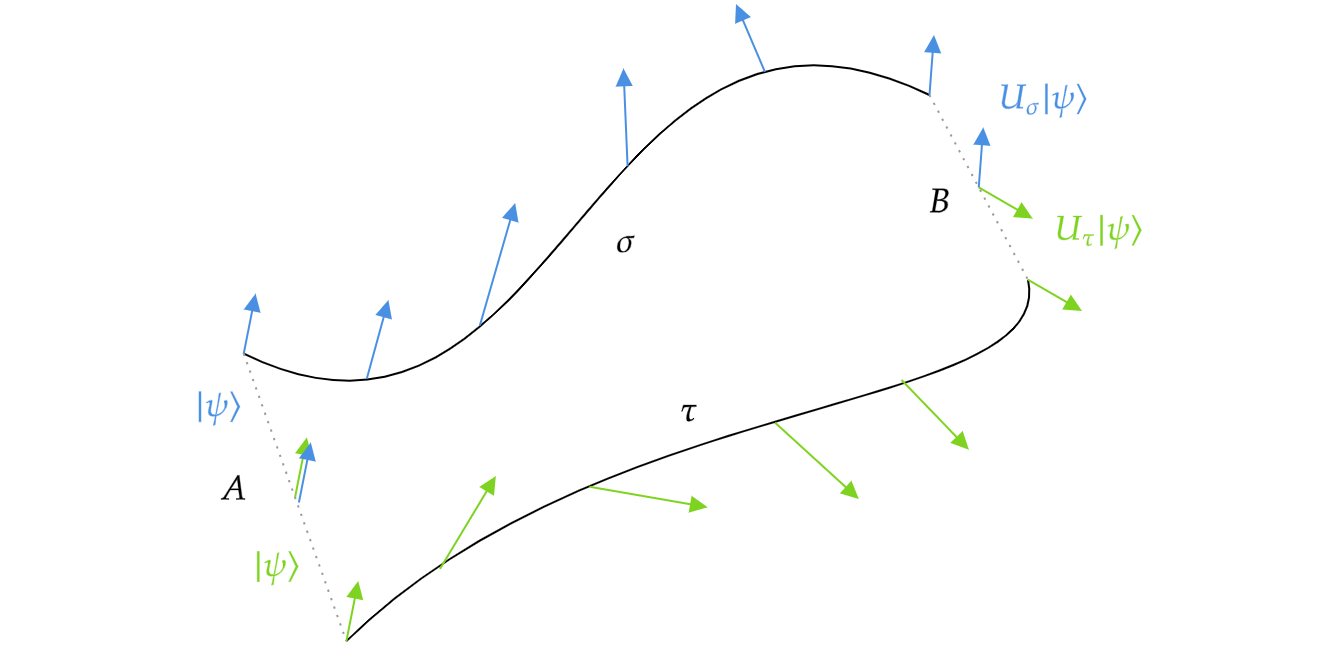}
    \caption{The kernel as parallel transport along $\tau$ compared with parallel transport along $\sigma$.}
    \label{fig:PT}
\end{figure}

\section{Universal Limit of Matrix Model Unitary Developments}\label{sec:unitarydevelopments}
This section is dedicated to deriving integro-differential equations satisfied by $\pdg{s}{t}$ for small perturbations of the Gaussian potential. We start by relating this to the quantity $k^{\mu_V}(\tau,\sigma)$. By standard properties of path developments, $U_\tau U_\sigma=U_{\tau\star\sigma}$ and $U_\sigma^{-1}=U_{\overleftarrow{\sigma}}$. Since the inverse of a unitary matrix is its conjugate transpose, it holds that
\[
k^{\mu_V}(\tau,\sigma)=\langle\tr{U_{\tau\star\overleftarrow{\sigma}}}\rangle_{\mu_V}
\]
As such, it is enough to understand the limiting functional $\pdg{s}{t}$ for a single path $\gamma$ and replace it by $\gamma=\tau\star\overleftarrow{\sigma}$ when conducting a kernel evaluation, which is what we do in this section. The case where $V$ corresponds to the Gaussian potential $\tfrac{1}{2}\sum_{i=1}^dX_i^2$ was considered in~\cite{CT_free} along with variants leading to the same limit. In the present work we extend these results to the matrix models discussed above in \cref{sec:background}. Recall that $\mathfrak{u}_N=i\mathfrak{h}_N$, where $\mathfrak{h}_N$ is the set of $N\times N$ Hermitian matrices and that for a path $\gamma\in\XX$, the unitary path development of path segment $\gamma_{s,t}$ with development map $M\in\text{Hom}(V, \mathfrak{u}_n)$ is the time $t$-solution to the $\MM_N(\C)$-valued controlled differential equation
\begin{equation}\label{eq: pcf_eq}
    \dif U_{s,u} = U_{s,u}\cdot M(\dif \gamma_u),\quad Z_{s,s} = I_N,
\end{equation}
where $\cdot$ denotes matrix multiplication. Fixing a basis for $V \cong \mathbb{R}^d$ and noting that $M(v)=i\sum_{j=1}^dA_jv^j$ for matrices $A_j\in\mathfrak{h}_N$, the differential equation reads
\begin{equation}
    \dif U_{s,u}=\sum_{j=1}^d U_{s,u}A_j\dif\gamma_u^j,
\end{equation}
whose solution is given by
\[
U_{s,t} = \textrm{P}\exp\left( i\sum_{j=1}^d\int_{s}^t A_j\dif\gamma_u^j\right)=\sum_{\bm{w}\in\WW_d}i^{|\bm{w}|}A_{\bm{w}}\SS_{s,t}^{\bm{w}}(\gamma),
\]
where $\textrm{P}\exp$ is the path-ordered exponential along the curve $\gamma$. We may randomise the solution to~\eqref{eq: pcf_eq} via the randomisation of the collection $(A_1,\ldots,A_d)\sim\mu_V^N$. In particular, our quantity of interest is $\lim_{N\to\infty}\Ex\big[\tr{U_{s,t}}\big]$.

\subsection{Non-commutative laws and Schwinger-Dyson equations}\label{sec: sd_intro}
In this section, we gather various notions both from  the world of non-commutative probability theory and from rough path theory, in particular following~\cite{Guionnet_SD} for the non-commutative definitions.
We shall use~$\overline{\ \cdot\ }$ to denote complex conjugation
\begin{definition}[$C^*$ algebra]
A $C^*$ algebra $\AA$ is a Banach algebra together with an involution $\ast:\AA\to\AA$ satisfying the following properties
\begin{enumerate}[label=\arabic*)]
    \item $(a^{\ast})^\ast=a$;
    \item $(a+b)^{\ast}=a^{\ast}+b^\ast$;
    \item $(ab)^\ast=b^\ast a^\ast $;
    \item $(\lambda a)^\ast=\bar{\lambda}a^{\ast}$ for all $\lambda\in\C$;
    \item $\norm{a^{\ast}}=\norm{a}$ and $\norm{a^{\ast}a}=\norm{a}^2$.
\end{enumerate}
\end{definition}
\begin{definition}
    Let $X_1,\ldots,X_d$ be non-commutative indeterminates. Define $\CC_d\coloneqq\C\langle X_1,\ldots,X_d\rangle$ to be the space of polynomials in the indeterminates with complex coefficients. Let $\ast$ be an involution on $\CC_d$ such that the indeterminates are self-adjoint with respect to $\ast$, i.e. $X_i^\ast=X_i$. Given a word $\bm{w}\in\WW_d$ we will write $X_{\bm{w}}\coloneqq X_{w_1}\cdots X_{w_n}$, which may be extended by linearity to all $\bm{w}\in\AA_d$ with $X_{\varnothing}=1$. This induces an algebra isomorphism $\AA_d\to\CC_d$.
\end{definition}
\begin{definition}[Non-commutative law]\label{def: non-comm-law}
    A non-commutative law $\tau\in\CC_d^*$ is a linear functional on $\Cd$ taking values in $\C$ such that
    \begin{enumerate}[label=\arabic*)]
        \item Positivity: for all $P\in\Cd$, $\tau(P^\ast)=\overline{\tau(P)}$ and $\tau(PP^\ast)\geq 0$.
        \item Mass: $\tau(1)=1$.
        \item Trace: for all $P,Q\in\Cd$, $\tau(PQ)=\tau(QP)$.
    \end{enumerate}
\end{definition}
A general non-commutative probability space is a pair $(\AA,\varphi)$ of a unital $C^*$ algebra $\AA$ equipped with a sate $\varphi$. However, given $d$ non-commuting random variables $a_1,\dots,a_d$ living in $(\AA,\varphi)$, there exists a unique non-commutative law $\tau$ on the free $*$-algebra $\CC_d$ such that $\tau\big(P(X_1,\dots,X_d)\big)=\varphi\big(P(a_1,\dots,a_d)\big)$ for any polynomial $P$. As such, \cref{def: non-comm-law} is enough for our purposes.
\begin{definition}
    A sequence $(\tau_n)_{n=1}^\infty$ of  non-commutative laws is said to converge weakly to $\tau\in\CC_d$ if
    \[
    \lim_{n\to\infty}\tau_n(P)=\tau(P),\text{ for all }P\in\CC_d.
    \]
\end{definition}
It can readily be checked that the properties defining a non-commutative law are closed under weak convergence, and so any weak limit of non-commutative laws is itself a non-commutative law. It turns out that several important non-commutative laws are characterised by various integration by parts formulae. To motivative this perspective, we mention the famous example from the commutative world: the Gaussian integration by parts formula. Consider a vector $(X_1,\ldots,X_d)\sim \NN(0,I_d)$ of commutative Gaussian random variables, and a sufficiently smooth function $f:\R^d\to\R$. 
Then it is known that
\begin{equation}\label{eq: gaussian_ibp}
\Ex[X_if(X_1,\ldots,X_d)]=\Ex[\partial_if(X_1,\ldots,X_d)].
\end{equation}
Now, taking $f$ to be some (commutative) monomial $X_i^nX_{\bm{w}}$, where $\bm{w}$ is devoid of the letter $i$, then 
\begin{equation}\label{eq: ibp_to_wicks}
\Ex[X_i^{n+1}X_{\bm{w}}]=n\Ex[X_i^{n-1}X_{\bm{w}}].
\end{equation}
This inductive relation is exactly the defining property of Wick's formula:
\[
\Ex[X_{\bm{w}}]=\sum_{\pi\in \PP_2^{|\bm{w}|}} \prod_{{(i,j)\in\pi}}\Ex[X_{w_i}X_{w_j}],
\]
where $\PP_2^{n}$ denotes the set of pair partitions on $n$ letters. Indeed, the factor of $n$ in~\eqref{eq: ibp_to_wicks} is exactly the number of ways of pairing the new $X_i$ with one $X_i$ from $X_i^n$. Thus, combinatorial characterisation of the Gaussian law, Wick's formula, and the integration by parts characterisation,~\eqref{eq: gaussian_ibp}, are equivalent. To extend this observation to non-commutative laws we first need suitable notions of non-commutative derivatives. In the following, $\CC_d\otimes\CC_d$ is the usual tensor product of vector spaces.

\begin{definition}[Free Difference Quotient]
    For $d\in\N$ and $1\leq i \leq d$ we define the free difference quotient $\del_i:\CC_d\to\CC_d\otimes \CC_d$ by
    \[
    \del_iPQ \coloneqq \del_i P \times \big(1 \otimes Q\big)+\big(P\otimes1\big)\times \del_iQ,
    \]
    with $P\otimes Q\times R\otimes S\coloneqq PR\otimes QS$ and $\del_{i}X_j\coloneqq(1\otimes 1)\delta_{ij}$. In particular, for a word $\bm{w}\in \WW_d$ we have
    \begin{equation}\label{eq: free_diff_monomial}
    \del_i X_{\bm{w}}=\sum_{\bm{w}=\bm{u}i\bm{v}}X_{\bm{u}}\otimes X_{\bm{v}}.
    \end{equation}
\end{definition}
\begin{definition}[Cyclic Derivative]\label{def: cyclic}
    Define $\mathfrak{m}:\CC_d\otimes\CC_d\to\CC_d$ by $\mathfrak{m}(P\otimes Q)\coloneqq QP$, then the cyclic derivative $D_i : \CC_d \to \CC_d$ is given by
    \[
    D_iP\coloneqq (\mathfrak{m}\circ \del_i)P.
    \]
    In particular, for a word $\bm{w}\in \WW_d$, we have
    \[
    D_iX_{\bm{w}}=\sum_{\bm{w}=\bm{u}i\bm{v}}X_{\bm{v}}X_{\bm{u}}.
    \]
\end{definition}
Via the bijection $\CC_d\to\AA_d$, the operators $\del_i$ and $D_i$ induce well-defined maps on $\AA_d\to\AA_d\otimes\AA_d$ and $\AA_d\to\AA_d$ respectively. As such, we will often write $\del_iX_{\bm{w}}=X_{\del_i\bm{w}}$ or $D_iX_{\bm{w}}=X_{D_i\bm{w}}$. We are now ready to give the non-commutative version of an integration by parts rule.
\begin{definition}[Schwinger-Dyson Equations]
    Let $\tau:\CC_d\to\C$ be a non-commutative law. We say that $\tau$ satisfies a Schwinger-Dyson equation with conjugates $P_i\in \CC_d$ if, for all $1\leq i\leq d$,
    \begin{equation}\label{eq: SD}
        \tau\otimes\tau\big(\del_i P\big)=\tau\big(PP_i\big),
    \end{equation}
    where $\tau\otimes\tau(P\otimes Q)\coloneqq \tau(P)\tau(Q)$.
\end{definition}
Perhaps the most well-known and important example of a Schwinger-Dyson equation is the one satisfied by freely independent semicircular random variables. Free independence is a non-commutative notion of independence, where joint moments are determined by non-crossing partitions rather than the usual product rule for classically independent random variables. To illustrate this, suppose that $X_1$ and $X_2$ are freely independent with mean zero and law $\tau$, then $\tau(X_1X_2X_1X_2)=0$, whereas in the commutative world with ordinary independence, it would be equal to $\Ex[X_1^2]\Ex[X_2^2]\neq 0$ as long as $X_1$ and $X_2$ are non-trivial.
\begin{example}[Semicircular Variables]\label{ex: semicircular}
    The law~$\tau$ of $d$-free semicircular random variables satisfies the Schwinger-Dyson equation
    \[
     \tau\otimes\tau\big(\del_i P\big)=\tau\big(PX_i\big).
    \]
    And so for a word $\bm{w}\in\WW_d$, 
    \[
    \tau(X_{\bm{w}i})=\sum_{\bm{w}=\bm{u}i\bm{v}}X_{\bm{u}}X_{\bm{v}}.
    \]
    Iterating this expression gives the analogue of Wick's formula for freely independent semicircular random variables:
    \[
    \tau(X_{\bm{w}})=\sum_{\pi\in \mathrm{NC}_2^{|\bm{w}|}}\prod_{(i,j)\in\pi}\tau(X_{w_i}X_{w_j}),
    \]
    where $\mathrm{NC}_2^n$ is the set of \emph{non-crossing} pair partitions of $n$ letters.
\end{example}
An important class of Schwinger-Dyson equations, including \cref{ex: semicircular}, are those where the conjugate polynomials $P_i$ are given by the cyclic derivatives of some polynomial $V\in\CC_d$. In this case, we have
\begin{equation}\label{eq: SD-V}
    \tau\otimes\tau(\del_iP)=\tau(PD_iV).
\end{equation}
The non-commutative laws satisfying the Schwinger-Dyson equation of type~\eqref{eq: SD-V} arise as the asymptotic limit of the non-commutative laws of interacting random matrix ensembles. For a self-adjoint polynomial $V\in\CC_d$, consider the law $\mu_V^N$ on $N\times N$ Hermitian matrices given by
\[
\dif\mu_V^N(X_1^N,\ldots,X_d^N)=\frac{1}{Z_V^N}\exp\Big\{-N\Tr{V(X_1^N,\ldots,X_d^N)}\Big\}\dif X^N,
\]
where
\[
\dif X^N=\prod_{1\leq j\leq k\leq N}\dif\mathfrak{R}\big(X^N(jk)\big)\prod_{1\leq j<k\leq N}\dif\mathfrak{I}\big(X^N(jk)\big),
\]
and $Z_V^N$ is a normalising constant. In order to ensure the partition function $Z_V^N$ is finite for all $N$, some restrictions on $V$ are required. Assume that $V$ takes the form $ V=\frac{1}{2}\sum_{i=1}^dX_i^2+ W$ for a self-adjoint polynomial $W=\sum_{i=1}^mg_iX_{\bm{w}_i}$. We say that $W$ is $c$-convex if the map
\begin{equation}\label{eq: convex_V}
    \big(X_\nu^N(jk)\big)_{\substack{1\leq j\leq k\leq N \\ 1\leq \nu\leq d}}\mapsto \Tr{W(X^N)} + \frac{1-c}{2}\sum_{i=1}^d\Tr{X_i^2}
\end{equation}
is convex for all $N$ when viewed as a function of $\big(\R^{N^2}\big)^d$. Denote by $U_W$ the set of $\mathbf{g}=(g_1,\ldots,g_d)\subseteq \C^d$ for which~\eqref{eq: convex_V} holds. For $\varepsilon>0$ we also write $B_{\varepsilon}$ for the set of coefficients $\mathbf{g}$ for which $\norm{\mathbf{g}}_{\infty}\leq\varepsilon$. Recalling that $\tr{\cdot}\coloneqq\frac{1}{N}\Tr{\cdot}$ defines the normalised trace, we summarise now a collection of results from \cite{comb_matrix_models,matrix_models}, see also \cite[Chapter 7]{Guionnet_SD}.
\begin{theorem}[\cite{comb_matrix_models,matrix_models}]\label{thm: S-D_uniqueness}
    For every self-adjoint $c$-convex $W=\sum_{i=1}^mg_iX_{\bm{w}_i}$ and $R>2$, there exists $\varepsilon>0$ such that for all $\mathbf{g}\in U_W\cap B_\varepsilon$
    \[
    \lim_{N\to\infty}\Ex_{\mu_V^N}\big[\tr{X_{\bm{w}}^N}\big]=\tau_V(X_{\bm{w}}),
    \]
    for every $\bm{w}\in\WW_d$ where $\tau_V$ solves the Schwinger-Dyson equation~\eqref{eq: SD-V}. 
    Moreover, there is exactly one solution $\tau_V\in\CC_d^*$ to~\eqref{eq: SD-V} and it satisfies
    \[
    |\tau_V(X_{\bm{w}})|\leq R^{|\bm{w}|}\text{ for all }\bm{w}\in\WW_d.
    \]
\end{theorem}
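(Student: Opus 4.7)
The plan is to prove the theorem in two halves: first the purely algebraic half (uniqueness and the a priori moment bound for any solution of the Schwinger--Dyson equation), then the analytic half (existence of the limit and identification with the algebraic fixed point). Both halves hinge on treating the perturbation $W$ as small in the quadratic-plus-noise potential $V = \tfrac{1}{2}\sum_i X_i^2 + W$.

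For uniqueness and the bound, I would rearrange the Schwinger--Dyson equation \eqref{eq: SD-V} using $D_i V = X_i + D_i W$ to isolate the top-degree moment. For any $P\in\CC_d$,
\begin{equation*}
    \tau(PX_i) = \tau\otimes\tau(\partial_i P) - \tau(P\,D_i W).
\end{equation*}
Taking $P=X_{\bm{u}}$ with $\bm{w}=\bm{u}i$, the first term on the right only involves moments of words of length $\leq|\bm{w}|-1$, while $\tau(P D_i W)$ mixes in moments of length $|\bm{u}|+\deg(W)-1$. The key idea is to view the right-hand side as defining an operator $\Phi_{\mathbf{g}}$ on the Banach space $\mathcal B_R = \{\tau\in\CC_d^\ast : |\tau(X_{\bm{w}})|\leq R^{|\bm{w}|}\text{ for all }\bm{w}\}$, equipped with the norm $\|\tau\|_R = \sup_{\bm{w}}R^{-|\bm{w}|}|\tau(X_{\bm{w}})|$. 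The quadratic term contributes a part of $\Phi_{\mathbf{g}}$ that strictly decreases word length (hence is a contraction by a factor $1/R$ times a constant), while the $W$-term carries an explicit factor of $\|\mathbf{g}\|_\infty$. Choosing $\varepsilon$ small depending on $R>2$ and on $\deg W$ makes $\Phi_{\mathbf{g}}$ a strict contraction on $\mathcal B_R$ whenever $\mathbf{g}\in U_W\cap B_\varepsilon$, yielding simultaneously existence of a unique algebraic solution $\tau_V$ and the bound $|\tau_V(X_{\bm{w}})|\leq R^{|\bm{w}|}$. The positivity, mass and trace conditions are preserved by $\Phi_{\mathbf{g}}$, so the fixed point is a bona fide non-commutative law.

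For convergence of the random matrix moments, I would establish a \textbf{finite-$N$ Schwinger--Dyson equation} by integration by parts against the density $\exp(-N\Tr{V})$. This is the matrix analogue of the Gaussian formula \eqref{eq: gaussian_ibp} and yields, for every $P\in\CC_d$,
\begin{equation*}
    \Ex_{\mu_V^N}\bigl[(\tr\otimes\tr)(\partial_i P)(X^N)\bigr] = \Ex_{\mu_V^N}\bigl[\tr{P(X^N)(X_i^N + D_i W(X^N))}\bigr],
\end{equation*}
with no $N^{-2}$ correction (the identity is exact). To close this at the level of the non-commutative law one needs the crucial concentration estimate $\operatorname{Var}_{\mu_V^N}(\tr{Q(X^N)}) = O(N^{-2})$ uniformly in $N$, so that the tensor-product expectation on the left factorises in the limit. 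This is exactly where the $c$-convexity assumption~\eqref{eq: convex_V} enters: the potential $x\mapsto N\Tr{V(x)} + \tfrac{N(1-c)}{2}\sum_i\Tr{x_i^2}$ being convex makes $\mu_V^N$ satisfy a Bakry--Émery log-Sobolev inequality with constant uniform in $N$, from which Herbst's argument gives Gaussian concentration of Lipschitz functions of $X^N$, including normalised traces of polynomials (after bounding operator norms via a separate moment argument). Combining tightness with the finite-$N$ SDE shows that any subsequential weak limit $\tau$ of $\Ex_{\mu_V^N}[\tr{X_{\bm{w}}^N}]$ satisfies the algebraic Schwinger--Dyson equation and lies in $\mathcal B_R$; uniqueness from the first half then identifies $\tau=\tau_V$ and upgrades subsequential convergence to full convergence.

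The main obstacle I expect is the analytic input: establishing the $N$-uniform log-Sobolev (or equivalent concentration) constant for $\mu_V^N$ on the full matrix space, together with uniform-in-$N$ control of the operator norm of $X^N$, so that unbounded polynomials $Q$ can be inserted in the concentration bound. This is delicate because $D_i W$ has degree $\deg W-1$ and the bound must survive multiplication by such a polynomial in the limit $N\to\infty$. The convexity condition on $W$ via the admissible set $U_W$ is specifically tailored to make this work, and is also what enforces the smallness of $\varepsilon$ in the contraction argument above; the two halves of the proof must be tuned with the same $\varepsilon$ so that both the algebraic fixed point exists and the random matrix model concentrates onto it.
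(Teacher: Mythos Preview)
The paper does not prove this theorem: it is stated as a summary of results from the cited references \cite{comb_matrix_models,matrix_models} (see the sentence immediately preceding the statement, ``we summarise now a collection of results from \cite{comb_matrix_models,matrix_models}, see also \cite[Chapter 7]{Guionnet_SD}''), and is used as a black box in the rest of Section~\ref{sec:unitarydevelopments}. There is therefore no in-paper proof to compare against.

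That said, your outline is essentially the Guionnet--Maurel-Segala argument from those references, and it is structurally sound: the contraction-mapping argument on the weighted ball $\mathcal{B}_R$ for uniqueness and the a priori bound, the exact finite-$N$ Schwinger--Dyson identity from integration by parts, and the Bakry--\'Emery log-Sobolev inequality (supplied by the $c$-convexity hypothesis) to obtain $O(N^{-2})$ variance and hence factorisation of the bilinear term in the limit. One small point to tighten: the claim that ``positivity, mass and trace conditions are preserved by $\Phi_{\mathbf{g}}$'' is not quite how the argument runs in the literature. The contraction is carried out in the larger space of linear functionals in $\mathcal{B}_R$, not in the set of tracial states, and the fact that the unique fixed point is a genuine non-commutative law is obtained a posteriori by identifying it with the limit of the matrix moments $\Ex_{\mu_V^N}[\tr{\,\cdot\,}]$, each of which \emph{is} a tracial state. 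Your concluding paragraph already contains this identification, so the logic is there; just be careful not to assert that $\Phi_{\mathbf{g}}$ preserves positivity directly.
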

Note that when $W=0$, the distribution $\mu_N^V$ is the Gaussian Unitary Ensemble (GUE) and the limiting law $\tau_V$ is the law of $d$ free semicircular random variables. From now on, we will only consider those $V$ satisfying the assumptions of \cref{thm: S-D_uniqueness}. In the following, let $\lambda_{\max}^N(A)$ be the maximum of the absolute values of the eigenvalues of an $N\times N$ matrix $A$, and let $\lambda_{\max}^N(\mathbf{A})$ be the maximum of $\big(\lambda_{\max}^N(A_i)\big)_{i=1}^d$. We will need the following lemma to be able to exchange certain limits and expectations.
\begin{lemma}[Sub-Gaussian tail of largest eigenvalue]\label{lem: sub_gaussian_tail}
    Let $V$ be a potential satisfying the conditions of $\cref{thm: S-D_uniqueness}$, and for each $N$ suppose that $(A_i)_{i=1}^d\sim \mu_V^N$. 
    Then there exist $\alpha>0$ and $t_0<\infty$ such that, for every $t\geq t_0$ and $N$,
    \[
    \P_{\mu_V^N}\big[\lambda_{\max}^N(\mathbf{A})\geq t\big]\leq e^{-\alpha N t^2}.
    \]
\end{lemma}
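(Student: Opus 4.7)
The plan is to derive a log-Sobolev inequality for $\mu_V^N$ from the $c$-convexity assumption, apply Herbst's argument to obtain Gaussian concentration for the $1$-Lipschitz spectral radius, and then absorb the finite expectation $\mathbb E[\lambda_{\max}^N(\mathbf A)]$ into the exponential tail by choosing $t_0$ large enough.

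First I would view $\mu_V^N$ as a log-concave probability measure on $(\mathbb R^{N^2})^d$ via the standard coordinates $\bigl(X_\nu^N(jk)\bigr)_{j\le k,\,\nu\le d}$. Writing $V = \tfrac12\sum_{i=1}^d X_i^2 + W$, the $c$-convexity of $W$, expressed in \eqref{eq: convex_V}, is exactly the statement that the Hessian of the map $(X_\nu^N(jk))\mapsto \Tr{W(X^N)} + \tfrac{1-c}{2}\sum_i\Tr{X_i^2}$ is positive semidefinite. Adding the remaining quadratic part shows that the Hessian of $(X_\nu^N(jk))\mapsto N\Tr{V(X^N)}$ dominates $cN\cdot \mathrm{Id}$ as a quadratic form. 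By the Bakry--Émery criterion the measure $\mu_V^N$ therefore satisfies a log-Sobolev inequality with constant $(cN)^{-1}$, uniformly in $N$.

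Next, Herbst's argument upgrades LSI to sub-Gaussian concentration for Lipschitz functions: for every $f:(\mathbb R^{N^2})^d\to\mathbb R$ that is $L$-Lipschitz with respect to the Hilbert--Schmidt norm,
\[
\mathbb P_{\mu_V^N}\!\left[\,f - \mathbb E_{\mu_V^N}[f] \ge s\,\right] \le \exp\!\left(-\tfrac{cN s^2}{2L^2}\right),\qquad s>0.
\]
Applied to $f_i(X)=\lambda_{\max}^N(X_i)$, which is $1$-Lipschitz in operator norm and hence in $\|\cdot\|_{HS}$, and followed by a union bound over $i=1,\ldots,d$, this yields $\mathbb P_{\mu_V^N}\!\left[\lambda_{\max}^N(\mathbf A)\ge \mathbb E_{\mu_V^N}[\lambda_{\max}^N(\mathbf A)] + s\right]\le d\,e^{-cNs^2/2}$.

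Finally I would control $\mathbb E_{\mu_V^N}[\lambda_{\max}^N(\mathbf A)]$ uniformly in $N$. Using $\lambda_{\max}^N(A_i)^{2k}\le \Tr{A_i^{2k}}=N\tr{A_i^{2k}}$ together with the moment bound $|\tau_V(X_{\bm w})|\le R^{|\bm w|}$ from \Cref{thm: S-D_uniqueness}, the convergence of moments gives $\mathbb E_{\mu_V^N}[\tr{A_i^{2k}}]\le 2R^{2k}$ for $N$ large (and for small $N$ the inequality can be obtained with a larger constant via direct estimates on the Gaussian-dominant density). Combined with the already established concentration, a self-improving argument shows $\mathbb E[\lambda_{\max}^N(\mathbf A)^{2k}]^{1/(2k)}\le (2dNR^{2k})^{1/(2k)}$; choosing $k=\lceil\log N\rceil$ yields a constant bound $\mathbb E_{\mu_V^N}[\lambda_{\max}^N(\mathbf A)]\le C_0$ independent of $N$. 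Setting $t_0:=2C_0$ and applying the deviation bound to $s=t/2$ for $t\ge t_0$ gives
\[
\mathbb P_{\mu_V^N}\!\left[\lambda_{\max}^N(\mathbf A)\ge t\right]\le d\,\exp\!\left(-\tfrac{cN t^2}{8}\right),
\]
which is of the required form after absorbing the factor of $d$ into a slight decrease of $\alpha$ (taking, say, $\alpha=c/16$ and enlarging $t_0$).

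The main obstacle is step four: although the asymptotic moment control from \Cref{thm: S-D_uniqueness} furnishes a clean limit, transporting it to a non-asymptotic bound on $\mathbb E_{\mu_V^N}[\lambda_{\max}^N(\mathbf A)]$ uniformly in $N$ requires care. My approach is to bootstrap the concentration inequality itself, but an alternative would be to argue directly via the Gaussian-dominance $e^{-N\Tr V(X)}\le e^{-\tfrac{Nc}{2}\sum\Tr{X_i^2}+\text{linear}}$ coming from $c$-convexity, reducing the moment estimate to a known bound for the GUE plus a deterministic perturbation.
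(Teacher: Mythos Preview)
Your strategy via Bakry--\'Emery $\Rightarrow$ log-Sobolev $\Rightarrow$ Herbst concentration is a legitimate route, different from the paper's. The paper instead uses the Brascamp--Lieb inequality for convex functions: since $e^{sN\lambda_{\max}^N(\cdot)^2}$ is convex and the potential has Hessian bounded below by $cN\cdot\mathrm{Id}$, one gets
\[
\int e^{sN\lambda_{\max}^N(\mathbf{A}-\mathbf{M})^2}\,\dif\mu_V^N \;\le\; \int e^{sN\lambda_{\max}^N(\mathbf{A})^2}\,\dif\mu_c^N,
\]
where $\mathbf{M}=\mathbb{E}_{\mu_V^N}[\mathbf{A}]$ and $\mu_c^N$ is GUE with variance $c^{-1}N^{-1}$. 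They then establish the GUE tail directly (partition-function ratio estimate $Z^{N-1}/Z^N\le e^{CN}$ plus a bare-hands integration), and close with an equivalence between the MGF bound $\mathbb{E}[e^{sN\lambda_{\max}^2}]\le C^N$ and the stated tail. The key advantage of this route is that the only ``mean'' quantity they must control is $\lambda_{\max}^N(\mathbf{M})=\lambda_{\max}^N(\mathbb{E}[\mathbf{A}])$, the operator norm of a single deterministic matrix, which is bounded uniformly in $N$ by an external lemma of Guionnet--Maurel-Segala.

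The obstacle you flag is genuine and your proposed patch does not close it. Your $k=\lceil\log N\rceil$ moment argument requires $\mathbb{E}_{\mu_V^N}[\tr{A_i^{2k}}]\le 2R^{2k}$ for $k$ growing with $N$, but \Cref{thm: S-D_uniqueness} only gives \emph{pointwise} convergence $\mathbb{E}_{\mu_V^N}[\tr{A_i^{2k}}]\to\tau_V(X_i^{2k})$ for each fixed $k$; the threshold in $N$ may deteriorate as $k$ grows, so the bound at $k=\log N$ is not justified. Your fallback ``Gaussian dominance $e^{-N\Tr V}\le e^{-\frac{Nc}{2}\sum\Tr{X_i^2}+\text{linear}}$'' is morally the right idea, but $c$-convexity gives a Hessian lower bound, not a pointwise density comparison with controlled normalisation; making this precise is exactly what Brascamp--Lieb does. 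In short: your LSI/Herbst machinery correctly delivers concentration around $\mathbb{E}[\lambda_{\max}^N(\mathbf{A})]$, but to bound that expectation uniformly in $N$ you will need an input of the same strength as Brascamp--Lieb (or cite the uniform moment bounds from \cite{matrix_models} directly), at which point the paper's approach is the shorter path.
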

\begin{proof}
    See \cref{app: proofs}.
\end{proof}
\subsection{Path loop equations}
We now use the Schwinger-Dyson equations to derive governing equations for the trace of the path development in the large~$N$ limit.
Just as we required a version of the derivative for functions in non-commutative indeterminates, 
we will also need the notion of derivatives of functionals on path space. There are many such choices, but we will use the Dupire or vertical derivative~\cite{DupireDerivative}. 
This notion of derivative has been popular in deriving Taylor-like expansions of functions on path space.
\begin{definition}[Vertical derivative]
    Let $f:\XX\to\R$ be a suitably regular functional, $1\leq i\leq d$, and $h^i\in \XX$ the path defined by $t\mapsto \tfrac{ht}{T} e^i$ with $\{e^i\}_{i=1}^d$ basis vectors in $\mathbb{R}^d$. The vertical derivative of $f$ at a path $\gamma$ along the direction $i$ is defined by
    \[
    \nabla^i f(\gamma)\coloneqq \lim_{h\to 0}\frac{f\big(\gamma\star h^i\big)-f(\gamma)}{h}.
    \]
    For a word $\bm{w} = w_1\cdots w_k \in\WW_d$ of length $k$, define
    \[
    \nabla^{\bm{w}}\coloneqq \nabla^{w_1}\cdots\nabla^{w_k}f(\gamma).
    \]
\end{definition}
\begin{proposition}[Proposition~3.8 \cite{DupireDerivative}]
    For any path $\gamma\in\XX$ and $\bm{v},\bm{w}\in\WW_d$, the vertical derivative satisfies
    \[
    \nabla^{\bm{v}}\SS_{s,t}^{\bm{w}}(\gamma)=
    \begin{cases}
        \SS_{s,t}^{\bm{u}}(\gamma),\quad &\bm{w}=\bm{u}\bm{v},\\
        0,\quad &\text{otherwise.}
    \end{cases}
    \]
\end{proposition}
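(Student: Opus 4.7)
The plan is to exploit the fact that the perturbation $h^i$ is a straight line, so its signature has a fully explicit form, and to combine this with Chen's identity to reduce the claim to an algebraic manipulation. Applying Chen's identity to the concatenation $\gamma \star h^i$ and projecting onto the word $\bm{w}$ gives
\[
\SS_{s,t}^{\bm{w}}(\gamma \star h^i) \;=\; \sum_{\bm{w}=\bm{a}\bm{b}} \SS_{s,t}^{\bm{a}}(\gamma)\,\SS^{\bm{b}}(h^i),
\]
where the sum ranges over all ordered splittings of $\bm{w}$ into a prefix $\bm{a}$ and suffix $\bm{b}$ (including the trivial splittings with $\bm{b}=\varnothing$ or $\bm{a}=\varnothing$). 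This is the identity on which the whole argument rests.

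Next I would compute $\SS^{\bm{b}}(h^i)$ explicitly. Since $h^i$ is a linear path in direction $e^i$ with total displacement $h e^i$, its signature is the tensor exponential of $h e^i$, giving $\SS^{\bm{b}}(h^i) = h^{|\bm{b}|}/|\bm{b}|!$ whenever $\bm{b}$ consists only of the letter $i$, and $0$ otherwise. Substituting this into the Chen expansion, the $\bm{b}=\varnothing$ contribution cancels against $\SS_{s,t}^{\bm{w}}(\gamma)$ in the difference quotient, and every remaining non-zero term carries a factor $h^{|\bm{b}|}$ with $|\bm{b}|\geq 1$. Since the sum has at most $|\bm{w}|+1$ terms (so swapping limit and sum is automatic), passing $h\to 0$ picks out precisely the $|\bm{b}|=1$ contribution, which exists exactly when $\bm{w}$ ends in the letter $i$. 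This establishes the base case $|\bm{v}|=1$.

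For the general claim I would then induct on the length of $\bm{v}=v_1\cdots v_k$. Writing $\nabla^{\bm{v}} = \nabla^{v_1}\circ\cdots\circ\nabla^{v_k}$, the innermost operator $\nabla^{v_k}$ either strips the final letter $v_k$ from $\bm{w}$ or annihilates the term, and each subsequent application peels off the next letter from the right. After $k$ applications one is left with $\SS_{s,t}^{\bm{u}}(\gamma)$ precisely when $\bm{w}=\bm{u}\bm{v}$, and with $0$ otherwise, which matches the claim.

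The main obstacle is not the algebra but ensuring the setup is interpreted consistently: the vertical perturbation $h^i$ is appended at the endpoint of $\gamma$, while $\SS_{s,t}^{\bm{w}}$ in general depends only on $\gamma$ restricted to $[s,t]$. One therefore has to fix a convention in which the vertical derivative acts on the sub-interval signature as expected — the natural choice being that $[s,t]$ includes the endpoint being perturbed, in which case reparameterisation-invariance of the signature and Chen's identity on arbitrary sub-intervals make the above computation go through unchanged. Once this bookkeeping is pinned down, the rest is routine.
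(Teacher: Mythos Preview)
The paper does not supply its own proof of this proposition; it is quoted verbatim as Proposition~3.8 of \cite{DupireDerivative} and used as a black box in the proof of \Cref{thm: integro_diff_eq}. So there is nothing in the paper to compare against.

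Your argument is correct and is the standard one. The key computation --- expanding $\SS_{s,t}^{\bm{w}}(\gamma\star h^i)$ via Chen's identity, using that $\SS^{\bm{b}}(h^i)=h^{|\bm{b}|}/|\bm{b}|!$ when $\bm{b}=i\cdots i$ and vanishes otherwise, and reading off the $O(h)$ term --- is exactly how one establishes the single-letter case, and the induction you describe is the right way to extend to general $\bm{v}$ (and matches the paper's composition convention $\nabla^{\bm{v}}=\nabla^{v_1}\cdots\nabla^{v_k}$, so that $\nabla^{v_k}$ acts first and strips $v_k$ from the right of $\bm{w}$). Your caveat about the sub-interval convention is well taken: the paper implicitly uses the vertical derivative as a perturbation at the right endpoint $t$ of the interval $[s,t]$, which is consistent with how it is applied in the proof of \Cref{thm: integro_diff_eq}.
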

Consider now  $V\in \CC_d$ given by
\[
V = \frac{1}{2}\sum_{i=1}^dX_i^2 + W =\frac{1}{2}\sum_{i=1}^dX_i^2 + \sum_{j=1}^mg_jX_{\bm{v}_j},
\]
satisfying the conditions of \cref{thm: S-D_uniqueness} with associated non-commutative law $\tau_V$. Also associate to $W$ the path-dependent differential operators
\begin{equation}
    \DD_W^k \coloneqq -\sum_{j=1}^m(-i)^{|\bm{v}_j|}g_j\nabla^{D_k\bm{v}_j},
\end{equation}
where $D_k$ is the cyclic derivative from \cref{def: cyclic}. Recall now the definitions
\begin{align*}
    \langle \gamma_{a,b}\rangle_{\mu_V^N} &\coloneqq \sum_{\bm{w}\in\WW_d}i^{|\bm{w}|} \mathbb{E}_{\mu_V^N}\big[\tr{A_{\bm{w}}}\big]\SS_{a,b}^{\bm{w}}(\gamma),\label{eq:tracedevelopment}\\
    \langle \gamma_{a,b}\rangle_{\mu_V^\infty} &\coloneqq \sum_{\bm{w}\in\WW_d}i^{|\bm{w}|} \lim_{N\to\infty}\mathbb{E}_{\mu_V^N}\big[\tr{A_{\bm{w}}}\big]\SS_{a,b}^{\bm{w}}(\gamma).
\end{align*}
\begin{proposition}\label{prop: limits_interchange}
    Suppose that $V$ satisfies the assumptions of \cref{thm: S-D_uniqueness} and for each $s\in [0,T]$ let $U_{s,u}$ be the $U_N$-valued solution to the controlled differential equation
    \[
    \dif U_{s,u}=i\sum_{j=1}^d U_{s,u} A_j\dif\gamma_u^j,\quad U_{s,s}=I_N,
    \]
    where $(A_1^N,\ldots A_d^N)\sim \mu_V^N$. 
    Then
    \begin{equation}\label{eq: limiting_dev}
\Ex_{\mu_V^N}\big[\tr{U_{s,t}}\big]=\pdgn{s}{t}<\infty\quad\text{and}\quad\lim_{N\to\infty}\Ex_{\mu_V^N}\big[\tr{U_{s,t}}\big]=\pdg{s}{t}<\infty.
\end{equation}
\end{proposition}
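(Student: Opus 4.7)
The plan is to expand $U_{s,t}$ via its signature series and then apply Fubini (for the first equality) and dominated convergence (for the limit), with the sub-Gaussian tail bound of \cref{lem: sub_gaussian_tail} supplying uniform-in-$N$ moment control, and \cref{thm: S-D_uniqueness} supplying both the pointwise limits of the moments and a finite bound on the limit series.

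First I would write, from the Picard expansion of the path-ordered exponential,
\[
    U_{s,t} \;=\; \sum_{\bm{w}\in\WW_d} i^{|\bm{w}|}\, A_{\bm{w}}\, \SS_{s,t}^{\bm{w}}(\gamma),
\]
so that $\tr{U_{s,t}} = \sum_{\bm{w}} i^{|\bm{w}|} \tr{A_{\bm{w}}} \SS_{s,t}^{\bm{w}}(\gamma)$ term-by-term. To bring the expectation inside the sum, I need absolute summability in $L^1(\mu_V^N)$. Using the norm estimate $|\tr{A_{\bm{w}}}|\leq\|A_{\bm{w}}\|_{\mathrm{op}}\leq \lambda_{\max}^N(\mathbf{A})^{|\bm{w}|}$, the factorial decay $|\SS_{s,t}^{\bm{w}}(\gamma)|\leq L^{|\bm{w}|}/|\bm{w}|!$ with $L:=\|\gamma\|_{1\text{-var};[s,t]}$, and the fact that $|\{\bm{w}:|\bm{w}|=n\}|=d^n$, I get
\[
    \sum_{\bm{w}\in\WW_d} \Ex_{\mu_V^N}\!\big[|\tr{A_{\bm{w}}}|\big]\,|\SS_{s,t}^{\bm{w}}(\gamma)| \;\leq\; \Ex_{\mu_V^N}\!\Big[\exp\!\big(dL\cdot\lambda_{\max}^N(\mathbf{A})\big)\Big].
\]
By \cref{lem: sub_gaussian_tail}, splitting at $t_0$ and using the tail bound $\P[\lambda_{\max}^N\geq t]\leq e^{-\alpha N t^2}$ for $t\geq t_0$ gives
\[
    \Ex_{\mu_V^N}\!\big[e^{dL\lambda_{\max}^N(\mathbf{A})}\big] \;\leq\; e^{dL t_0} + dL\int_{t_0}^\infty e^{dLt - \alpha N t^2}\,\dif t,
\]
and the integral is bounded uniformly in $N$ (indeed non-increasing). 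Fubini then yields the first equality and the finiteness of $\pdgn{s}{t}$.

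For the second equality, the same exponential bound furnishes a dominating, summable majorant that is independent of $N$. \cref{thm: S-D_uniqueness} gives the pointwise convergence $\Ex_{\mu_V^N}[\tr{A_{\bm{w}}}]\to\tau_V(X_{\bm{w}})$ for each word $\bm{w}$, so dominated convergence on $\ell^1(\WW_d)$ applied to the absolute-value series allows me to interchange $\lim_{N\to\infty}$ with $\sum_{\bm{w}}$, yielding
\[
    \lim_{N\to\infty}\Ex_{\mu_V^N}\big[\tr{U_{s,t}}\big] \;=\; \sum_{\bm{w}\in\WW_d} i^{|\bm{w}|}\tau_V(X_{\bm{w}})\SS_{s,t}^{\bm{w}}(\gamma) \;=\; \pdg{s}{t}.
\]
Finiteness of the limit $\pdg{s}{t}$ then follows from the bound $|\tau_V(X_{\bm{w}})|\leq R^{|\bm{w}|}$ in \cref{thm: S-D_uniqueness}: the series is majorised by $\sum_n (dRL)^n/n! = e^{dRL}<\infty$.

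The main obstacle is the uniform-in-$N$ exponential moment bound on $\lambda_{\max}^N(\mathbf{A})$: once this is established, both interchanges are routine. The subtlety is that the sub-Gaussian tail exponent scales as $\alpha N t^2$, so while the estimate improves with $N$, one must also ensure the small-$N$ regime is controlled, which is handled either by restricting to $N\geq N_0$ (matching the ``for $N$ large'' clause implicit in \cref{thm: S-D_uniqueness}) or by explicit computation showing the integral is decreasing in $N$ for $N\geq 1$. Everything else—Fubini, dominated convergence, and identification of the limit with $\pdg{s}{t}$—is then immediate from the bounds above.
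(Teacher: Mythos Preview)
Your argument is essentially the paper's: signature expansion, the pointwise bound $|\tr{A_{\bm w}}|\le \lambda_{\max}^N(\mathbf A)^{|\bm w|}$, sub-Gaussian control from \cref{lem: sub_gaussian_tail}, factorial decay of the iterated integrals, then Fubini and dominated convergence. The only packaging difference is that the paper converts the sub-Gaussian tail into the uniform moment bound $\Ex_{\mu_V^N}[\lambda_{\max}^N(\mathbf A)^{n}]\le \kappa^{n}\Gamma(n/2+1)$ and then sums, whereas you sum first and bound the exponential moment $\Ex[e^{dL\lambda_{\max}^N}]$ directly.

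One small point worth tightening: for the dominated-convergence step over $\WW_d$ you need a \emph{term-wise} majorant independent of $N$, not merely a uniform bound on the total $\sum_{\bm w}\Ex_{\mu_V^N}[|\tr{A_{\bm w}}|]\,|\SS^{\bm w}|$; the latter alone does not justify interchanging the limit with the sum. Your route does yield what is needed, but you should say so explicitly: since $e^{-\alpha N t^2}\le e^{-\alpha t^2}$ for $N\ge 1$, the tail bound of \cref{lem: sub_gaussian_tail} is itself uniform in $N$, hence each moment $\Ex_{\mu_V^N}[\lambda_{\max}^N(\mathbf A)^{n}]$ admits an $N$-free bound (this is exactly the $\kappa^{n}\Gamma(n/2+1)$ estimate the paper extracts via the Vershynin equivalence), which supplies the summable term-wise majorant. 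With that one line added, your proof is complete and matches the paper's.
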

The content of the above is simply a statement that limits and expectations may be interchanged with the sum over words.

\begin{proof}
    The expansion of the path ordered exponential in terms of the signature provides that
    \[
    \tr{U_{s,t}^N}=\sum_{\bm{w}\in\WW_d}i^{|\bm{w}|}\tr{{A_{\bm{w}}^N}}\SS_{s,t}^{\bm{w}}(\gamma).
    \]
    To attain both parts of~\eqref{eq: limiting_dev}, we simply need to justify the interchange of the limits and expectation with the sum over words $\bm{w}$. To do this, it is enough to show that
    \[
    \sum_{\bm{w}\in\WW_d}\Ex_{\mu_V^N}\Big[\big|\tr{A_{\bm{w}}^N}\big|\Big]\big|\SS_{s,t}^{\bm{w}}(\gamma)\big|
    \]
    is uniformly bounded in $N$. The sub-Gaussian nature of the tail behaviour of $\lambda_{\max}^N(\mathbf{A})$, \cref{lem: sub_gaussian_tail}, yields the following bound
    \begin{equation}
\Ex_{\mu_V^N}\Big[\big|\tr{A_{\bm{w}}^N}\big|\Big]\leq \Ex_{\mu_V^N}\Big[\lambda_{\max}^N(\mathbf{A})^{|\bm{w}|}\Big]\leq \kappa^{|\bm{w}|}\Gamma\big(\tfrac{|\bm{w}|}{2}+1\big),
    \end{equation}
    for some $\kappa>0$, where $\Gamma$ denotes the usual Gamma function. The second inequality uses a common equivalent definition of sub-Gaussian random variables, see for example \cite[Proposition~2.5.2]{Vershynin}. Then, by the factorial decay of the signature
    \begin{align*}
        \sum_{\bm{w}\in\WW_d}\Ex_{\mu_V^N}\Big[\big|\tr{A_{\bm{w}}^N}\big|\Big]\big|\SS_{s,t}^{\bm{w}}(\gamma)\big|&\leq \sum_{\bm{w}\in\WW_d}\kappa^{|\bm{w}|}\Gamma\big(\tfrac{|\bm{w}|}{2}+1\big)\big|\SS_{s,t}^{\bm{w}}(\gamma)\big|
        \leq \sum_{n=0}^\infty \frac{\tilde{\kappa}^{n}\Gamma\big(\tfrac{n}{2}+1\big)}{\Gamma(n+1)}
        <\infty,
    \end{align*}
    for some $\tilde{\kappa}>0$. Then Fubini-Tonelli and the dominated convergence theorem allows for the interchange of limits and expectation with the sum of words.
\end{proof}
We now turn to the main result of this section.
\begin{theorem}\label{thm: integro_diff_eq}
    The functional $\pdg{s}{t}$ satisfies the path-dependent integro-differential equation
    \begin{equation}\label{eq: integro_diff_eq} 
    \pdg{s}{t}=1-\int\limits_{s\leq u\leq v\leq t}\pdg{s}{u}\pdg{u}{v}\langle\dif\gamma_u,\dif\gamma_v\rangle - \sum_{k=1}^d\int_s^t\DD_W^k\pdg{s}{u}\dif\gamma_u^k.
    \end{equation}
    with the boundary condition $\pdg{s}{s}=1$ for all $s\in [0,T]$.
\end{theorem}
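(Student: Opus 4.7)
The plan is to expand $\pdg{s}{t}$ in its signature series and then use the Schwinger–Dyson equation to reorganise the coefficients into the two integral terms appearing in~\eqref{eq: integro_diff_eq}. Concretely, starting from
\[
\pdg{s}{t}=\sum_{\bm{w}\in\WW_d}i^{|\bm{w}|}\tau_V(X_{\bm{w}})\SS_{s,t}^{\bm{w}}(\gamma),
\]
the boundary condition $\pdg{s}{s}=1$ is immediate since $\SS_{s,s}^{\bm{w}}(\gamma)=\delta_{\bm{w},\varnothing}$ and $\tau_V(1)=1$. For $t>s$, I would split the sum by the last letter of each non-empty word, using the iterated-integral identity $\SS_{s,t}^{\bm{w}'k}(\gamma)=\int_s^t\SS_{s,r}^{\bm{w}'}(\gamma)\,\dif\gamma_r^k$. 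This rewrites
\[
\pdg{s}{t}=1+\sum_{k=1}^d\int_s^t\Bigl(\sum_{\bm{w}'\in\WW_d}i^{|\bm{w}'|+1}\tau_V(X_{\bm{w}'}X_k)\SS_{s,r}^{\bm{w}'}(\gamma)\Bigr)\dif\gamma_r^k.
\]

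The next step is to apply the Schwinger–Dyson equation~\eqref{eq: SD-V} with $V=\tfrac12\sum_iX_i^2+W$ so that $D_kV=X_k+D_kW$. Taking $P=X_{\bm{w}'}$ gives
\[
\tau_V(X_{\bm{w}'}X_k)=\tau_V\otimes\tau_V(\del_kX_{\bm{w}'})-\tau_V(X_{\bm{w}'}D_kW),
\]
which splits the inner bracket above into two pieces. For the first piece, I use $\del_kX_{\bm{w}'}=\sum_{\bm{w}'=\bm{a}k\bm{b}}X_{\bm{a}}\otimes X_{\bm{b}}$ together with the Chen-type factorisation $\SS_{s,r}^{\bm{a}k\bm{b}}(\gamma)=\int_s^r\SS_{s,r_1}^{\bm{a}}(\gamma)\SS_{r_1,r}^{\bm{b}}(\gamma)\,\dif\gamma_{r_1}^k$. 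Summing over $\bm{a},\bm{b}$ and using $i^{|\bm{a}|+|\bm{b}|+2}=-i^{|\bm{a}|}i^{|\bm{b}|}$, the double sum factorises through the definition of $\pdg{\cdot}{\cdot}$ into $-\int_s^r\pdg{s}{r_1}\pdg{r_1}{r}\dif\gamma_{r_1}^k$. After the outer integration against $\dif\gamma_r^k$ and summation over $k\in\{1,\dots,d\}$, the Euclidean inner product $\langle\dif\gamma_u,\dif\gamma_v\rangle$ assembles itself, producing precisely the first term of~\eqref{eq: integro_diff_eq}.

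For the second piece, write $W=\sum_j g_jX_{\bm{v}_j}$ so $D_kW=\sum_j g_j\sum_{\bm{v}_j=\bm{a}k\bm{b}}X_{\bm{b}\bm{a}}$. Reindexing the double sum $\sum_{\bm{w}'}\tau_V(X_{\bm{w}'}X_{\bm{b}\bm{a}})\SS_{s,r}^{\bm{w}'}(\gamma)$ by setting $\tilde{\bm{w}}=\bm{w}'\bm{b}\bm{a}$ and invoking Proposition~3.8 of~\cite{DupireDerivative} in the form $\SS_{s,r}^{\bm{w}'}(\gamma)=\nabla^{\bm{b}\bm{a}}\SS_{s,r}^{\tilde{\bm{w}}}(\gamma)$, the resulting expression collapses to $-\sum_j g_ji^{1-|\bm{v}_j|}\nabla^{D_k\bm{v}_j}\pdg{s}{r}$. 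Using $i^{1-n}=-i\cdot(-i)^{n-1}$ and the identity $-i\cdot(-i)^{n-1}=(-i)^n$, this equals $-\mathcal{D}_W^k\pdg{s}{r}$ by the very definition of $\mathcal{D}_W^k$, which, after integration, matches the second term of~\eqref{eq: integro_diff_eq}.

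The main obstacle, and the only non-algebraic step, is the justification of interchanging the infinite sum over words with the two layers of Lebesgue–Stieltjes integration and with the $N\to\infty$ limit embedded in $\tau_V$. Here I would reuse the sub-Gaussian bound of Lemma~\ref{lem: sub_gaussian_tail}, which yields $|\tau_V(X_{\bm{w}})|\le R^{|\bm{w}|}$ (cf.\ \Cref{thm: S-D_uniqueness}), together with the factorial decay of iterated integrals $|\SS_{s,t}^{\bm{w}}(\gamma)|\le\|\gamma\|_{1\text{-var}}^{|\bm{w}|}/|\bm{w}|!$ on $[0,T]$. These give absolute summability of every intermediate expression uniformly in $r,t\in[0,T]$, so Fubini–Tonelli and dominated convergence apply, mirroring the argument already used in the proof of \Cref{prop: limits_interchange}. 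Combining the two integral contributions with the constant $1$ coming from the empty word closes the equation.
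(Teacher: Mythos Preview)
Your proof is correct and follows essentially the same route as the paper: both arguments expand $\pdg{s}{t}$ as a signature series, invoke the Schwinger--Dyson identity $\tau_V(X_{\bm{w}'}X_k)=\tau_V\!\otimes\!\tau_V(\del_kX_{\bm{w}'})-\tau_V(X_{\bm{w}'}D_kW)$, turn the $\del_k$ piece into the double integral via the Chen factorisation, and recognise the $D_kW$ piece as $-\DD_W^k\pdg{s}{r}$ through the vertical derivative; the paper merely organises this as ``compute $\nabla^k\pdg{s}{t}$ two ways, equate, then integrate'', whereas you peel off the last letter first and integrate at the end. One caveat: your bookkeeping of the power of $i$ in the $D_kW$ piece slips by one (since $|D_k\bm{v}_j|=|\bm{v}_j|-1$, the factor should be $i^{2-|\bm{v}_j|}=-(-i)^{|\bm{v}_j|}$, not $i^{1-|\bm{v}_j|}$), but once corrected it indeed yields $-\DD_W^k\pdg{s}{r}$ as claimed.
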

\begin{proof}
    Recall from \cref{thm: S-D_uniqueness} that
    \[
    \lim_{N\to\infty}\mathbb{E}_{\mu_V^N}\big[\tr{A_{\bm{w}}}\big]=\tau_V(X_{\bm{w}}),
    \]
    where $\tau_V$ is the unique non-commutative law associated with $V$ solving the corresponding Schwinger-Dyson equation. Consider now the following expansion, where we use~\eqref{eq: free_diff_monomial} for the first equality.
    \begin{align*}
    \sum_{\bm{w}\in\WW_d}i^{|\bm{w}|+1}\tau_V&\otimes\tau_V(\del_k X_{\bm{w}})\SS_{s,t}^{\bm{w}}(\gamma)=\sum_{\bm{u},\bm{v}\in\WW_d}i^{|\bm{u}|+|\bm{v}|+2}\tau_V(X_{\bm{u}})\tau_V(X_{\bm{v}})\SS_{s,t}^{\bm{u}k\bm{v}}(\gamma)\\
    &=-\sum_{\bm{u},\bm{v}\in\WW_d}i^{|\bm{u}|+|\bm{v}|}\tau_V(X_{\bm{u}})\tau_V(X_{\bm{v}})\int_s^t\SS_{s,u}^{\bm{u}}(\gamma)\SS_{u,t}^{\bm{v}}(\gamma)\dif\gamma_u^k\\
    &=-\int_s^t\left(\sum_{\bm{u}\in\WW_d}i^{|\bm{u}|}\tau_V(X_{\bm{u}})\SS_{s,u}^{\bm{u}}(\gamma)\right)\left(\sum_{\bm{v}\in\WW_d}i^{|\bm{v}|}\tau_V(X_{\bm{v}})(\gamma)\SS_{u,t}^{\bm{v}}\right)\dif\gamma_u^k\\
    &=-\int_s^t\pdg{s}{u}\pdg{u}{t}\dif\gamma_u^k.
\end{align*}
We now expand the left-hand-side using the Schwinger-Dyson equation
\begin{align*}
    \sum_{\bm{w}\in\WW_d}i^{|\bm{w}|+1}\tau_V&\otimes\tau_V(\del_k X_{\bm{w}})\SS_{s,t}^{\bm{w}}(\gamma)=\sum_{\bm{w}\in\WW_d}i^{|\bm{w}|+1}\tau_V(X_{\bm{w}} D_kV)\SS_{s,t}^{\bm{w}}(\gamma)\\
    &=\sum_{\bm{w}\in\WW_d}i^{|\bm{w}|+1}\bigg(\tau(X_{\bm{w}}X_k)+\sum_{j=1}^mg_j\tau_V(X_{\bm{w}} X_{D_k\bm{v}_j})\bigg)\SS_{s,t}^{\bm{w}}(\gamma)\\
    &=\sum_{\bm{w}\in\WW_d}i^{|\bm{w}|}\tau(X_{\bm{w}})\nabla^k\SS^{\bm{w}}_{s,t}(\gamma)+\sum_{j=1}^m(-i)^{|\bm{v}_j|-2}g_j\sum_{\bm{w}\in\WW_d}\tau_V(X_{\bm{w}})i^{|\bm{w}|}\nabla^{D_k\bm{v}_j}\SS_{s,t}^{\bm{w}}(\gamma)\\
    &=\nabla^k\sum_{\bm{w}\in\WW_d}i^{|\bm{w}|}\tau(X_{\bm{w}})\SS^{\bm{w}}_{s,t}(\gamma)+\sum_{j=1}^m(-i)^{|\bm{v}_j|-2}g_j\nabla^{D_k\bm{v}_j}\sum_{\bm{w}\in\WW_d}\tau_V(X_{\bm{w}})\SS_{s,t}^{\bm{w}}(\gamma)\\
    &=\nabla^k\pdg{s}{t}+\sum_{j=1}^m(-i)^{|\bm{v}_j|-2}g_j\nabla^{D_k\bm{v}_j}\pdg{s}{t}\\
    &=\big(\nabla^k+\DD_W^k\big)\pdg{s}{t}.
\end{align*}
Equating these identities and integrating both sides with respect to $\dif\gamma^k$ and then summing over $k$ results in~\eqref{eq: integro_diff_eq}:
\begin{equation}
    \pdg{s}{t}=1-\int\limits_{s\leq u\leq v\leq t}\pdg{s}{u}\pdg{u}{v}\langle\dif\gamma_u,\dif\gamma_v\rangle - \sum_{k=1}^d\int_s^t\DD_W^k\pdg{s}{u}\dif\gamma_u^k.
\end{equation}
\end{proof}
\begin{example}[Straight Lines]
    Fix $T=1$ and consider the case of a straight line $\gamma =tv$, for $v\in\R^d$. We note that $\dif\gamma_t^k=v^k\dif t$ and that $\pdg{s}{t}$ is now a function of only $(t-s)$, so that~\eqref{eq: integro_diff_eq} may be written as
    \[
    \DD_V \langle\gamma\rangle_t=-\int_{0}^t \langle\gamma\rangle_{u}\langle\gamma\rangle_{t-u}v^k\dif t,
    \]
    for $t\in [0,T]$.
We thus recover the classical result~\eqref{eq:matrixmodelloop}. 
\end{example}
\begin{example}[GUE]
    Consider the case of the quadratic potential $V=\tfrac{1}{2}\sum_{i=1}^dX_i^2$, for which $\mu_V^N$ corresponds to the GUE. Then each $\DD_V^k$ is given by $\Delta_k$. We may then integrate both sides of the equation
    \[
    \DD_V^k\pdg{s}{t}=-\int_s^t\pdg{s}{u}\pdg{u}{t}\dif\gamma_u^k,
    \]
    with respect to $\gamma^k$ to obtain
    \[
        \sum_{\bm{w}\in\WW_d}i^{|\bm{w}|+1}\tau_V(X_{\bm{w}k})\SS^{\bm{w}k}_{s,t}(\gamma)=\int_s^t\DD_V^k\pdg{s}{r}\dif\gamma_r^k=-\int\limits_{s<u<r<t}\pdg{s}{u}\pdg{u}{r}\dif\gamma_u^k\dif\gamma_r^k
    \]
    Summing over $k$, we see that
    \begin{align*}
    -\int\limits_{s<u<r<t}\pdg{s}{u}\pdg{u}{r}\langle\dif\gamma_u,\dif\gamma_r\rangle&=\sum_{k=1}^d\int\limits_{s<u<r<t}\pdg{s}{u}\pdg{u}{r}\dif\gamma_u^k\dif\gamma_r^k\\
    &=-\sum_{k=1}^d\sum_{\bm{w}\in\WW_d}i^{|\bm{w}|+1}\tau_V(X_{\bm{w}k})\SS^{\bm{w}k}_{s,t}(\gamma)\\
    &=-1+\sum_{\bm{w}\in\WW_d}i^{|\bm{w}|}\tau_V(X_{\bm{w}})\SS^{\bm{w}}_{s,t}(\gamma)\\
    &=\pdg{s}{t} - 1,
    \end{align*}
    which is the equation obtained in \cite{CT_free}.
\end{example}

To conclude this subsection, we consider the uniqueness of solutions to~\eqref{eq: integro_diff_eq}. A complete theory for uniqueness in a general space of functions on path space is a topic for future research. Instead, we consider infinite linear functionals on the signatures.
\begin{proposition}
    Consider the set of infinite linear functionals, $\ell:\XX\to\C$, on the signature,
    \[
    \ell(\gamma_{s,t})= \sum_{\bm{w}\in\WW_d}i^{|\bm{w}|}a_{\bm{w}}\SS_{s,t}^{\bm{w}}(\gamma),
    \]
    for which $|a_{\bm{w}}|\leq C^{|\bm{w}|}$ for some $C>0$. Suppose that
    \begin{equation}\label{eq: lf_sig}
    \ell(\gamma_{s,t})=1-\int\limits_{s\leq u\leq v\leq t} \ell(\gamma_{s,u}) \ell(\gamma_{u,v})\langle\dif\gamma_u,\dif\gamma_v\rangle - \sum_{k=1}^d\int_s^t\DD_W^k \ell(\gamma_{s,u})\dif\gamma_u^k,
    \end{equation}
    holds for all $\gamma\in\XX$, then $a_{\bm{w}}=\tau_V(X_{\bm{w}})$ for all $\bm{w}\in\WW_d$.
\end{proposition}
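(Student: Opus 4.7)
The plan is to define the linear functional $\tau:\CC_d\to\C$ by $\tau(X_{\bm{w}})\coloneqq a_{\bm{w}}$ (extended linearly), verify that $\tau$ satisfies the Schwinger--Dyson equation~\eqref{eq: SD-V}, and then invoke the uniqueness clause of~\cref{thm: S-D_uniqueness}. The growth hypothesis $|a_{\bm{w}}|\leq C^{|\bm{w}|}$ matches the bound appearing in that theorem (taking $R\geq\max(C,3)$), so once SD is established, uniqueness delivers $a_{\bm{w}}=\tau_V(X_{\bm{w}})$ at once. As a preliminary check, evaluating~\eqref{eq: lf_sig} at $s=t$, where $\SS_{s,s}^{\bm{w}}(\gamma)=0$ for $|\bm{w}|\geq 1$, forces $a_\varnothing=\tau(1)=1$.

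The strategy is to reverse the derivation in the proof of~\cref{thm: integro_diff_eq}. Since~\eqref{eq: lf_sig} holds for every $t\in[s,T]$ and every $\gamma\in\XX$, differentiating in $t$ and matching coefficients of $\dot\gamma_t^k$ (by freely varying the velocity of $\gamma$ at time $t$) yields the local identity
\[
(\nabla^k+\DD_W^k)\ell(\gamma_{s,t})=-\int_s^t\ell(\gamma_{s,u})\ell(\gamma_{u,t})\dif\gamma_u^k, \quad 1\leq k\leq d.
\]
The key observation is that the algebraic manipulations in the proof of~\cref{thm: integro_diff_eq} apply verbatim with $\tau$ in place of $\tau_V$: they rely only on~\eqref{eq: free_diff_monomial}, on the signature identity $\int_s^t\SS_{s,u}^{\bm{u}}(\gamma)\SS_{u,t}^{\bm{v}}(\gamma)\dif\gamma_u^k=\SS_{s,t}^{\bm{u}k\bm{v}}(\gamma)$, and on the coordinate actions of $\nabla^k$ and $\nabla^{D_k\bm{v}_j}$, and never invoke the Schwinger--Dyson property of $\tau_V$. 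These manipulations yield the two equalities
\[
-\int_s^t\ell(\gamma_{s,u})\ell(\gamma_{u,t})\dif\gamma_u^k=\sum_{\bm{w}\in\WW_d}i^{|\bm{w}|+1}\tau\otimes\tau(\del_kX_{\bm{w}})\SS_{s,t}^{\bm{w}}(\gamma),
\]
\[
(\nabla^k+\DD_W^k)\ell(\gamma_{s,t})=\sum_{\bm{w}\in\WW_d}i^{|\bm{w}|+1}\tau(X_{\bm{w}}D_kV)\SS_{s,t}^{\bm{w}}(\gamma).
\]

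Equating the two expansions, together with the local identity above, and appealing to linear independence of the coordinate iterated integrals $\{\SS_{s,t}^{\bm{w}}\}_{\bm{w}\in\WW_d}$ as functionals on $\XX$ (a classical consequence of injectivity of the signature on sufficiently rich classes of paths), every coefficient must vanish, giving $\tau\otimes\tau(\del_kX_{\bm{w}})=\tau(X_{\bm{w}}D_kV)$ for every $\bm{w}$ and $k$. By linearity, this is precisely the Schwinger--Dyson equation~\eqref{eq: SD-V} on $\CC_d$. An appeal to~\cref{thm: S-D_uniqueness} then identifies $\tau$ with $\tau_V$, completing the proof. The main obstacle is the algebraic book-keeping inside the two displayed identities: the prefactors $(-i)^{|\bm{v}_j|}g_j$ from the definition of $\DD_W^k$, the factors $i^{|\bm{w}|+1}$ from the signature expansion of $\ell$, and the word-length shifts induced by the cyclic derivatives $D_k\bm{v}_j$ must all recombine precisely into the $D_kV$ structure on the right-hand side, exactly mirroring the computation in the proof of~\cref{thm: integro_diff_eq}.
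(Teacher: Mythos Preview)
Your proposal is correct and follows essentially the same route as the paper: expand both sides of~\eqref{eq: lf_sig} in signature coordinates, match coefficients using linear independence of the $\SS_{s,t}^{\bm{w}}$, recognise the resulting identity as the Schwinger--Dyson equation~\eqref{eq: SD-V}, and invoke the uniqueness in~\cref{thm: S-D_uniqueness}. The only cosmetic difference is that the paper matches coefficients of $\SS_{s,t}^{\bm{w}k}$ directly in the integrated equation, whereas you first differentiate in~$t$ to strip the outer integral and then match coefficients of $\SS_{s,t}^{\bm{w}}$; both manoeuvres land on the same coefficient identity.
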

\begin{proof}
    Expanding~\eqref{eq: lf_sig} gives
    \begin{align*}
        \sum_{\bm{w}\in\WW_d}i^{|\bm{w}|}a_{\bm{w}}\SS_{s,t}^{\bm{w}}(\gamma) &= 1+\sum_{k=1}^d\sum_{\bm{u},\bm{v}\in\WW_d}i^{|\bm{u}|+|\bm{v}|+2}a_{\bm{u}}a_{\bm{v}}\SS_{s,t}^{\bm{u}k\bm{v}k}(\gamma)\\
        &\qquad +\sum_{k=1}^d\sum_{\bm{u}\in\WW_d}\sum_{j=1}^mi^{|\bm{u}|+|\bm{v}_j|-2 - |\bm{v}_j|}g_ja_{\bm{u}D_k\bm{v}_j}\SS_{s,t}^{\bm{u}k}(\gamma)
    \end{align*}
    Noting that the left-hand-side may be expanded as
    \[
    \sum_{\bm{w}\in\WW_d}i^{|\bm{w}|}a_{\bm{w}}\SS_{s,t}^{\bm{w}}(\gamma) = 1+\sum_{k=1}^d\sum_{\bm{w}\in\WW_d}i^{|\bm{w}|+1}a_{\bm{w}k}\SS_{s,t}^{\bm{w}k}(\gamma),
    \]
    means that we may equate the coefficients of $\SS^{\bm{w}k}$ to obtain the system of equations
    \[
    a_{\bm{w}k} = \sum_{\bm{w}=\bm{u}k\bm{v}}a_{\bm{u}}a_{\bm{v}}-\sum_{j=1}^mg_ja_{\bm{w}D_kv_j},
    \]
    which after rearranging yields
    \[
     \sum_{\bm{w}=\bm{u}k\bm{v}}a_{\bm{u}}a_{\bm{v}} =a_{\bm{w}k}+\sum_{j=1}^mg_ja_{\bm{w}D_kv_j},
    \]
    which implies that the extension of $\ell$ to a linear functional $\tau_a:\CC_d\to\C$ satisfies the Schwinger-Dyson equation $\tau_a\otimes\tau_a(\del_kX_{\bm{w}})=\tau_a(X_{\bm{w}} D_kV)$. By the uniqueness of solutions to the Schwinger-Dyson equation, \cref{thm: S-D_uniqueness}, we must have $\tau_a=\tau_V$.
\end{proof}

\section{Quantum algorithm}\label{sec:quantumsignature}

In this section we introduce quantum analogues of the classical path signature and signature kernel. We will study three main objects. Firstly, the \textit{quantum path development}--this is a random quantum circuit that we denote by $U_{\gamma}^Q(\alpha(m), n, K)$. The circuit acts on $n$ qubits, dependent on $m$ random Pauli strings labelled by~$\alpha$, an approximation quality $K$, and a path $\gamma$. We will show (\cref{thm:sparse_approximation}) that the expected value of this circuit converges to the Gaussian unitary path developments studied in the previous section. Secondly, we define the \textit{quantum path signature} $\mathcal{S}^{Q}_{\epsilon}$ to be the quantum feature map associated to the quantum circuit. Finally, in \cref{subsec:quantumalg}, we discuss a \textit{quantum signature kernel}, denoted $k^{Q}(\sigma,\tau)$, which is realised as the output of a quantum algorithm $\mathcal{A}$ built using the quantum path development $U_{\gamma}^Q(\alpha(m), n, K)$. One of the main results of this section is \cref{thm:quantumalg} demonstrating the efficiency of this algorithm. Along the way, we give theoretical guarantees for an analogous classical algorithm (\cref{thm:classicalalg}) to compute the randomised path development.

\subsection{Preliminaries}
In the following, we review basic definitions and concepts from quantum computing. Whilst this material will likely be familiar to readers with a background in quantum information theory, we include it to ensure the work remains accessible to researchers from the path signature and machine learning communities, in line with our aim of connecting ideas across these fields. For a more comprehensive introduction, we refer the reader to the standard text \cite{nielsen2010quantum}.

\paragraph{Quantum states.} 
We consider the Hilbert space of $n$ qubits
\begin{equation}
    \mathcal{H} = \underbrace{\mathbb{C}^2  \otimes \cdots \otimes \mathbb{C}^2}_{n\text{-times}} ,
\end{equation}
which has dimension $N=2^n$. We write $|\psi \rangle$ for vectors, henceforth referred to as (pure) quantum states, in $\mathcal{H}$. We may write the basis of tensor product states of $\mathcal{H}$ as $|x \rangle = |x_1 \rangle \otimes \cdots \otimes |x_n \rangle$ where each $x_{i} \in \{0,1\}$. Each state may then be identified with a binary string $x \in \{0,1\}^n$ which defines an isomorphism with $\mathbb{C}^{N=2^n}$. We define $|\mathbf{0} \rangle := |0\rangle \otimes \cdots \otimes |0\rangle \in \mathcal{H}$. 

More general quantum states, referred to as \emph{mixed states}, are represented by a \emph{density operator} $\rho$ acting on $\mathcal{H}$, which is a positive semi-definite Hermitian operator satisfying $\Tr{\rho} = 1$. If, in addition, $\Tr{\rho^2} = 1$, then $\rho$ corresponds to a pure state and can be written in the form $\rho = |\psi\rangle \langle \psi|$, considered as an element of $\mathcal{H}\otimes\mathcal{H}^*$, for some state $|\psi\rangle \in \mathcal{H}$. In the general case,  $\rho$ describes a classical probabilistic mixture of pure states and can be expressed as a convex combination:
\begin{equation}
\rho = \sum_{i=1}^k p_i |\psi_i\rangle \langle \psi_i|,
\end{equation}
where $p_i \geq 0$, $\sum_i p_i = 1$, and each $|\psi_i\rangle \in \mathcal{H}$ is a pure state. We denote this (Hilbert) space of general quantum states by $S(\mathcal{H})$. A quantum circuit is a unitary matrix $U$ acting on $\mathcal{H} = (\mathbb{C}^2)^{\otimes n}$ built from a finite sequence of elementary unitaries chosen from a \textit{universal gate set} \cite{kitaev1997quantum}.

\paragraph{Measurements.}
In the present context, the measurement of a state $\rho$ will be described by a set of orthogonal projectors $\{\Pi_i\}_{i=1}^k$ on $\mathcal{H}$ satisfying $\sum_{i=1}^k \Pi_i = I_{N}$ and $\Pi_i \Pi_j = \delta_{ij} \Pi_i$. The measurement describes the possible outcomes of a discrete random variable~$\Pi$ taking values in $\{1,\ldots,k\}$, with associated probabilities 
\begin{equation}\label{eq:Born_rule}
    \mathbb{P}(\Pi = i) = \tr{\rho \Pi_i}.
\end{equation}
This is known as Born's rule.

\paragraph{Quantum feature map.}
Let $\mathcal{X}$ be a set of data;
a quantum feature map \cite{schuld2019quantum}is an embedding
\begin{equation}\label{eq:qfm}
    \rho: \mathcal{X} \to S(\mathcal{H}),
\end{equation}
where the feature space is the space of quantum states on $\mathcal{H}$, i.e. we map~$x\in\mathcal{X}$ to a density operator~$\rho(x)$.

\subsection{Quantum circuit}\label{subsec:quantumcircuit}
We now turn to a quantum circuit implementation of the unitary path development. The Hilbert space of qubits $\mathcal{H}$ has dimension $N=2^n$. In the geometrical setup outlined in \cref{subsec:Wilsonlines}, this Hilbert space now plays the role of the associated $G=U_N$ bundle in the fundamental representation and the quantum circuit will play the role of the parallel transport operator between representation vectors.

We assume throughout this section that paths $\gamma$ are continuous piecewise linear $\gamma:[0,T] \to \mathbb{R}^d$. It is natural to consider this class of paths as any smooth path can be approximated in an appropriate topology by sequences of piecewise linear paths. The time interval $[0,T]$ is divided into intervals $0 = t_0 < t_1 < \ldots < t_L = T$ and we write
\begin{equation}\label{eq:piecewisepath}
    \gamma = \gamma_{1} \star \cdots \star \gamma_{L},
\end{equation}
where each $\gamma_l$ is a linear path on $[t_{l-1},t_{l}]$. The data specifying the path is then $L$ vector increments, denoted $\Delta_{l}^{\nu} := \gamma^{\nu}(t_{l}) - \gamma^{\nu}(t_{l-1})$ for $l=1,\ldots,L$ and $\nu=1,\ldots,d$. For piecewise linear paths, the length is equal to the total one-variation which we denote by $\Delta_{\gamma} = \sum_{l=1}^L \|\Delta_l\|$.

As we will see more precisely later, the computational difficulty of computing the path development is controlled by the $U_N$ dimension $N=2^n$, the total one-variation $\Delta_{\gamma}$ (\textit{i.e.} the length of the path) and the quality of the approximation $\epsilon$ arising due to the discretisation as a quantum circuit.

\begin{definition}[Pauli strings]\label{def:PS}
A set of Hermitian operators on $\mathbb{C}^2$ given by
\begin{equation}\label{eq:paulis}
    \sigma_I = \begin{pmatrix} 1 & 0 \\ 0 & 1 \end{pmatrix}, \quad
    \sigma_X = \begin{pmatrix} 0 & 1 \\ 1 & 0 \end{pmatrix}, \quad
    \sigma_Y = \begin{pmatrix} 0 & -i \\ i & 0 \end{pmatrix}, \quad
    \sigma_Z = \begin{pmatrix} 1 & 0 \\ 0 & -1 \end{pmatrix}.
\end{equation}
A (real) linear basis for $\mathfrak{u}_N$ is then given by tensor products of the Pauli matrices
\begin{equation}
    \sigma_{\mathbf{w}} = \sigma_{w_1} \otimes \cdots \otimes \sigma_{w_n},
\end{equation}
where $\mathbf{w}=w_1\ldots w_n$ is a word with $n$ letters chosen from $\{I,X,Y,Z\}$ and $\otimes$ denotes the tensor product. The operator $\sigma_{\mathbf{w}}$ is referred to as a Pauli string. We write the set of words on Pauli matrices as $\PP$.
\end{definition}

We now consider the $N \times N$ unitaries $U_{\gamma}$ discussed in previous sections as operations on qubits. We will show that these may be decomposed into quantum gates. Let us first consider the $N$-dimensional path development for a fixed set of $d$ Hermitian matrices $A_1, \ldots A_d$, that is the solution to\footnote{In this equation and in the following we use the summation convention that repeated upper and lower indices are summed. E.g. $X_{\nu}X^{\nu} := \sum_{\nu=1}^d X_{\nu}X^{\nu}$.}
\begin{equation}\label{eq:piecewisePD}
    \dif U_{\gamma}(t) = i U_{\gamma}(t) A_{\nu} \dot{\gamma}^{\nu}_t \dif t, \quad U_{\gamma}(0) = I,
\end{equation}
at $t=T$. We denote the solution by $U_{\gamma}(A,N)$. We now turn to how to implement this unitary as a quantum circuit acting on the Hilbert space $\mathcal{H} = (\mathbb{C}^2)^{\otimes n}$.

\paragraph{Trotterisation.}
Let us discuss the solution to the path development equation~\eqref{eq:piecewisePD} for piecewise linear paths~\eqref{eq:piecewisepath} and for a fixed set of $N \times N$ matrices $\{A_{\nu}\}_{\nu=1}^d$. From the fact $U_{\gamma_1 \star \gamma_2} = U_{\gamma_1}U_{\gamma_2}$ we have that
\begin{equation}
    U_{\gamma}(A,N) = U_{\gamma_{1}}(A,N)U_{\gamma_{2}}(A,N)\cdots U_{\gamma_L}(A,N),
\end{equation}
where $U_{\gamma_l}(A,N)$ denotes the unitary time evolution operator/path development for the straight line segment on $[t_l,t_{l+1}]$. Each $U_{\gamma_l}(A,N)$ is then given by
\begin{equation}
    U_{\gamma_l}(A;N) = \exp( i \sum_{\nu=1}^d \Delta_l^{\nu}A_{\nu}),
\end{equation}
where $\Delta_l^{\nu} = \gamma^{\nu}(t_{l}) - \gamma^{\nu}(t_{l-1})$ are the (vector) increments of the path on each time step. Recall (Definition~\ref{def:PS}) that each Hermitian matrix $A_{\nu}$ may be expressed in the Pauli string basis. We write the expansion as
\begin{equation}\label{eq:Aexpansion}
    A_{\nu} = \sum_{\mathbf{w} \in P} \alpha_{\nu}^{\mathbf{w}}\sigma_{\mathbf{w}},
    \quad \nu=1,\ldots d,
\end{equation}
with $\{\alpha_{\nu}^{\mathbf{w}}\}_{\mathbf{w}}$ a set of $4^n$ real coefficients for each matrix and often write the development as $U_{\gamma}(\alpha,N)$ emphasising the dependence on these coefficients. We now consider the Trotterisation of this unitary to define a quantum circuit.

\begin{definition}
The first-order Trotterisation with $K$ subdivisions of each straight line interval $\gamma_l$, written $U_{\gamma_l}(\alpha,n,K)$, is defined by
\begin{equation}\label{eq:trotter}
    U_{\gamma_i}(\alpha;n,K) := \left[\prod_{\nu=1}^d \prod_{\mathbf{w} \in P} \exp(i \frac{\Delta_i^{\nu} \alpha_{\nu}^{\mathbf{w}}}{K}\sigma_{\mathbf{w}})\right]^K 
    = \left[\prod_{\nu=1}^d \prod_{\mathbf{w} \in P} P_{\mathbf{w}}\left(\frac{\Delta_i^{\nu} \alpha_{\nu}^{\mathbf{w}}}{K}\right)\right]^K,
\end{equation}
where we write the shorthand $\alpha$ for the set of coefficients $\{\alpha_{\nu}^{\mathbf{w}}\}$ that specify~$A$.  
In the second equality, we have re-written the exponentials as Pauli rotations $P_{\mathbf{w}}(\theta) := \exp(i \theta \sigma_{\mathbf{w}})$ with small angles. Finally, we define the Trotterised path development of~$\gamma$ by
\begin{equation}\label{eq:densecircuit}
    U_{\gamma}(\alpha;n,K) := U_{\gamma_L}(\alpha;n,K)U_{\gamma_{L-1}}(\alpha;n,K)\cdots U_{\gamma_1}(\alpha;n,K).
\end{equation} 
\end{definition}
We have thus constructed a quantum circuit from repeated application of multi-qubit Pauli rotations with small angles that approximates the path development with a fixed set of Hermitian matrices determined by coefficients $\{\alpha_\nu^{\mathbf{w}}\}$. Intuitively, one may view the Trotterisation as dividing each interval $[t_i,t_{i+1}]$ further into~$K$ subdivisions and one recovers the full development in the limit $K \to \infty$. The following lemma quantifies the quality of this approximation.

\begin{lemma}\label{lemma:trotter}
We consider the Trotterisation into $K$ subdivisions of the path development $U_{\gamma}$ for a piecewise linear path~$\gamma$ defined by a fixed gauge field specified by coefficients $\{\alpha_{\nu}^{\mathbf{w}}\,:\, |\alpha_{\nu}^{\mathbf{w}}|\le 1/m\}$ in the Pauli basis. Then
\begin{equation}
    \|U_{\gamma}(\alpha,n,K) - U_{\gamma}(\alpha,n)\| \le \frac{|P|_{\alpha}}{K} \Delta_{\gamma}^2 + \OO\left(\frac{1}{K^2}\right),
\end{equation}
where $|P|_{\alpha}$ denotes the maximum number (over $\nu$) of non-zero coefficients in the Pauli string expansion and $\Delta_{\gamma}$ denotes the total one-variation of the path $\gamma$. 
\end{lemma}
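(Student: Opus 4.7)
My strategy is to reduce the global error to a sum of per-segment errors using the multiplicative (Chen-type) structure of the path development, and then invoke a standard first-order Trotter bound on each straight-line segment. First, I would expand $U_\gamma(\alpha,n) = U_{\gamma_L}(\alpha,n)\cdots U_{\gamma_1}(\alpha,n)$ and analogously for the Trotterised version, and apply the telescoping identity
\begin{equation*}
U_\gamma(\alpha,n,K) - U_\gamma(\alpha,n) = \sum_{l=1}^L \Bigl(\prod_{l'>l} U_{\gamma_{l'}}(\alpha,n,K)\Bigr)\bigl[U_{\gamma_l}(\alpha,n,K) - U_{\gamma_l}(\alpha,n)\bigr]\Bigl(\prod_{l'<l} U_{\gamma_{l'}}(\alpha,n)\Bigr).
\end{equation*}
Since $U_{\gamma_l}(\alpha,n,K)$ is by construction a product of Pauli rotations and hence unitary, and similarly for $U_{\gamma_l}(\alpha,n)$, each surrounding factor has operator norm one, and the triangle inequality yields $\|U_\gamma(\alpha,n,K) - U_\gamma(\alpha,n)\| \le \sum_{l=1}^L \|U_{\gamma_l}(\alpha,n,K) - U_{\gamma_l}(\alpha,n)\|$.

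Second, for each segment I would apply the first-order Trotter estimate. Labelling the Hermitian summands by $j$, so that $H_{l,j} = \Delta_l^{\nu_j}\alpha_{\nu_j}^{\mathbf{w}_j}\sigma_{\mathbf{w}_j}$ and $\sum_j H_{l,j} = \sum_\nu \Delta_l^\nu A_\nu$, a Baker--Campbell--Hausdorff expansion applied at the small time-step $1/K$ gives
\begin{equation*}
\Bigl\|e^{i\sum_j H_{l,j}/K} - \prod_j e^{iH_{l,j}/K}\Bigr\| \le \frac{1}{2K^2}\sum_{j<k}\|[H_{l,j},H_{l,k}]\| + \OO(K^{-3}).
\end{equation*}
Composing $K$ such steps and using unitarity to accumulate errors linearly then yields the per-segment bound
\begin{equation*}
\|U_{\gamma_l}(\alpha,n,K) - U_{\gamma_l}(\alpha,n)\| \le \frac{1}{2K}\sum_{j<k}\|[H_{l,j},H_{l,k}]\| + \OO(K^{-2}).
\end{equation*}

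Third, I would control the commutator sum using the Pauli structure: any two Pauli strings either commute or anti-commute, so $\|[\sigma_{\mathbf{w}},\sigma_{\mathbf{w}'}]\|\le 2$ and consequently
\begin{equation*}
\sum_{j<k}\|[H_{l,j},H_{l,k}]\| \le \Bigl(\sum_j\|H_{l,j}\|\Bigr)^2 = \Bigl(\sum_{\nu,\mathbf{w}}|\Delta_l^\nu\alpha_\nu^{\mathbf{w}}|\Bigr)^2.
\end{equation*}
The sparsity hypothesis (at most $|P|_\alpha$ non-zero coefficients per $\nu$) together with the normalisation $|\alpha_\nu^{\mathbf{w}}|\le 1/m$ will control this by a quantity proportional to $|P|_\alpha\,\|\Delta_l\|^2$. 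Finally, summing over $l$ and invoking the elementary inequality $\sum_l\|\Delta_l\|^2 \le (\sum_l\|\Delta_l\|)^2 = \Delta_\gamma^2$ produces the stated bound.

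The main obstacle will be extracting the precise linear dependence on $|P|_\alpha$ rather than the naive quadratic dependence that arises from the $\sim(d|P|_\alpha)^2$ commutator pairs. The correct scaling should emerge from the interplay between the $\ell^1$ sparsity count $|P|_\alpha$ and the uniform coefficient bound $1/m$, combined with the fact that commutators of Pauli strings with identical tensor factors vanish, reducing the effective number of non-trivial commutators. Checking that the $\OO(K^{-2})$ remainder is truly uniform in the path data (i.e. not hiding a hidden growth in $\Delta_\gamma$ or $|P|_\alpha$) will require a careful truncation of the BCH series, but this is standard and the dominant term of order $\Delta_\gamma^2/K$ is what carries the content of the lemma.
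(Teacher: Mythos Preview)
Your proposal is correct and follows essentially the same route as the paper: telescope over the $L$ linear segments using unitarity, invoke the standard first-order Trotter/Suzuki commutator bound on each segment, control the commutators via the Pauli anticommutation structure, and finish by bounding the total two-variation by $\Delta_\gamma^2$. The only cosmetic difference is that the paper quotes the Suzuki bound~\eqref{eq:standardTrotter} directly rather than rederiving it from BCH; your flagged concern about extracting a linear rather than quadratic dependence on $|P|_\alpha$ is legitimate, and the paper's own argument is equally loose on this point.
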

\begin{remark}
When $\gamma(t)=t$ is a one-dimensional linear path segment on  $[0,t]$, the lemma recovers the familiar Trotter error bound from Hamiltonian simulation theory that is quadratic $\OO(t^2)$ in the time interval (length of the path) and $\OO(1/K)$ in the Trotter subdivision.
\end{remark}

\begin{proof}
    First, recall the standard Trotterisation error bound \cite{suzuki1976generalized}. Suppose a Hamiltonian is given as a sum of bounded operators $H = \sum_{j=1}^N H_j$ then we have, in operator norm,
    \begin{equation}\label{eq:standardTrotter}
        \left\|\left(\prod_{j=1}^N \exp\left\{\frac{i H_j}{K}\right\}\right)^K - e^{iH}\right\| \le \frac{1}{2K} \sum_{i<j}\|[H_i,H_j]\| + \OO\left(\frac{1}{K^2}\right).
    \end{equation}
    Consider now the Hamiltonian $H = \sum_{\nu=1}^d \sum_{\mathbf{w} \in P} H_{\nu,\mathbf{w}}$ with $H_{\nu,\mathbf{w}} = \Delta_i^\nu \alpha_\nu^{\mathbf{w}}\sigma_{\mathbf{w}}$. 
    From the Lie algebra structure of the Pauli matrices, then
    $\|[\alpha_{\nu}^\mathbf{w}\sigma_{\mathbf{w}}, \alpha_{\nu}^\mathbf{w'}\sigma_{\mathbf{w'}}]\| \le 2$, and~\eqref{eq:standardTrotter} yields
    \begin{equation}
        \|U_{\gamma_i}(\alpha;n,K) - U_{\gamma_i}(\alpha;n)\| \le \frac{|P|_{\alpha_{\nu}}}{K} \sum_{\nu=1}^d|\Delta_i^{\nu}|^2 + \OO\left(\frac{1}{K^2}\right).
    \end{equation}
    where $|P|_{\alpha_{\nu}}$ denotes the number of $\mathbf{w}$ for which  $\alpha_{\nu}^{\mathbf{w}}$ is non-zero. Now we would like to consider a bound on 
    \begin{equation}
        \|U_{\gamma}(\alpha,n,K) - U_{\gamma}(\alpha,n)\|
        = \left\| \prod_{i=1}^L U_{\gamma_i}(\alpha,n,K) - \prod_{i=1}^LU_{\gamma_i}(\alpha,n)\right\|.
    \end{equation}
    Telescoping the product and using the fact that the operators are unitary we simply obtain the sum of the individual bounds between the product components
    \begin{equation}
    \begin{split}
        \|U_{\gamma}(\alpha,n,K) - U_{\gamma}(\alpha,n)\| &\le \frac{|P|_{\alpha}}{K}\sum_{i=1}^L \sum_{\nu=1}^d |\Delta_i^{\nu}|^2 + \OO\left(\frac{1}{K^2}\right)\\
        &= \frac{|P|_{\alpha}}{K} \Delta_{\gamma,2}^2 + \OO\left(\frac{1}{K^2}\right), \\
        & \le \frac{|P|_{\alpha}}{K} \Delta_{\gamma}^2 + \OO\left(\frac{1}{K^2}\right),
    \end{split}
    \end{equation}
    where in the second line $\Delta_{\gamma,2}^2$ denotes the square of the total two-variation of the path and in the third line we have used that for piecewise linear paths the two-variation is bounded by the one-variation. In the above $|P|_{\alpha} := \max_{\nu} |P|_{\alpha_{\nu}}$.
\end{proof}

\begin{remark}
The circuit $U_{\gamma}(\alpha,n,K)$ may be further decomposed into a fundamental gate set, for example using the methods described in \cite{sriluckshmy2023optimal}. 
\end{remark}

\paragraph{Randomisation and sparsification.}
We now discuss how to randomise over the Hermitian matrices $A_{\nu}$ (or equivalently the coefficients $\alpha$ in the Pauli string expansion of $A$). We sample $A_\nu$ from the $N=2^n$ dimensional Gaussian unitary ensemble (GUE) by sampling coefficients of the Pauli strings in the circuit~\eqref{eq:densecircuit}. 
In recent work \cite{chen2024sparse}, the authors define a sparse ensemble of random Pauli strings.
They also show that ground states of the associated Hamiltonian are efficient to prepare. In the following, we use the same matrix ensemble in a different context that allows us to consider a sparsification of Pauli strings in the path development circuit~\eqref{eq:densecircuit}.
We first recall the definition of the random Pauli ensemble and then argue that the first moment of this random Pauli ensemble converges to the semi-circular law with a sufficiently fast rate.

\begin{definition}{(Random Pauli ensemble)}\label{def:SPS}
    Let $N=2^n$.
    A matrix $A$ belongs to the random Pauli ensemble with $m$ Pauli strings if
    \[
    A=\frac{1}{\sqrt{m}}\sum_{i=1}^mr^i\sigma^{i},
    \]
    where $r_i\stackrel{\text{i.i.d.}}{\sim}\mathrm{Rademacher}(\nicefrac{1}{2})$ and $\sigma^{i}=\sigma(1)^i\otimes\cdots\otimes\sigma(n)^{i}$, where $\sigma(j)^i\stackrel{\text{i.i.d.}}{\sim}\PP$, where $\sigma\sim \PP$ denotes a random variable uniformly distributed on the set of Pauli matrices $\PP$.  
\end{definition}

The distribution of $\nu =1,\ldots, d$ matrices from the random Pauli ensemble of Definition~\ref{def:SPS}, may be equivalently described by a random variable $\alpha$ for the coefficients in the expansion~\eqref{eq:Aexpansion} where for each $\nu = 1,\ldots, d$,
the coefficient $\alpha_{\mathbf{w}}^{\nu}$ is non-zero for only $m$ uniformly chosen $\mathbf{w}$ where it takes the two values $\{-1/\sqrt{m},1/\sqrt{m}\}$ with equal probability. We henceforth denote this random variable by $\alpha(m) = \alpha(m)_{\mathbf{w}}^{\nu}$. Accordingly, we write the path development associated to $\alpha(m)$ as
\begin{equation}\label{eq:SPdev}
    \dif U_{\gamma_{s,t}}^{\text{SP}}(\alpha(m),n) = i U_{\gamma_{s,t}}^{\text{SP}}(\alpha(m),n) A_{\nu}^{m} \dot{\gamma}^{\nu}_t \dif t,\quad U_{\gamma_{s,s}}^{\text{SP}}(\alpha(m),n) = I_{2^n},
\end{equation}
where the matrices $A_{\nu}^{m}$ have expansions $A_{\nu}^{m} = \sum_{\mathbf{w} \in P} \alpha(m)_{\nu}^{\mathbf{w}}\sigma_{\mathbf{w}}$ with $m$ non-zero terms. 
We write the solution as $U_{\gamma_{s,t}}^{\text{SP}}(\alpha(m),n)$.

Next we present the main result of this subsection, which states that the path development driven by random Pauli ensemble matrices approximates well the unitary path development. 
The proof is given in Appendix~\ref{proof_sparse_approx} and relies crucially on the following proposition about the moments of the Pauli ensemble. 
For completeness, and because it may be of independent interest, we restate this proposition below.
\begin{proposition}
    For each $n$, let $A_1^n,\ldots,A_d^n$ be i.i.d. matrices from the random Pauli string ensemble of dimension $N=2^n$ with $m_n$ strings each.
    Then
    \[
\left|\Ex\Big[\tr{A_{\bm{w}}^n}\Big] - \tau(X_{\bm{w}})\right| \leq (4e)^pp!\left(m_n^{-1} + 2^{-2n}\right),
    \]
    where $\tau$ is the law of $d$-free semicircular random variables and $\bm{w}\in\WW_d$ has length $2p$ for some $p\in\N$. In the above $A_{\bm{w}}$ denotes the product $A_{w_1}\ldots A_{w_{2p}}$.
\end{proposition}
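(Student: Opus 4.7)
The plan is to compute $\mathbb{E}\bigl[\tr{A_{\bm w}^n}\bigr]$ by expanding in the two independent sources of randomness---the Rademacher signs and the uniform single-qubit Paulis---and to identify the dominant contribution with the free Wick formula of \cref{ex: semicircular}.

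Expanding $A_{w_j}^n = m_n^{-1/2}\sum_{i=1}^{m_n} r_{w_j}^i \sigma_{w_j}^i$ yields
\[
\mathbb{E}\bigl[\tr{A_{\bm w}^n}\bigr] = m_n^{-p}\sum_{\bm i\in[m_n]^{2p}} \mathbb{E}\Bigl[\prod_{j=1}^{2p} r_{w_j}^{i_j}\Bigr]\,\mathbb{E}\bigl[\tr{\sigma_{w_1}^{i_1}\cdots \sigma_{w_{2p}}^{i_{2p}}}\bigr].
\]
By independence of the Rademachers, the first factor vanishes unless, for every letter $\nu\in\{1,\dots,d\}$, the multiset $\{i_j:w_j=\nu\}$ contains each value an even number of times; otherwise it equals $1$. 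I would split the surviving index profiles into \emph{pair profiles}, where every value has multiplicity exactly $2$, and \emph{higher-order profiles}, where some value has multiplicity $\ge 4$; a counting argument bounds the second class by $O(m_n^{-1})$ with an implicit constant controlled by the number of pair partitions of $2p$.

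Pair profiles are indexed by a pair partition $\pi$ of $\{1,\dots,2p\}$ that is \emph{$\bm w$-compatible}, meaning $w_j=w_{j'}$ for each $(j,j')\in\pi$, together with an injection of the $p$ blocks of $\pi$ into $[m_n]$. For such a choice the Pauli strings at paired positions are literally identical, and iterated commutation past intervening strings combined with $\sigma^2=I_{2^n}$ collapses the product to $\epsilon(\pi,\{\tau_a\})\cdot I_{2^n}$, where $\tau_1,\dots,\tau_p$ are iid uniform Pauli strings indexed by the blocks of $\pi$ and $\epsilon\in\{\pm 1\}$ records the accumulated commutation signs. The crux is then a two-part lemma: first, if $\pi$ is non-crossing, iteratively collapsing an innermost pair $(j,j+1)$ requires no commutation and gives $\epsilon\equiv +1$, so $\tr{\cdot}=1$; second, if $\pi$ has at least one crossing, the sign factorises qubit by qubit as $\epsilon=\prod_{k=1}^n \epsilon_k$ with $\epsilon_k = \prod_{(a,b)\in C_\pi} (-1)^{c(\tau_{a,k},\tau_{b,k})}$, where $C_\pi$ denotes the crossing graph of $\pi$, and a single-qubit computation---using that two independent uniform single-qubit Paulis anti-commute with probability $3/8$---gives $|\mathbb{E}[\epsilon_k]|\le 1/4$ whenever $C_\pi\ne\emptyset$, hence $|\mathbb{E}_\sigma[\epsilon]|\le 2^{-2n}$.

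Combining these facts with the estimate $m_n(m_n-1)\cdots(m_n-p+1) = m_n^p\bigl(1+O(p^2/m_n)\bigr)$ for the count of distinct index assignments per $\pi$, and observing that the non-crossing $\bm w$-compatible pair partitions are exactly the terms indexing $\tau(X_{\bm w})$ in \cref{ex: semicircular}, one arrives at
\[
\mathbb{E}\bigl[\tr{A_{\bm w}^n}\bigr] = \tau(X_{\bm w}) + R,
\]
with $|R|$ bounded by $(\text{number of pair partitions of }2p)\cdot(p^2/m_n+2^{-2n})+O(m_n^{-1})$. Using $(2p-1)!!\le 2^p p!$ and absorbing combinatorial factors into a single constant yields the stated bound $(4e)^p p!\,(m_n^{-1}+2^{-2n})$. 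The main obstacle is the sign estimate for crossing $\pi$ of arbitrary size: for $p=2$ and a single crossing pair the bound $|\mathbb{E}[\epsilon_k]|\le 1/4$ is a direct enumeration, but for larger $p$ the factor $\epsilon_k$ depends on an essentially arbitrary graph $C_\pi$, and proving the bound $1/4$ uniformly in $C_\pi$ requires conditioning on which $\tau_{a,k}$ equal the identity and exploiting that each remaining sign is a mean-zero $\pm 1$ variable given the non-identity Paulis---an argument that is routine but needs care to state cleanly.
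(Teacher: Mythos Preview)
Your expansion and case split track the paper's proof: both reduce to bounding (i) higher-multiplicity index profiles by $(4e)^p p!/m_n$ via a count of even-multiplicity words, and (ii) crossing pair-partition profiles by $2^{-2n}$ via a single-qubit bound of $1/4$ raised to the $n$th power. Since a product of single-qubit Paulis in which each factor appears exactly twice collapses to $\pm I$, your sign $\epsilon_k$ is literally the normalised trace $\tr{X_{w_1}\cdots X_{w_{2p}}}$ the paper works with, so the target inequality $|\mathbb{E}[\epsilon_k]|\le 1/4$ for crossing $\pi$ is the same statement; only your proposed proof of it differs.

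That proposed proof has a gap. Conditioning on which $\tau_{a,k}$ equal the identity and then claiming ``each remaining sign is a mean-zero $\pm 1$ variable given the non-identity Paulis'' is not correct: for $\tau_a,\tau_b$ independent uniform on $\{X,Y,Z\}$ one has $\mathbb{E}[(-1)^{\mathbf{1}\{\tau_a\ne \tau_b\}}]=-\tfrac{1}{3}$, not $0$, and the product over an arbitrary induced subgraph of $C_\pi$ is not obviously bounded in the way you need. The inequality \emph{is} true---identifying $\{I,X,Y,Z\}\cong\mathbb{F}_2^2$ and $c(\cdot,\cdot)$ with the symplectic form gives $\mathbb{E}[\epsilon_k]=2^{-\mathrm{rank}_{\mathbb{F}_2}(M)}$ for $M$ the adjacency matrix of $C_\pi$, and a nonzero alternating form over $\mathbb{F}_2$ has even rank $\ge 2$---so your route is salvageable, but not by the argument you sketch.

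The paper bypasses the crossing graph entirely. After collapsing any adjacent equal pairs (which preserves the crossing/non-crossing type), a crossing word must contain a subword $X_jX_{\mathbf l}X_j$ with $\mathbf l$ non-empty and repeat-free. A direct case check gives $\mathbb{E}[X_jgX_j\mid g]=\mathbf{1}_{\{m(g)=\sigma_I\}}\,\phi(g)\,I$ for $X_j$ uniform and independent of $g$; and since $\mathbf l$ has distinct letters, $m(X_{\mathbf l})$ is itself uniform on $\{\sigma_I,\sigma_X,\sigma_Y,\sigma_Z\}$, whence $|\mathbb{E}[\tr{X_{\mathbf w}}]|\le \mathbb{P}\big(m(X_{\mathbf l})=\sigma_I\big)=1/4$. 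This is shorter and removes exactly the arbitrary-graph obstacle you flag.
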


\begin{theorem}\label{thm:sparse_approximation}
    Fix $n>0$ and let $\alpha(m)$ be the random variable defined above. 
    Let
    \[
    \pdg{s}{t}=\sum_{\bm{w}\in\WW_d}i^{|\bm{w}|}\tau(X_{\bm{w}})\SS^{\bm{w}}_{s,t}(\gamma),
    \]
    be the GUE randomised path development, then there exists a constant $C>0$ depending only on the length of $\gamma$ and the dimension $d$ of the path target space, for which
    \begin{equation}
        \Ex\bigg[\Big(\tr{U^{\text{SP}}_{\gamma_{s,t}}(\alpha(m),n)}-\pdg{s}{t}\Big)^2\bigg]< C\big(m^{-1}+4^{-n}\big).
    \end{equation}
\end{theorem}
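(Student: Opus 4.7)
The plan is to decompose the $L^2$ error via bias-variance as
\[
\Ex\!\left[\bigl|\tr{U^{\text{SP}}_{\gamma_{s,t}}(\alpha(m),n)} - \pdg{s}{t}\bigr|^2\right] = \mathrm{Var}\!\bigl(\tr{U^{\text{SP}}_{\gamma_{s,t}}}\bigr) + \bigl|\Ex\!\bigl[\tr{U^{\text{SP}}_{\gamma_{s,t}}}\bigr] - \pdg{s}{t}\bigr|^{2},
\]
and expand both terms using the series representation
\(
\tr{U^{\text{SP}}_{\gamma_{s,t}}} = \sum_{\bm{w}\in\WW_d} i^{|\bm{w}|}\tr{A_{\bm{w}}}\,\SS^{\bm{w}}_{s,t}(\gamma).
\)
Interchange of expectation and the word-sum is justified exactly as in \Cref{prop: limits_interchange}, using tail bounds on $\|A_\nu^m\|$ analogous to \Cref{lem: sub_gaussian_tail}.

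For the bias term, note first that odd-length words give zero on both sides: $\tau(X_{\bm{w}})=0$ by odd-moment vanishing of the semicircular law, and $\Ex[\tr{A_{\bm{w}}}]=0$ by the symmetry of the Rademacher signs in the random Pauli ensemble. Writing $|\bm{w}|=2p$ and applying the proposition recalled in the preamble gives
\[
\bigl|\Ex[\tr{A_{\bm{w}}}] - \tau(X_{\bm{w}})\bigr| \leq (4e)^{p}\,p!\,(m^{-1}+2^{-2n}).
\]
Combining with the factorial decay \(|\SS^{\bm{w}}_{s,t}(\gamma)|\leq \|\gamma\|_{1\text{-var}}^{2p}/(2p)!\), the elementary inequality \(p!/(2p)! \leq 1/p!\), and the count \(d^{2p}\) of words of length $2p$, yields
\[
\bigl|\Ex[\tr{U^{\text{SP}}_{\gamma_{s,t}}}] - \pdg{s}{t}\bigr| \leq (m^{-1}+2^{-2n})\sum_{p\geq 0}\frac{(4e\,d^{2}\|\gamma\|_{1\text{-var}}^{2})^{p}}{p!} \leq C_{1}(m^{-1}+2^{-2n}),
\]
so that the squared bias is bounded by $C_{1}'(m^{-1}+4^{-n})$ after using $(a+b)^2 \leq 2(a+b)$ for $a,b\in[0,1]$.

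For the variance term, I would expand
\[
\mathrm{Var}\bigl(\tr{U^{\text{SP}}_{\gamma_{s,t}}}\bigr) = \sum_{\bm{w},\bm{v}\in\WW_{d}} i^{|\bm{w}|-|\bm{v}|}\,\mathrm{Cov}\bigl(\tr{A_{\bm{w}}},\tr{A_{\bm{v}}}\bigr)\,\SS^{\bm{w}}_{s,t}(\gamma)\,\SS^{\bm{v}}_{s,t}(\gamma),
\]
and exploit the independence of the Rademacher signs $r_{\nu,i}$ together with the uniform distribution of the Pauli labels $\sigma_{\nu,i}$. Writing $A_{\bm{w}}=m^{-p}\sum r_{\cdot}\sigma_{\cdot}$ with $|\bm{w}|=2p$, the trace of a product of tensored Paulis is $\pm 1$ or $0$, so $\Ex[\tr{A_{\bm{w}}}\overline{\tr{A_{\bm{v}}}}]$ reduces to a pair-partition count of the joint $2p+2q$ indices subject to the Pauli multiplication constraints. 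The key observation is that the \emph{factorising} partitions (those pairing indices within $\bm{w}$ and within $\bm{v}$ separately) reproduce $\Ex[\tr{A_{\bm{w}}}]\overline{\Ex[\tr{A_{\bm{v}}}]}$ exactly, and hence cancel in the covariance. What remains are partitions containing at least one cross-pair, and each such configuration carries either a surplus factor of $m^{-1}$ (from the $m^{-(p+q)}$ normalisation versus a reduction in the number of free index sums) or a surplus factor of $N^{-1}=2^{-n}$ (from the normalised trace constraints enforced by the Pauli-product being proportional to identity). I expect a covariance bound of the form $\mathrm{Cov}\bigl(\tr{A_{\bm{w}}},\tr{A_{\bm{v}}}\bigr) \leq C^{p+q}(p+q)!\,(m^{-1}+2^{-2n})$, which, inserted into the double sum and combined with the factorial decay of the signature via Cauchy--Schwarz, yields a convergent series bounded by $C_{2}(m^{-1}+2^{-2n})$.

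The main obstacle is precisely this second-moment estimate: extending the single-trace counting argument underlying the cited proposition to mixed products $\Ex[\tr{A_{\bm{w}}}\overline{\tr{A_{\bm{v}}}}]$ and isolating the cross-pairing contributions. The factorising partitions must be identified and subtracted cleanly so that the remaining combinatorics track both the sparsity parameter $m$ and the Hilbert-space dimension $N=2^{n}$ in the subleading correction; this is more delicate than the bias argument because one cannot simply iterate a single non-crossing recursion but must instead control all topologies of mixed pair-partitions.
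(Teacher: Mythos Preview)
Your approach is essentially correct and aligns with the paper's strategy. The bias argument is complete and matches the paper's estimate term-by-term. The variance obstacle you flag is precisely the gap, and the paper closes it with a companion proposition to the one you cited: for words $\bm{u},\bm{v}$ with $|\bm{u}|+|\bm{v}|=2p$,
\[
\bigl|\Ex\bigl[\tr{A_{\bm{u}}}\tr{A_{\bm{v}}}\bigr]-\tau(X_{\bm{u}})\tau(X_{\bm{v}})\bigr|\le (4e)^{p}p!\bigl(2m^{-1}+2^{-2n}\bigr),
\]
proved by exactly the mechanism you anticipate: words where a Pauli index appears in both $\bm{u}$ and $\bm{v}$, or where the induced pairing is crossing, pick up a factor $\le 1/4$ per qubit from a lemma stating that $\Ex\bigl[\tr{X_{\mathbf{u}}}\tr{X_{\mathbf{v}}}\bigr]\le 1/4$ whenever at least one uniform Pauli appears in both words. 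The remaining non-crossing, non-overlapping terms reproduce $\tau(X_{\bm{u}})\tau(X_{\bm{v}})$ up to the usual $O(m^{-1})$ deficit from $m(m-1)\cdots(m-p+1)/m^{p}$.

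One organisational difference worth noting: the paper does \emph{not} use the bias--variance split. Instead it expands $(\tr{U}-\langle\gamma\rangle)^2$ directly and groups the summand as
\[
\bigl[\tr{A_{\bm{u}}}\tr{A_{\bm{v}}}-\tau(X_{\bm{u}})\tau(X_{\bm{v}})\bigr]
-\bigl[\tr{A_{\bm{u}}}-\tau(X_{\bm{u}})\bigr]\tau(X_{\bm{v}})
-\tau(X_{\bm{u}})\bigl[\tr{A_{\bm{v}}}-\tau(X_{\bm{v}})\bigr],
\]
so that after taking expectations each bracket is controlled directly by one of the two propositions, with no need to subtract $\Ex[\tr{A_{\bm{u}}}]\Ex[\tr{A_{\bm{v}}}]$. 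This is slightly cleaner than your route because you would otherwise have to bound the covariance by first reducing it to the paper's joint-moment deviation plus products of single-trace deviations; it works, but adds a step. The underlying combinatorics and the resulting series estimates are identical.
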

We now combine the previous arguments to define a quantum path development. We consider a piecewise linear path $\gamma: [a,b] \to \mathbb{R}^d$ given as $L$ specified increment vectors \( \{\Delta_l^1, \Delta_l^2, \ldots, \Delta_l^d\}_{l=1}^L \). 
We define the \textit{quantum path development} which combines (1) the sparse Pauli approximation of the large $N$ GUE development of \Cref{thm:sparse_approximation} and (2) the Trotterisation approximation of the sparse Pauli development of \Cref{lemma:trotter}.

\begin{definition}[Quantum path development]\label{def:quantumcircuit}
The \textit{quantum path development} is the random quantum circuit $U^{Q}_{\gamma}(\alpha(m),n,K)$ defined in terms of Pauli rotations, following~\eqref{eq:densecircuit}:
\begin{equation}
    U^Q_{\gamma}(\alpha(m),n,K) = \prod_{l=1}^L \left[\prod_{\nu=1}^d \prod_{i=1}^m P_{\mathbf{w_{i,\nu}}}(\Delta_l^{\nu} \alpha_{\nu}^{\mathbf{w}_i}/K)\right]^K,
\end{equation}
where $\mathbf{w}_{i,\nu}$ denotes the $i=1,\ldots,m$ non-zero coefficients of $\alpha_{\nu}^{\mathbf{w}}$ for each $\nu$. The circuit depends on parameters $n$ (number of qubits), Trotter subdivisions $K$ and $m$ random Pauli strings. The circuit consists of $L\times K\times d\times m$ Pauli rotations.
\end{definition}

In the following section we demonstrate, as part of the proof of \cref{thm:quantumalg}, that the quantum path development well-approximates the unitary development of a path $\gamma$.

\paragraph{Quantum path signatures.}
We define the quantum path signature as a quantum feature map, in the sense discussed at the beginning of this section~\eqref{eq:qfm}, that allows one to embed a path $\gamma$ into the space of states on qubits. In the present case, $\mathcal{X}$ is a space of paths and we would like to embed the path $\gamma$ into the Hilbert space of a large number of qubits $n$, that is we would like to  consider a mean embedding of the form
\begin{equation}\label{eq:qps}
\int U_{\gamma}(A) | \mathbf{0} \rangle \langle \mathbf{0} | U_{\gamma}^{\dag}(A)\,\dif\mu_{V}^N(A) ,
\end{equation}
where $N$ is large and $\mu_{V}^N$ is the Gaussian measure. To make this well-defined, based on the previous discussion, we see that we may work with an approximate embedding into a truncated Hilbert space of $n$ qubits, with Trotterised time evolution into $K$ Trotter subdivisions and a sparse Pauli ensemble controlled by an integer $m$. 

\begin{definition}[Quantum path signature]\label{def:QPS}
We now define the quantum path signature as a quantum feature map $\mathcal{S}^{Q}(\gamma): \mathcal{X} \to S(\mathcal{H}_n)$ embedding paths $\gamma$ into a quantum state (density matrix) by the circuit of \cref{def:quantumcircuit}, namely
\begin{equation}
    \mathcal{S}^{Q}(\gamma) := \mathbb{E}_{\alpha(m)}\Big[U_{\gamma}^Q(\alpha(m),n,K) |\mathbf{0} \rangle \langle \mathbf{0}| U_{\gamma}^Q(\alpha(m),n,K)^{\dag}\Big],
\end{equation}
In applications, we propose that the parameters $n$, $K$ and $m$ should be chosen to scale as in \cref{thm:quantumalg}.
\end{definition}

\subsection{Quantum signature kernel}\label{subsec:quantumalg}

We now introduce a quantum algorithm, making use of the quantum path development circuit, to compute the trace of the GUE randomised path development\footnote{We drop the $\mu_{\infty}$ subscript compared with Definition~\eqref{eq:tracepathdevelopment} since we only consider here the large $N$ GUE path development.} 
\begin{equation}
    \langle \gamma \rangle := \lim_{N\to \infty}\mathbb{E}\left[\tr{U_{\gamma}(A,N)}\right].
\end{equation}
In the subsection that follows, we discuss how the algorithm may be further used to approximate the GUE kernel on path space $k^{\text{GUE}}:\mathcal{X} \times \mathcal{X} \to \mathbb{R}$ defined by
\begin{equation}
    k^{\text{GUE}}(\sigma,\tau) := \lim_{N\to \infty} \mathbb{E}\left[\tr{U_{\sigma}(A,N)U_{\tau}(A,N)^{\dag}}\right].
\end{equation}
as discussed in Section~\ref{subsec:sigkernels}. We define the \textit{quantum signature kernel} as the output of this quantum algorithm.

\begin{figure}
    \centering
    \[
    \begin{quantikz}
    \ket{0} \,& \gate{H} & \ctrl{1} & \gate{H} & \meter{} \\
    I/2^n \,& \qw & \gate{U^Q_{\gamma}(\alpha(m),n,K)} & \qw & \qw
    \end{quantikz}
    \]
    \caption{One clean qubit (DCQ1) circuit with the quantum path development.}
    \label{fig:one-clean-qubit}
\end{figure}
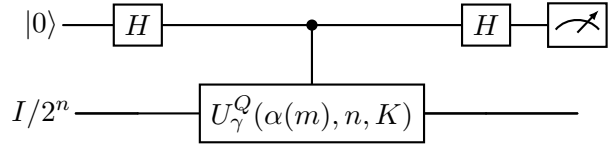

\paragraph{Problem statement.}
We consider a continuous piecewise linear path $\gamma$ on a partitioned time interval $0 = t_0 < t_1 < \cdots < t_L = T$ mapping into $\mathbb{R}^d$. The input data is thus the set of increments $\{\Delta_{l}^{\nu}\}$ with $l=1,\ldots,L$ and $\nu = 1,\ldots, d$. The computational problem is then to output a quantity~$Q$ that is $\epsilon$-additively close to the path development $\langle \gamma \rangle$ with small failure probability $\delta$.

\paragraph{Computation.}
We now define a quantum algorithm $\mathcal{A}(\{\Delta_{l}^{\nu}\},\epsilon,\delta)$ in the one-clean-qubit model of quantum computation~\cite{knill1998power} with random output~$Q$.
The algorithm takes as input the increments $\{\Delta_{l}^{\nu}\}$ together with an additive error~$\epsilon$ and a probability of failure~$\delta$.
We first take a sample~$\tilde{\alpha}(m)$ of the random variable~$\alpha(m)$ described below Definition~\ref{def:SPS}. 
We then embed the random quantum circuit $U^{Q}(\tilde{\alpha}(m), n, K)$ from Definition~\ref{def:quantumcircuit} as the controlled unitary in the one-clean-qubit circuit illustrated in \cref{fig:one-clean-qubit}. 
The initial state of the system is
\begin{equation}
    \rho_0 = |0\rangle \langle 0 | \otimes \frac{I}{2^n},
\end{equation}
and, after applying a Hadamard gate to the first qubit, a sample of the quantum path development circuit and an additional Hadamard gate, the state of the system is 
\begin{equation}
    \rho_1 = \frac{1}{2^{n+1}}\left[
    (|0\rangle\langle0| +|1\rangle\langle1|)\otimes I + |0\rangle\langle1| \otimes U_{\gamma}^Q(\tilde{\alpha}(m),n,K)^{\dag} + |1\rangle\langle0| \otimes U_{\gamma}^Q(\tilde{\alpha}(m),n,K)\right].
\end{equation}
Measuring the first qubit in the computational basis
then yields the probability of observing "$1$" by Born's rule~\eqref{eq:Born_rule},\footnote{Recall that the unitary path development is real valued.} 
\begin{equation}
%    \mathbb{P}(X=1) := 
\tr{|1\rangle \langle 1| \rho_1} = \frac{1}{2}
\left(1-\tr{U_{\gamma}^Q(\tilde{\alpha}(m),n,K)}\right).
\end{equation}
Finally, we define the output of the algorithm to be the estimator of $\tr{U_{\gamma}^Q(\tilde{\alpha}(m),n,K)}$. Namely
\begin{equation}
    Q_{\Delta}(M,m,n,K) := 1 - \frac{2 T_{\Delta}(M,m,n,K)}{M},
\end{equation}
where $T_\Delta(M,m,n,K)$ denotes the number of ones observed when initialising, running and measuring the circuit $M$ times. The output depends on the number of qubits $n$; the Trotter approximation $K$; the number of Pauli strings $m$; and the number of times $M$ to run the circuit of Figure~\ref{fig:one-clean-qubit}. \cref{thm:quantumalg} explains how to scale these parameters as functions of $\epsilon$ and $\delta$. The algorithm may be summarised as follows:\footnote{This algorithm is named \texttt{QSigKer} because of its direct application to computing a kernel in the following subsection.}

\begin{algorithm}[H]\label{alg:qalg}
\caption{\texttt{QSigKer} -- Quantum Signature Kernel Estimator $\mathcal{A}(\{\Delta_{l}^{\nu}\},\epsilon,\delta)$}
\begin{algorithmic}[1]
\Require Data of increments $\{\Delta_{l}^{\nu}\}$ and $\epsilon, \delta > 0$. Parameters: integers $m, n, K, M > 0$ for Pauli string number, qubit number, Trotter subdivisions and number of circuit runs respectively. These are selected according to \cref{thm:quantumalg}.
\Ensure Approximation $Q_\Delta(M,m,n,K)$ of $\langle \gamma \rangle$.
\State Set counter $T \gets 0$
\For{$i = 1$ to $M$}
    \State Sample $\tilde{\alpha}(m)$ from random variable $\alpha(m)$
    \State Initialise system in state $\rho_0 = |0\rangle\langle 0| \otimes \frac{I}{2^n}$
    \State Apply $H$ to first qubit. Apply unitary $U_{\gamma}^{Q}(\tilde{\alpha}(m), n, K)$ of definition~\ref{def:quantumcircuit}. Apply $H$ to first qubit. 
    \State Measure first qubit in computational basis obtaining outcome $x \in \{0,1\}$
    \If{$x = 1$}
        \State $T \gets T + 1$
    \EndIf
\EndFor
\State \Return $Q_\Delta(M,m,n,K) \gets 1 - \frac{2T}{M}$
\end{algorithmic}
\end{algorithm}

\begin{theorem}\label{thm:quantumalg}
    The positive integer parameters $m$, $n$, $K$ and $M$ 
    (Pauli string number, qubit number, Trotter subdivisions, number of circuit runs) may be chosen such that the output of $\mathcal{A}(\{\Delta_{l}^{\nu}\},\epsilon,\delta)$ estimates the unitary path development $\langle \gamma \rangle$ to within additive error $\epsilon$ with high probability $1-\delta$, namely such that
    \begin{equation}
        \mathbb{P}(|Q_{\Delta}(M,m,n,K) - \langle \gamma \rangle| \ge \epsilon) < \delta.
    \end{equation}
    The circuit requires $\text{log}(1/\epsilon,1/\delta)$ qubits and $\text{poly}(1/\epsilon,1/\delta)$ Pauli rotations.
\end{theorem}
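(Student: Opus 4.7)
The plan is to decompose the estimation error into three independent sources and allocate a budget of $\epsilon/3$ to each. Writing the target as $\langle\gamma\rangle$ and using $\mathbb{E}[Q_{\Delta}(M,m,n,K)] = \mathbb{E}_{\alpha(m)}[\tr{U^{Q}_\gamma(\alpha(m),n,K)}]$ (which follows from Born's rule together with the fact that each of the $M$ shots resamples $\tilde\alpha(m)$ independently), we introduce the intermediate quantity $\mathbb{E}_{\alpha(m)}[\tr{U^{\text{SP}}_\gamma(\alpha(m),n)}]$ and apply the triangle inequality to obtain
\begin{equation*}
|Q_{\Delta}-\langle\gamma\rangle|\le\underbrace{|Q_{\Delta}-\mathbb{E}_{\alpha(m)}[\tr{U^{Q}_\gamma}]|}_{\text{(I) shot noise}}+\underbrace{|\mathbb{E}_{\alpha(m)}[\tr{U^{Q}_\gamma-U^{\text{SP}}_\gamma}]|}_{\text{(II) Trotter}}+\underbrace{|\mathbb{E}_{\alpha(m)}[\tr{U^{\text{SP}}_\gamma}]-\langle\gamma\rangle|}_{\text{(III) sparse GUE}}.
\end{equation*}

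For (I), each of the $M$ measurement outcomes is a bounded $\{0,1\}$-valued random variable, so Hoeffding's inequality gives $\mathbb{P}(|Q_{\Delta}-\mathbb{E}[Q_{\Delta}]|\ge\epsilon/3)\le 2\exp(-M\epsilon^{2}/18)$, and imposing this to be at most $\delta$ yields $M=\mathcal{O}(\log(1/\delta)/\epsilon^{2})$. For (II), since the normalised trace satisfies $|\tr{A}|\le\|A\|_{\mathrm{op}}$, Lemma~\ref{lemma:trotter} applied pathwise to the sparse ensemble (for which $|P|_{\alpha}=m$) yields $|\mathbb{E}_{\alpha(m)}[\tr{U^{Q}_\gamma-U^{\text{SP}}_\gamma}]|\le m\Delta_{\gamma}^{2}/K+\mathcal{O}(1/K^{2})$; this is at most $\epsilon/3$ provided $K=\mathcal{O}(m\Delta_{\gamma}^{2}/\epsilon)$. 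For (III), Jensen's inequality applied to the $L^{2}$ bound of Theorem~\ref{thm:sparse_approximation} yields $|\mathbb{E}_{\alpha(m)}[\tr{U^{\text{SP}}_\gamma}]-\langle\gamma\rangle|\le\sqrt{C(m^{-1}+4^{-n})}$, so forcing this below $\epsilon/3$ requires $m=\mathcal{O}(1/\epsilon^{2})$ and $n=\mathcal{O}(\log(1/\epsilon))$.

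Combining the three bounds, the total parameter count is $n=\mathcal{O}(\log(1/\epsilon))$ qubits (plus one clean qubit), $M=\mathcal{O}(\log(1/\delta)/\epsilon^{2})$ circuit repetitions, and, from Definition~\ref{def:quantumcircuit}, a total of $L\times d\times K\times m=\mathcal{O}(Ld\Delta_{\gamma}^{2}/\epsilon^{5})$ Pauli rotations per circuit realisation, which is polynomial in $1/\epsilon$ and independent of $1/\delta$ (except through $M$). This establishes both the error bound and the resource counts claimed.

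\paragraph{Main obstacle.} The delicate step is (I): one must verify that the two-stage sampling scheme — first drawing $\tilde\alpha(m)$, then running and measuring the circuit a single time — produces $M$ genuinely i.i.d. Bernoulli outcomes with marginal success probability $\tfrac{1}{2}(1-\mathbb{E}_{\alpha(m)}[\tr{U^{Q}_\gamma}])$, so that Hoeffding applies directly without any variance term coming from the law of $\alpha(m)$. If instead one reused a single $\tilde\alpha(m)$ across all $M$ shots, an additional variance contribution would appear and would require controlling $\mathrm{Var}_{\alpha(m)}(\tr{U^{Q}_\gamma})$; using independent resamples avoids this. A secondary subtlety is that Lemma~\ref{lemma:trotter} is stated under a coefficient bound consistent with the $1/\sqrt{m}$ scaling of the sparse ensemble so that the quoted commutator bound absorbs into $|P|_{\alpha}=m$; one should verify this substitution is clean before concluding.
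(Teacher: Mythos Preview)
Your proposal is correct and follows essentially the same three-term decomposition as the paper's own proof: shot noise via Hoeffding, Trotter error via Lemma~\ref{lemma:trotter}, and sparse-GUE error via Theorem~\ref{thm:sparse_approximation}. If anything you are slightly more careful than the paper --- you explicitly invoke Jensen to pass from the $L^2$ bound of Theorem~\ref{thm:sparse_approximation} to a bound on the mean, and you decouple $m$ from $n$ (the paper simply sets $m=n$), which is what is actually needed to obtain the claimed logarithmic qubit count.
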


\begin{proof}
See Appendix~\ref{subsec:quantumalgproof}
\end{proof}

\paragraph{Classical algorithm.}
Let us now define an analogous classical algorithm and perform the complexity analysis for comparison. The input data remains unchanged---the set of path increments \( \{\Delta_l^1, \Delta_l^2, \ldots, \Delta_l^d\}_{l=1}^L \). We define a probabilistic classical algorithm, \( \mathcal{A}^{\text{c}}(\{\Delta_i^{\nu}\},\epsilon,\delta) \), which, given the input data, an additive error $\epsilon$ and a probability of failure $\delta$, produces a random output that is $\epsilon$ close to the path development with high probability $1-\delta$. We follow a standard Monte Carlo approach, defined precisely as:\footnote{The truncated product is a poor practical approach to compute a matrix exponential, in particular it is numerically unstable (see for example \cite{moler2003nineteen} for a thorough discussion of numerical matrix exponentiation methods) but for our purpose of discussing separation between classical and quantum approaches it is sufficient.}

\begin{algorithm}[H]\label{alg:classical}
\caption{Classical Sampling $\mathcal{A}^{\text{c}}$}
\begin{algorithmic}
\Require Data: path increments $\{\Delta_l^1,\ldots,\Delta_l^d\}_{l=1}^L$ and $\epsilon,
\delta > 0$. Parameters: integer $N>0$, $M>0$ and $K>0$ are chosen dependent on $\epsilon$ and $\delta$ according to \cref{thm:classicalalg}.
\For{$m = 1$ to $M$}
\State Draw $d$ independent $N \times N$ GUE matrices $\{A_\nu\}_{\nu=1}^d$
    \For{$l = 1$ to $L$}
        \State Compute $A_l := \sum_{\nu=1}^d \Delta^{\nu}_lA_\nu$
        \State Compute $U_l^K := (1+A_l/K)^K$ using a truncated approximation dependent on $K>0$.
    \EndFor
    \State Compute $Q_m := \mathrm{Tr}(U_1^K U_2^K \cdots U_L^K)$
\EndFor
\State \Return $Q_{\Delta}(M,N,K) := \frac{1}{M} \sum_{m=1}^M Q_m$
\end{algorithmic}
\end{algorithm}

\begin{theorem}\label{thm:classicalalg}
    The output of $\mathcal{A}^{\text{c}}(\{\Delta_i^{\nu}\})$ satisfies
    \begin{equation}
        \mathbb{P}\left(\left|Q_{\Delta}(N,M,K)-\langle \gamma \rangle)\right|>\epsilon \right) < \delta,
    \end{equation}
    for small $\epsilon > 0$ provided that $M > 2 /\epsilon^2 \log(2/ \delta)$ samples are taken, the dimension $N$ of matrices satisfies $N > e^{2 \Delta_{\gamma}}/\epsilon^2$ and the matrix exponential is truncated to $K > \Delta_\gamma e^{2 \Delta_{\gamma}}/\epsilon$.
\end{theorem}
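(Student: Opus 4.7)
The plan is to decompose the total error via the triangle inequality into three independently controllable sources. Writing $\bar Q := Q_\Delta(M,N,K)$, $\mu_{N,K} := \mathbb{E}[Q_m]$, $\mu_N := \mathbb{E}[\tr{U_\gamma(A,N)}]$ and $\mu_\infty := \langle \gamma \rangle$, one has
\[
|\bar Q - \mu_\infty| \le \underbrace{|\bar Q - \mu_{N,K}|}_{\text{Monte Carlo}} + \underbrace{|\mu_{N,K} - \mu_N|}_{\text{truncation}} + \underbrace{|\mu_N - \mu_\infty|}_{\text{finite-}N},
\]
and I would force each piece to be at most $\epsilon/3$ with failure probability at most $\delta/3$, then combine by a union bound. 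A preliminary step is to establish, using the sub-Gaussian bound of \cref{lem: sub_gaussian_tail}, a high-probability event on which $\max_\nu\|A_\nu\| \le C$ uniformly in $N$, so that on this event $\|e^{A_l}\|\le e^{C\|\Delta_l\|}$ and hence $|Q_m| \le e^{C\Delta_\gamma}$.

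For the Monte Carlo piece, Hoeffding's inequality applied to the i.i.d. samples $Q_1,\ldots,Q_M$ on the typical spectral event yields $\mathbb{P}(|\bar Q - \mu_{N,K}| > \epsilon/3) \le 2\exp(-cM\epsilon^2)$, giving the stated dependence $M \gtrsim \epsilon^{-2}\log(1/\delta)$. For the truncation piece, the standard telescoping identity $\prod_l B_l - \prod_l E_l = \sum_{l=1}^L B_1\cdots B_{l-1}(B_l-E_l)E_{l+1}\cdots E_L$, with $B_l := (I+A_l/K)^K$ and $E_l := e^{A_l}$, combined with the per-segment bound
\[
\|(I+A_l/K)^K - e^{A_l}\| \le \frac{\|A_l\|^2}{2K} e^{\|A_l\|},
\]
gives on the spectral event a total truncation error of order $\Delta_\gamma e^{C\Delta_\gamma}/K$, consistent with the stated $K \gtrsim \Delta_\gamma e^{2\Delta_\gamma}/\epsilon$.

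For the finite-$N$ bias, I would expand $\tr{U_\gamma(A,N)}$ in the signature basis exactly as in~\eqref{eq:tracedevelopment}, invoke the classical fact that for the GUE the expected normalised trace of a word of length $k$ converges to its free semicircular limit with rate $O(k^2/N^2)$ (or at worst $O(1/N)$), and then sum the word-by-word errors using the factorial decay of the signature together with the sub-Gaussian trace moment bound already used in the proof of \cref{prop: limits_interchange}. The result is $|\mu_N - \mu_\infty| = O(e^{C\Delta_\gamma}/\sqrt N)$, matching the requirement $N \gtrsim e^{2\Delta_\gamma}/\epsilon^2$.

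The main obstacle is making the Hoeffding step rigorous in spite of $Q_m$ being deterministically unbounded: the crude $L^\infty$ bound $e^{C\Delta_\gamma}$ only holds on the typical spectral event, and one must carefully account for the (exponentially small in $N$) probability that $\|A_\nu\|$ exits this event without losing the clean $\epsilon^{-2}\log(1/\delta)$ scaling of $M$. A parallel issue arises in the telescoping for the truncation bound, where one must verify that the non-unitary factors $B_l, E_l$ do not introduce an uncontrolled $L$-dependence beyond the $e^{2\Delta_\gamma}$ already captured; handling both consistently amounts to a careful conditioning argument that propagates the same spectral event through all three bounds.
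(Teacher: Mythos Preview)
Your proposal follows essentially the same architecture as the paper's proof: a triangle-inequality decomposition into a Monte Carlo piece (handled by Hoeffding), a matrix-exponential truncation piece (handled by telescoping plus a per-segment error bound), and a finite-$N$ bias piece, with the two bias terms controlled deterministically so that only the Hoeffding probability survives. The paper packages these as three short lemmas (\cref{lemma:classicalhoeff}, \cref{lemma:classicalalgexp}, \cref{lemma:classicalalgdev}) and combines them exactly as you describe; in particular it also uses the GUE spectral-norm control (via Tracy--Widom rather than your appeal to \cref{lem: sub_gaussian_tail}) to obtain the $e^{2\Delta_\gamma}$ factor in the truncation bound.

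One remark: the obstacles you flag---that $Q_m$ is not deterministically bounded so Hoeffding needs a conditioning argument, and that the telescoping involves non-unitary factors whose norms must be tracked---are real, and the paper's sketch is in fact rather cavalier about both (it simply applies Hoeffding and states the final truncation inequality without the conditioning step). So your identification of where the work lies is accurate; the paper does not supply a materially different or cleaner mechanism for resolving them.
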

\begin{proof}
    See Appendix~\ref{subsec:classicalg}
\end{proof}

In \cref{thm:quantumalg} we argue that the resources required to approximate the development of a path with one-variation to accuracy $\epsilon$ with the quantum algorithm are polynomial in $1/\epsilon$ in time and logarithmic in $1/\epsilon$ in space.
In comparison, the classical algorithm outlined above requires $L \times M \times K$ matrix multiplications and hence, according to \cref{thm:classicalalg}, requires polynomial resources in $\epsilon$ in both time and space.

\paragraph{\texttt{QSigKer}.}
We now consider the computation of the GUE signature kernel using the quantum algorithm $\mathcal{A}$. In the notation of the previous section, the GUE kernel of two paths $\sigma$ and $\tau$ may be expressed as 
\begin{equation}
    k^{\text{GUE}}(\sigma,\tau) = \langle \gamma \rangle,
\end{equation}
where $\gamma$ denotes the concatenation $\gamma = \sigma \star \overleftarrow{\tau}$. This follows from the property that for unitary path developments we have $U_{\gamma} = U_{\sigma}U_{\tau}^{\dag}$. We conclude the work with the following definition:
\begin{definition}(\texttt{QSigKer})
The probabilistic quantum signature kernel $k_{\epsilon,\delta}^Q: \mathcal{X} \times \mathcal{X} \to \mathbb{R}$ is defined by
\begin{equation}
    k^{Q}_{\epsilon,\delta}(\sigma,\tau) := 1 - \frac{2 T_\Delta(M,m,n,K)}{M},
\end{equation}
where $T_\Delta(M,m,n,K)$ is the output of the algorithm of theorem \ref{thm:quantumalg}. In applications, the hyperparameters should be chosen to scale according to \cref{thm:quantumalg}.
\end{definition}

\section{Conclusion}
In this work, we have developed a novel bridge between the theory of path signatures and matrix models in quantum field theory, uncovering structural connections with implications for both machine learning and quantum computing. By interpreting randomised path developments through the lens of non-commutative probability and unitary matrix models, we derived a class of loop equations governing their evolution. This formalism naturally led us to define the quantum path signature and the quantum signature kernel as quantum analogues of their classical counterparts, realised via quantum circuits acting on qubits. In particular, we showed that in the Gaussian case, these constructions admit an efficient quantum implementation using random ensembles of Pauli strings. This provides a concrete quantum algorithm for approximating signature kernels, with potential applications in quantum-enhanced time series analysis.

\paragraph{Future directions.}
This work represents an initial step toward importing tools from theoretical physics and quantum computing into the study of path signatures in machine learning. As such, we believe our work opens several avenues for future research, including the exploration of other matrix model potentials, deeper study of universality phenomena in quantum path signatures, and the practical integration of these quantum kernels into hybrid classical-quantum machine learning pipelines. 

On the matrix model side, the theory of \emph{topological recursion} \cite{eynard2007invariants} offers a powerful framework for computing \(1/N\) corrections to matrix model observables. A natural extension would be to explore the ``loop'' version of our model by applying topological recursion to correlators of the form
\begin{equation}
\begin{split}
    W_n\left(\gamma_1,\ldots,\gamma_n;\{g_k\}\right) &= \left\langle \mathrm{tr}\left(\textrm{P}\exp \int_{\gamma_1} A \right) \cdots \mathrm{tr}\left(\textrm{P}\exp \int_{\gamma_n} A \right) \right\rangle \\
    &=\sum_{g\ge0}N^{2-2g-n} W_{g,n}\left(\gamma_1,\ldots,\gamma_n;\{g_k\}\right),
\end{split}    
\end{equation}
where the insertions correspond to a collection of paths \(\gamma_1, \ldots, \gamma_n\). This approach would enable a systematic investigation of finite-\(N\) corrections to unitary path developments and their induced kernels.

On the machine learning side, many open questions remain concerning the statistical and computational properties of the interacting path kernels introduced in Section~\ref{sec:unitarydevelopments}. In particular, it would be valuable to establish theoretical guarantees such as generalisation bounds or sample complexity estimates. Extending these results to the quantum path signature defined in Definition~\ref{def:QPS} is especially compelling, as it could provide a rigorous foundation for the use of quantum path features in downstream learning tasks.

\printbibliography[heading=bibintoc,title={References}]
\clearpage
\appendix
\section{Technical proofs}\label{app: proofs}
The following lemma may be derived by keeping track of the constants in the various equivalent characterisations of sub-Gaussian random variables, as in \cite[Proposition~2.5.2]{Vershynin} for example. We include a self-contained proof for completeness.
\begin{lemma}\label{lem: sub_gaussian_equiv}
    Let $(X_N)_{N\geq 1}$ be a sequence of real-valued random variables. Then the following are equivalent
    \begin{enumerate}[label=(\arabic*)]
        \item There exist $\alpha>0$ and $t_0<\infty$ such that for all $t\geq t_0$ and $N\geq 1$,
        $        \P(X_N\geq t)\leq e^{-\alpha Nt^2}$;
        \item There exists $C>0$ such that for all $s$ sufficiently small and all $N\geq 1$, then
        $\Ex\big[\mathrm{exp}(sNX_N^2)\big]\leq C^N$.
    \end{enumerate}
\end{lemma}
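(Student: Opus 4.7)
The claim is a standard pair of implications relating a tail bound and a moment (Laplace transform) bound for a sequence of random variables, with the twist that the relevant scale is $N$. Since the application in \Cref{lem: sub_gaussian_tail} is to $X_N=\lambda_{\max}^N(\mathbf{A})$, we may proceed as if $X_N\ge 0$ (otherwise the one-sided hypothesis in $(1)$ is not enough to control $X_N^2$ and one would need to assume a corresponding lower-tail bound).

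\textbf{Plan for $(1)\Rightarrow(2)$.} I would start from the layer-cake identity for a nonnegative random variable,
\begin{equation*}
\Ex\!\left[e^{sNX_N^2}\right]=1+\int_{0}^{\infty} sN\,e^{sNu}\,\P\!\left(X_N^2\ge u\right)\dif u,
\end{equation*}
and split the integral at $u=t_0^{2}$. On $[0,t_0^{2}]$ the tail probability is bounded trivially by $1$, producing a contribution at most $e^{sNt_0^2}$. On $[t_0^2,\infty)$, I substitute $\P(X_N^2\ge u)=\P(X_N\ge\sqrt{u})\le e^{-\alpha Nu}$ from $(1)$, which gives
\begin{equation*}
\int_{t_0^2}^{\infty} sN\,e^{(s-\alpha)Nu}\dif u \;=\; \frac{s}{\alpha-s}\,e^{(s-\alpha)Nt_0^2},
\end{equation*}
finite provided $s<\alpha$. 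Summing the two contributions and taking $s$ small enough (e.g.\ $s\le \alpha/2$) yields a bound of the form $C^N$ with $C=C(\alpha,s,t_0)$ independent of $N$, which is the content of $(2)$.

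\textbf{Plan for $(2)\Rightarrow(1)$.} This is a direct Chernoff-type argument: for any $t>0$ and small $s>0$ as in $(2)$,
\begin{equation*}
\P(X_N\ge t)\;\le\;\P\!\left(e^{sNX_N^2}\ge e^{sNt^2}\right)\;\le\; e^{-sNt^2}\,\Ex\!\left[e^{sNX_N^2}\right]\;\le\; e^{-N(st^2-\log C)}.
\end{equation*}
Choosing $t_0$ so that $st_0^2\ge 2\log C$ (hence $\log C\le st^2/2$ for all $t\ge t_0$) gives $\P(X_N\ge t)\le e^{-\tfrac{s}{2}Nt^2}$, which is $(1)$ with $\alpha=s/2$.

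\textbf{Expected obstacles.} There is no serious analytical difficulty here; the entire argument is constant-tracking. The only care needed is (i) choosing $s<\alpha$ in the first direction so that the tail integral converges, and absorbing the polynomial prefactor $s/(\alpha-s)$ together with the boundary term $e^{sNt_0^2}$ into a single geometric bound $C^N$; and (ii) in the second direction, making sure the threshold $t_0$ is chosen large enough that the constant $\log C$ is dominated by $st^2/2$ uniformly in $N$. Both are routine, so the lemma follows cleanly.
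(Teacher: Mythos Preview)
Your proposal is correct and follows essentially the same approach as the paper: both directions use the layer-cake/Fubini identity with a split at $t_0$ for $(1)\Rightarrow(2)$, and a Chernoff/Markov argument for $(2)\Rightarrow(1)$, with only cosmetic differences in the change of variables and in how the constants $C$, $\alpha$, $t_0$ are chosen. Your explicit remark that nonnegativity of $X_N$ (or a two-sided tail bound) is implicitly needed to pass from $\P(X_N\ge t)$ to control of $X_N^2$ is a valid clarification that the paper leaves unstated.
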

\begin{proof}
    Assume (1), then
    \begin{align*}
        \Ex\big[\mathrm{exp}(sNX_N^2)\big]&=\int_0^\infty \P\big(\mathrm{exp}(sNX_N^2)\geq t\big)\dif t\\
        &\leq 1 + 2\int_0^\infty \P\big(X_N)\geq u\big) sNue^{sNu^2}\dif u\\
        &1 + \int_0^{t_0}2sNue^{sNu^2}\dif u+\int_{t_0}^\infty 2sNue^{-(\alpha-s)Nu^2}\dif u\\
        &=1 + e^{sNt_0^2}+\frac{s}{\alpha - s}
        \leq C^N,
    \end{align*}
    for $C\geq 1+e^{st_0^2}+\tfrac{s}{\alpha-s}$ and $s<\alpha$. 
    Conversely, assume (2), then Chebyshev's inequality yields
    $$
    \P(X_N\geq t)\leq e^{-sNt^2}\Ex\big[\mathrm{exp}(sNX_N^2)\big]
        \leq e^{-sNt^2} C^N
        \leq e^{-\alpha Nt^2},
    $$
    whenever $\alpha < s$ and $t\geq t_0 \sqrt{C/(s-\alpha)}$.
\end{proof}
Recall that the spectrum of an $N\times N$ GUE matrix has the distribution
\begin{equation}\label{eq: gue_spectrum}
\sigma^N(\dif\underline{\lambda})=\frac{1}{Z^N}\prod_{1\leq i<j\leq N}|\lambda_i-\lambda_j|^2
\exp\left\{-\tfrac{N}{2}\sum_{i=1}^N\lambda_i^2\right\}\dif\underline{\lambda},
\end{equation}
where $Z^N$ is the normalisation constant given by
\begin{equation}\label{eq: normalising_constant}
    Z^N=(2\pi)^{\tfrac{N}{2}}N^{-\tfrac{N^2}{2}}\prod_{i=1}^Ni!,
\end{equation}
see \cite[Property 3.1]{Wigner_entropy} with $\beta = 2$.
The following pair of lemmas are key.
\begin{lemma}\label{lem: gue_parition}
    There exists $C>0$ such that, for all $N\geq 2$,
$\frac{Z^{N-1}}{Z^N}\leq e^{CN}$.
\end{lemma}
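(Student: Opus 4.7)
The strategy is a direct calculation using Stirling's formula: substitute the explicit expression for $Z^N$ from~\eqref{eq: normalising_constant}, take logarithms, and show by careful asymptotic expansion that the dominant $N\log N$ contributions cancel, leaving an $O(N)$ remainder. Concretely, from~\eqref{eq: normalising_constant} one has
\[
\frac{Z^{N-1}}{Z^N} \;=\; \frac{(2\pi)^{-1/2}\, N^{N^2/2}}{N!\,(N-1)^{(N-1)^2/2}},
\]
so that
\[
\log\frac{Z^{N-1}}{Z^N} \;=\; -\tfrac{1}{2}\log(2\pi) \;+\; \tfrac{N^2}{2}\log N \;-\; \tfrac{(N-1)^2}{2}\log(N-1) \;-\; \log N!.
\]

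\noindent The plan is then to estimate each of the three non-trivial summands. First, apply Stirling to write $\log N! = N\log N - N + \tfrac{1}{2}\log(2\pi N) + O(1/N)$. Next, treat the ``power'' term by writing $(N-1)^2 = N^2 - 2N + 1$ and expanding $\log(N-1) = \log N + \log(1 - 1/N) = \log N - \tfrac{1}{N} - \tfrac{1}{2N^2} + O(1/N^3)$. Multiplying out yields
\[
\tfrac{(N-1)^2}{2}\log(N-1) \;=\; \tfrac{N^2}{2}\log N - N\log N + \tfrac{1}{2}\log N - \tfrac{N}{2} + O(1),
\]
so that the $\tfrac{N^2}{2}\log N$ contributions cancel against the corresponding term in the numerator, leaving
\[
\tfrac{N^2}{2}\log N - \tfrac{(N-1)^2}{2}\log(N-1) \;=\; N\log N + \tfrac{N}{2} + O(\log N).
\]

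\noindent Combining these with the Stirling estimate, the $N\log N$ terms cancel a second time, and we obtain
\[
\log\frac{Z^{N-1}}{Z^N} \;=\; \tfrac{3N}{2} + O(\log N),
\]
which is bounded by $CN$ for some absolute constant $C>0$ and all $N\ge 2$ (adjusting $C$ to absorb the finitely many small-$N$ cases).

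\noindent\textbf{Main obstacle.} There is no conceptual difficulty; the entire argument is an asymptotic expansion, and the only point requiring care is that the $\tfrac{N^2}{2}\log N$ terms must be tracked to sufficient precision to see that they cancel exactly, and that the resulting $N\log N$ contributions from the power term and from Stirling also cancel. Expanding $\log(1 - 1/N)$ to at least second order in $1/N$ is essential, since the ``correction'' from that expansion is precisely what produces the $N/2$ term surviving in the final bound.
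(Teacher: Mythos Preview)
Your proof is correct and takes essentially the same approach as the paper: both substitute the explicit formula~\eqref{eq: normalising_constant}, observe that the $\tfrac{N^2}{2}\log N$ and $N\log N$ contributions cancel after applying Stirling, and conclude that the remainder is $O(N)$. The only cosmetic difference is that the paper works with direct inequalities (bounding $N^2\log N-(N-1)^2\log(N-1)\le 2N\log N+2N$ and using the Stirling lower bound $N!\ge\sqrt{2\pi N}\,e^{N\log N-N}$) to obtain $\frac{Z^{N-1}}{Z^N}\le e^{2N}/(2\pi\sqrt{N})$, whereas you carry out the asymptotic expansion to identify the sharper constant $\tfrac{3}{2}N$ and then absorb small~$N$ into~$C$.
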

\begin{proof}
From~\eqref{eq: normalising_constant}, we have
\begin{align*}
    \frac{Z^{N-1}}{Z^N}&=\frac{(N-1)^{-\frac{(N-1)^2}{2}}N^{\tfrac{N^2}{2}}}{\sqrt{2\pi}\Gamma(N+1)}
    =\frac{\exp{\tfrac{1}{2}\big(N^2\log(N)-(N-1)^2\log(N-1)\big)}}{\sqrt{2\pi}N!}.
\end{align*}
It can be checked that for every $N\geq 2$,
$N^2\log(N)-(N-1)^2\log(N-1)\leq 2N\log(N) + 2N$,
and hence
\[
\frac{Z^{N-1}}{Z^N}\leq \frac{\exp{N+N\log N}}{\sqrt{2\pi}\Gamma(N+1)}.
\]
Using the bound from Stirling's formula
$N!\geq \sqrt{2\pi N}e^{N\log N - N}$,
we obtain, 
for suitably large $C>0$,
\[
\frac{Z^{N-1}}{Z^N}\leq \frac{e^{2N}}{2\pi\sqrt{N}}\leq e^{CN}.
\]
\end{proof}
The proof of the next lemma follows the strategy of \cite[Lemma~6.2]{spin_glasses}. In the following we write $\underline{\lambda}^N\sim \sigma^N$ to mean that the vector $\underline{\lambda} \coloneqq (\lambda_1,\dots,\lambda_N)$ is distributed according to $\sigma^N$, as defined in~\eqref{eq: gue_spectrum}.
\begin{lemma}[Tails of the GUE]\label{lem: gue_tail}
    Let $\underline{\lambda}^N\sim \sigma^N$, then there exists $\alpha>0$ and $t_0>0$ such that for all  $t\geq t_0> 0$ and~$N\in \N$, it holds that 
    \begin{equation}\label{eq: gue_tail}
        \P_{\sigma_N}\Big(\max_{i=1,\ldots, N}|\lambda_i|\geq t\Big)\leq e^{-\alpha N t^2}.
    \end{equation}
\end{lemma}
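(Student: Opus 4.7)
}

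The plan is to reduce to a single-eigenvalue tail estimate via a union bound and then integrate out the remaining $N-1$ eigenvalues in the joint density $\sigma^N$, using \Cref{lem: gue_parition} to absorb the partition functions. First I would observe that the density $\sigma^N$ is invariant under permutations of the $\lambda_i$, so a union bound gives $\P_{\sigma_N}\big(\max_i |\lambda_i|\geq t\big)\leq N\,\P_{\sigma_N}\big(|\lambda_1|\geq t\big)$, and I can focus on the marginal of $\lambda_1$.

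The key step is to split the Vandermonde as $\prod_{i<j}|\lambda_i-\lambda_j|^2 = \prod_{j=2}^N|\lambda_1-\lambda_j|^2 \cdot \prod_{2\leq i<j\leq N}|\lambda_i-\lambda_j|^2$, isolating the coupling of $\lambda_1$ to the remaining eigenvalues. For each $j\geq 2$, I consume one unit of the Gaussian weight on $\lambda_j$ to absorb the coupling factor via the uniform bound
\begin{equation*}
|\lambda_1-\lambda_j|^2 e^{-\lambda_j^2/2} \leq (2\lambda_1^2+2\lambda_j^2)e^{-\lambda_j^2/2} \leq 2\lambda_1^2+\tfrac{4}{e},
\end{equation*}
since $\sup_x x^2 e^{-x^2/2}=2/e$. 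Splitting off $e^{-\lambda_j^2/2}$ from each $e^{-N\lambda_j^2/2}$ leaves weight $e^{-(N-1)\lambda_j^2/2}$; together with $\prod_{2\leq i<j\leq N}|\lambda_i-\lambda_j|^2$, the integral over $(\lambda_2,\dots,\lambda_N)$ is precisely $Z^{N-1}$ by the definition of the $(N-1)\times(N-1)$ GUE partition function. Combining these ingredients yields
\begin{equation*}
\P_{\sigma_N}\big(|\lambda_1|\geq t\big)\leq \frac{Z^{N-1}}{Z^N}\int_{|\lambda_1|\geq t}\big(2\lambda_1^2+\tfrac{4}{e}\big)^{N-1}e^{-N\lambda_1^2/2}\,\dif\lambda_1,
\end{equation*}
and by \Cref{lem: gue_parition} the prefactor is at most $e^{CN}$.

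Finally, I would absorb the polynomial $(2\lambda_1^2+4/e)^{N-1}$ into a small fraction of the Gaussian: for $\lambda_1^2$ large enough, $(N-1)\log(2\lambda_1^2+4/e)\leq N\lambda_1^2/4$, so the one-dimensional integrand is dominated by $e^{-N\lambda_1^2/4}$ on $\{|\lambda_1|\geq t_0\}$ for some absolute $t_0$. A standard Gaussian tail estimate then gives the one-dimensional integral is at most $C_1 N^{-1} t^{-1} e^{-Nt^2/4}$. Combining with the factor $N$ from the union bound and $e^{CN}$ from the partition functions yields $\P_{\sigma_N}(\max_i|\lambda_i|\geq t)\leq C_1 t^{-1} e^{-N(t^2/4 - C)}$, which is at most $e^{-\alpha Nt^2}$ for any $\alpha<1/4$ provided $t\geq t_0$ is chosen sufficiently large (depending on $C$ and $\alpha$).

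The main obstacle is identifying the right decomposition of the Vandermonde and finding a uniform bound on the coupling factors that does not spoil the recombination into $Z^{N-1}$. Once the estimate $|\lambda_1-\lambda_j|^2 e^{-\lambda_j^2/2}\leq 2\lambda_1^2+4/e$ is in hand, the rest is bookkeeping: matching the remaining Gaussian weight to the correct normalization so that \Cref{lem: gue_parition} applies directly, and then verifying that the resulting polynomial-in-$\lambda_1$ prefactor is dominated by an arbitrarily small fraction of the surviving Gaussian weight $e^{-N\lambda_1^2/2}$, which is the mechanism that produces the $e^{-\alpha Nt^2}$ tail.
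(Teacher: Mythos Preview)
Your proposal is correct and follows essentially the same route as the paper: union bound to a single eigenvalue, split off the coupling factors from the Vandermonde, consume one unit of Gaussian weight per remaining eigenvalue to recombine into $Z^{N-1}$, invoke \Cref{lem: gue_parition}, and absorb everything into the surviving Gaussian tail. The only cosmetic difference is that the paper bounds the coupling directly by $|x-\lambda|^2 e^{-\lambda^2/2}\le e^{x^2/4}$ for $|x|\ge 8$, which spares the separate step of absorbing $(2\lambda_1^2+4/e)^{N-1}$ into $e^{N\lambda_1^2/4}$; your polynomial bound and subsequent logarithmic comparison achieve the same thing.
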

\begin{proof}
    First note that for all $|x|\geq 8$ and $\lambda\in\R$, then
    \[
    |x-\lambda|^2e^{-\tfrac{\lambda^2}{2}}\leq 2(|x|^2+|\lambda|^2)e^{-\tfrac{\lambda^2}{2}}\leq 4|x^2|\leq e^{\tfrac{x^2}{4}},
    \]
    so that for all $|x|\geq t\geq 8$,
    \begin{align*}
        \P_{\sigma_N}\big(|\lambda_1|\geq t\big)&\leq \frac{Z^{N-1}}{Z^N}e^{-\tfrac{1}{4}Nt^2}\int_{|x|\geq t}e^{-\tfrac{x^2}{4}}\int \prod_{i=2}^N|x-\lambda_i|^2e^{-\tfrac{\lambda_i^2}{2}-\tfrac{x^2}{4}}\sigma^{N-1}\big(\dif\underline{\lambda}^{N-1}\big)\dif x\\
        &\leq \frac{Z^{N-1}}{Z^N}e^{-\tfrac{1}{4}Nt^2}\int e^{-\tfrac{x^2}{4}}\int e^{\tfrac{x^2}{4}-\tfrac{x^2}{4}}\sigma^{N-1}\big(\dif\underline{\lambda}^{N-1}\big)\dif x\\
        &\leq \frac{Z^{N-1}}{Z^N}e^{-\tfrac{1}{4}Nt^2}\int e^{-\tfrac{x^2}{4}}\dif x\\
        &\leq e^{CN - \tfrac{1}{4}Nt^2},
    \end{align*}
    for some $C>0$, where the last inequality follows from \cref{lem: gue_parition} and where we have used
    \[
    e^{-N\tfrac{x^2}{2}}=e^{-N\tfrac{x^2}{4}}e^{-N\tfrac{x^2}{4}}\leq e^{-N\tfrac{t^2}{4}}e^{-N\tfrac{x^2}{4}},
    \]
    whenever $|x|\geq t$ to justify the first inequality. Now
    \[
    \P_{\sigma_N}\Big(\max_{i=1,\ldots N}|\lambda_i|\geq t\Big)\leq N\P_{\sigma_N}\big(|\lambda_1|\geq t\big)\leq e^{CN - \tfrac{1}{4}Nt^2},
    \]
    for some new constant $C>0$. Then for any $\alpha<\tfrac{1}{4}$ and $t\geq t_0(\alpha)$,~\eqref{eq: gue_tail} holds for all $N$.
\end{proof}
The proof of \cref{lem: sub_gaussian_tail} now follows that of \cite[Lemma~2.2]{matrix_models}.
\begin{proof}[Proof of \cref{lem: sub_gaussian_tail}]
    Since
    \[
    \lambda_{\max}^N(\mathbf{A})=\max_{1\leq i \leq d}\sup_{\norm{v}=1}\sqrt{\langle v, A_iA_i^{\dag}v\rangle}
    \]
    is a convex function of the entries of the $A_i$'s, then $e^{sN\lambda_{\max}^N(\mathbf{A})^2}$ is also a convex function. Note also that
    \[
    \lambda_{\max}^N(\mathbf{A})^2\leq \big(\lambda_{\max}^N(\mathbf{A}-\mathbf{M})+\lambda_{\max}^N(\mathbf{M})\big)^2\leq 2\lambda_{\max}^N(\mathbf{A}-\mathbf{M})^2+2\lambda_{\max}^N(\mathbf{M})^2,
    \]
    where $\mathbf{M}=\Ex_{\mu_V^N}\big[\mathbf{A}\big]$ is mean of the tuple $\mathbf{A}$. By the proof of \cite[Lemma~2.1]{matrix_models} there is some $C_1>0$ for which $\lambda_{\max}^N(\mathbf{M})<C_1$ for all $N$. Furthermore, by the generalised version of the Brascamp-Lieb inequality
    \[
    \int e^{sN\lambda_{\max}^N(\mathbf{A}-\mathbf{M})^2}\dif \mu_V^N\leq \int e^{sN\lambda_{\max}^N(\mathbf{A})^2}\dif \mu_c^N\leq C_2^N,
    \]
    for all suitably small $s$ and some $C_2>0$, and where $\mu_c^N$ is the law of the GUE ensemble with variance $(cN^{-1})$. The final inequality follows from the sub-Gaussian nature of the tails for the GUE and \cref{lem: sub_gaussian_equiv}. We conclude by observing that for some small $s>0$,
$$
\int e^{sN\lambda_{\max}^N(\mathbf{A})^2}\dif \mu_V^N
\leq e^{2\lambda_{\max}^N(\mathbf{M})^2} \int e^{2sN\lambda_{\max}^N(\mathbf{A}-\mathbf{M})^2}\dif \mu_V^N
\leq e^{C_1} C_2^N,
$$
and applying \cref{lem: sub_gaussian_equiv}.
\end{proof}

\section{Proof of \Cref{thm:sparse_approximation}}\label{proof_sparse_approx}

\begin{definition}[Pauli matrices]
We denote by $\P=\{\sigma_I,\sigma_X,\sigma_Y,\sigma_Z\}$ the set of Pauli matrices, introduced in \cref{def:PS} and write $\sigma\sim \PP$ for a random variable $\sigma$ uniformly distributed over~$\PP$.
\end{definition}
We recall now some basic properties of the Pauli ensemble. Firstly, multiplication of Pauli matrices follows the following rules.
\[
\begin{array}{c|cccc}
  \times & \sigma_I & \sigma_X & \sigma_Y & \sigma_Z \\
  \hline
  \sigma_I & \sigma_I & \sigma_X & \sigma_Y & \sigma_Z \\
  \sigma_X & \sigma_X & \sigma_I & i\sigma_Z & -i\sigma_Y \\
  \sigma_Y & \sigma_Y & -i\sigma_Z & \sigma_I & i\sigma_X \\
  \sigma_Z & \sigma_Z & i\sigma_Y & -i\sigma_X & \sigma_I
\end{array}
\]
We may then consider the group $G=\langle \PP\rangle$ generated by Pauli matrices under matrix multiplication. By the above multiplication rules, it is clear that every $g\in G$ may be uniquely written as
\begin{equation}
    g = \phi(g)m(g),\quad \phi(g)\in \{1,i,-1,-i\}\quad\text{and}\quad m(g)\in \PP.
\end{equation}
We refer to $\phi(g)$ as the phase of $g$. Note the rules of commutation of two elements of $g,h\in G$: $gh=hg$ whenever $m(g)=m(h)$, $m(g)=I$ or $m(h)=I$ and $gh=-hg$ if $m(g)\neq m(h)$ and neither is the identity. Moreover, $m(g)\in \PP$ for any $g\in G$, it follows that $|\tr{g}|\leq 1$ for all $g\in G$.
\begin{lemma}
    Suppose that $X$ and $Y$ are independent random variables with values in $G$ such that $m(X)$ and $m(Y)$ are uniformly distributed on $\PP$, then $m(XY)\sim \PP$.
\end{lemma}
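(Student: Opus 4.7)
The plan is to reduce the statement to a simple property of the induced multiplication on $\PP$. First, using the unique decomposition $g=\phi(g)m(g)$ for $g\in G$, I would write
\[
XY=\phi(X)\phi(Y)\,m(X)m(Y),
\]
and note that $\phi(X)\phi(Y)\in\{1,i,-1,-i\}$ is itself a phase factor. Hence the mantissa of $XY$ is completely determined by the product $m(X)m(Y)$ of two Pauli matrices: more precisely, $m(XY)=m\bigl(m(X)m(Y)\bigr)$. By independence of $X$ and $Y$, the random variables $m(X)$ and $m(Y)$ are independent and, by assumption, each is uniformly distributed on $\PP$.

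The key algebraic observation is that the map $\mathfrak{m}:\PP\times\PP\to\PP$ given by $\mathfrak{m}(\sigma_\alpha,\sigma_\beta)\coloneqq m(\sigma_\alpha\sigma_\beta)$ turns $\PP$ into a group isomorphic to $(\Z/2\Z)^2$; this is immediate from the multiplication table of the Pauli matrices (the non-identity Paulis square to $\sigma_I$, and the product of any two distinct non-identity Paulis is, up to phase, the third). In particular, for each fixed $\sigma_\alpha\in\PP$, left multiplication by $\sigma_\alpha$ in $\mathfrak{m}$ is a bijection of $\PP$.

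The conclusion is then a one-line application of the standard fact that the pushforward of a uniform measure under a bijection is again uniform. Conditionally on $m(X)=\sigma_\alpha$, the random variable $m(XY)=\mathfrak{m}(\sigma_\alpha,m(Y))$ is the image of the uniformly distributed $m(Y)$ under a bijection of $\PP$, and is therefore uniform on $\PP$; since this conditional distribution does not depend on $\alpha$, the tower property shows that $m(XY)\sim\PP$ unconditionally.

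There is no real obstacle here: the only care needed is separating the phase factor from the Pauli part so that the group structure on $\PP$ induced by $\mathfrak{m}$ can be invoked cleanly.
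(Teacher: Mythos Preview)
Your proof is correct and follows essentially the same approach as the paper's: the paper simply states that the result ``is clear from the multiplication table and the fact that $m(XY)=m(X)m(Y)$'', which is precisely your reduction to the induced product $\mathfrak{m}$ on $\PP$ together with the observation (read off the table) that each row is a permutation of $\PP$. Your write-up just makes explicit the group structure $(\Z/2\Z)^2$ and the bijection-pushforward argument that the paper leaves implicit.
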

\begin{proof}
    This is clear from the multiplication table and the fact that $m(XY)=m(X)m(Y)$.
\end{proof}
By induction on the above lemma, we obtain the following.
\begin{corollary}\label{cor: uniform_pauli}
    Let $X_1,\ldots,X_n\stackrel{\text{i.i.d.}}{\sim} \PP$, then $m(X_1\cdots X_n)\sim \PP$.
\end{corollary}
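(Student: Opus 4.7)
The plan is a straightforward induction on $n$, exactly as suggested by the parenthetical remark preceding the corollary. The base case $n=1$ is immediate: since $X_1 \in \PP$ almost surely and $m$ acts as the identity on $\PP$ (elements of $\PP$ carry trivial phase), one has $m(X_1) = X_1 \sim \PP$ by hypothesis.

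For the inductive step, assume the statement at stage $n-1$ and set $Y := X_1 \cdots X_{n-1}$. Because each $X_i$ lies in $\PP \subset G$ and $G$ is closed under matrix multiplication, $Y$ is well-defined as a $G$-valued random variable. The inductive hypothesis gives $m(Y) \sim \PP$, while $m(X_n) = X_n \sim \PP$ directly from the assumption that $X_n \sim \PP$. Independence of $Y$ and $X_n$ follows from the i.i.d.\ property of the sequence: $Y$ is a (deterministic) function of $X_1,\ldots,X_{n-1}$, and $X_n$ is independent of that tuple. All hypotheses of the preceding lemma are therefore satisfied, so the lemma gives
\[
m(X_1 \cdots X_n) = m(Y X_n) \sim \PP,
\]
closing the induction.

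There is no real obstacle; the corollary is a clean iteration of a ``closure'' lemma along a chain of independent pieces. The only fine point worth highlighting is that at each step one must appeal to independence of $Y_{n-1}$ and $X_n$, which is exactly what the i.i.d.\ assumption supplies. In particular, no additional information about the joint distribution of $(X_1,\ldots,X_n)$ beyond mutual independence is required.
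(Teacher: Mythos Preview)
Your proof is correct and is exactly the induction the paper has in mind; the paper's own proof is simply the one-line remark ``By induction on the above lemma,'' and you have spelled out that induction carefully. The only detail you make explicit that the paper leaves implicit is the independence of $Y=X_1\cdots X_{n-1}$ and $X_n$, which is indeed required to invoke the lemma at each step.
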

We also require the following result.
\begin{lemma}\label{lem: single_cross}
    Let $g\in G$ and $X\sim \PP$, then
    $\Ex[XgX]=\id_{\{m(g)=\sigma_I\}}\phi(g)I$.
\end{lemma}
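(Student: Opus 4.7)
}
The strategy is to reduce the computation to a finite case analysis over the four Pauli matrices using the decomposition $g = \phi(g)\,m(g)$ and the fact that $\phi(g)$ is a scalar. Pulling the scalar phase out, $XgX = \phi(g)\,X\,m(g)\,X$, so it suffices to evaluate $\Ex[X\,m(g)\,X]$ for $m(g)\in\PP$ and $X\sim\PP$.

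I would split on whether $m(g)=\sigma_I$. When $m(g)=\sigma_I$, every Pauli matrix squares to the identity, so $X\,\sigma_I\,X = X^2 = I$ for each of the four choices of $X$, and the expectation is simply $I$, giving $\Ex[XgX]=\phi(g) I$ as required by the indicator term.

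When $m(g)\in\{\sigma_X,\sigma_Y,\sigma_Z\}$, I would use the commutation rules recorded just before the lemma: two distinct non-identity Pauli matrices anticommute, while $\sigma_I$ commutes with everything. Case-by-case on the four values of $X$:
\begin{itemize}
    \item for $X=\sigma_I$, $X\,m(g)\,X = m(g)$;
    \item for $X=m(g)$, $X\,m(g)\,X = m(g)^3 = m(g)$ since $m(g)^2=I$;
    \item for the two remaining Paulis $X$, anticommutativity gives $X\,m(g)\,X = -m(g)\,X^2 = -m(g)$.
\end{itemize}
Summing these four contributions yields $2\,m(g) - 2\,m(g) = 0$, so $\Ex[X\,m(g)\,X]=0$, and multiplying by $\phi(g)$ gives $\Ex[XgX]=0$ in this case.

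Combining the two cases produces the stated identity $\Ex[XgX] = \id_{\{m(g)=\sigma_I\}}\,\phi(g)\,I$. There is no real obstacle here; the only subtlety to double-check is the sign bookkeeping in the anticommuting case, since $\phi(g)$ is a fourth root of unity rather than $\pm 1$, but as a scalar it factors out unchanged.
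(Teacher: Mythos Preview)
Your proposal is correct and follows essentially the same approach as the paper's proof: a direct case split on whether $m(g)=\sigma_I$, followed by a four-term case analysis over $X\in\PP$ using the commutation/anticommutation rules of the Pauli matrices. The only cosmetic difference is that the paper keeps the phase inside and writes $XgX=\pm g$ according to the two cases, whereas you factor $\phi(g)$ out first and work with $m(g)$; the substance is identical.
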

\begin{proof}
    If $m(g)=I$, then $\Ex[XgX]=\phi(g)\Ex[XX]=\phi(g)I$. If $m(g)\neq I$, then
    \[
    XgX=\begin{cases}
    g, & X=I\text{ or }X=m(g);\\
    -g, & X\neq I\text{ and }X\neq m(g).
    \end{cases}
    \]
    Since both cases have probability $\tfrac{1}{2}$, $\Ex[XgX]=0$.
\end{proof}
We may now combine the above to conclude the following.
\begin{proposition}\label{prop: pauli_products}
    Let $X_1,\ldots,X_p\stackrel{\text{i.i.d.}}{\sim} \PP$ and let $\mathbf{w}$ be a word in $p$ letters of length $2p$ such that each letter $\{1,\ldots,p\}$ appears exactly twice. If pairing equal letters results in a non-crossing pair partition, then $\Ex\big[\tr{X_{w_1}\cdots X_{w_{2p}}}\big]=1$, otherwise
    \begin{equation}
        \Big|\Ex\big[\tr{X_{w_1}\cdots X_{w_{2p}}}\big]\Big|\leq \frac{1}{4}.
    \end{equation}
\end{proposition}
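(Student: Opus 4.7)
Let $\pi$ denote the pair partition on $\{1,\ldots,2p\}$ induced by $\mathbf{w}$ and write $v_k := m(X_k)\in\PP$; by i.i.d.\ uniformity of the $X_k$ these are i.i.d.\ uniform in the $4$-element abelian group $\PP\cong(\mathbb{Z}/2)^2$ obtained from Pauli multiplication modulo phase. I will treat the non-crossing case by iterated collapse of adjacent matched pairs, and the crossing case by a single application of Lemma~\ref{lem: single_cross} combined with uniformity in $\PP$.

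\textbf{Non-crossing case.} I proceed by induction on $p$, the case $p=0$ being trivial. Any non-crossing pair partition of $\{1,\ldots,2p\}$ contains an adjacent matched block, i.e., positions $j,j+1$ with $w_j = w_{j+1} = k$ (take any arc of minimal length: a position strictly inside it cannot pair either inside or outside without creating a shorter arc or a crossing). Using $X_k^2 = I$ collapses these factors and reduces the product to a word of length $2(p-1)$ on $\{1,\ldots,p\}\setminus\{k\}$ whose pair partition is still non-crossing and whose $X$-variables remain i.i.d.\ uniform in $\PP$. The induction hypothesis then gives $\Ex[\tr{X_{w_1}\cdots X_{w_{2p}}}]=1$.

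\textbf{Crossing case.} Choose any pair $i\in\pi$ participating in a crossing---this exists because $\pi$ is crossing---and let $i_1 < i_2$ be its positions in $\mathbf{w}$. Factor
\begin{equation*}
X_{w_1}\cdots X_{w_{2p}} \;=\; L \cdot X_i \cdot M \cdot X_i \cdot R,
\end{equation*}
where $L, M, R$ are products depending only on $\{X_k : k \ne i\}$. Conditioning on these and applying Lemma~\ref{lem: single_cross},
\begin{equation*}
\Ex_{X_i}\bigl[L X_i M X_i R\bigr] \;=\; \mathbf{1}\{m(M)=\sigma_I\}\,\phi(M)\,L R.
\end{equation*}
Each letter $k\ne i$ appears $0$, $1$ or $2$ times in $M$, with $1$ occurring if and only if pair $k$ crosses pair $i$; hence, modulo $2$,
\begin{equation*}
m(M) \;=\; \sum_{k\in S} v_k \quad\text{in } \PP, \qquad S := \{k : \text{pair $k$ crosses pair $i$}\},
\end{equation*}
with $|S|\geq 1$ by the choice of $i$. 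Since the $v_k$ are i.i.d.\ uniform on $\PP$, the sum over any non-empty index set is itself uniform in $\PP$, so $\mathbb{P}(m(M)=\sigma_I) = 1/4$. Combining with $|\phi(M)|=1$ and $|\tr{LR}|\leq 1$ (because $LR\in G$ and $|\tr{\sigma}|\leq 1$ for every $\sigma\in\PP$), the triangle inequality yields
\begin{equation*}
\bigl|\Ex[\tr{X_{w_1}\cdots X_{w_{2p}}}]\bigr| \;\leq\; \Ex\bigl[\mathbf{1}\{m(M)=\sigma_I\}\bigr] \;=\; \tfrac{1}{4}.
\end{equation*}

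\textbf{Main obstacle.} The crux is the identification, after integrating out $X_i$, of $m(M)$ with a sum $\sum_{k\in S}v_k$ of independent uniform variables in $\PP$, together with the combinatorial observation that $S$ is non-empty precisely because $i$ participates in a crossing. Uniformity of sums of i.i.d.\ uniform elements in the order-$4$ group $\PP$ then pins down the probability $1/4$ and yields the sharp bound.
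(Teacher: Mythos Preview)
Your proof is correct and follows the same overall strategy as the paper: handle the non-crossing case by iterated collapse of adjacent matched pairs, and in the crossing case integrate out one Pauli via Lemma~\ref{lem: single_cross} and then argue that the middle factor equals $\sigma_I$ in $\PP$ with probability exactly $1/4$.

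The execution of the crossing case differs in a small but pleasant way. The paper first collapses all consecutive pairs and then asserts the existence of a subword $X_jX_{\mathbf{l}}X_j$ in which $\mathbf{l}$ contains \emph{no repeated letters}; this is what lets it invoke Corollary~\ref{cor: uniform_pauli} directly, but the existence claim is only sketched (``can be checked by induction''). You instead pick \emph{any} pair $i$ participating in a crossing, allow $M$ to contain repeated letters, and observe that since $m:G\to\PP\cong(\mathbb{Z}/2)^2$ is a homomorphism onto an elementary abelian $2$-group, letters appearing twice in $M$ cancel in $m(M)$, leaving $m(M)=\sum_{k\in S}v_k$ with $S$ the (non-empty) set of pairs crossing $i$. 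This sidesteps both the preliminary collapsing and the auxiliary combinatorial lemma, at the modest cost of working explicitly in the abelian quotient. Either route yields the sharp bound $1/4$; yours is a little more self-contained.
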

\begin{proof}
    If $\mathbf{w}$ defines a non-crossing pair partition, then any consecutive pairs of letters $X_jX_j$ may be collapsed to the identity. By the non-crossing property, there must be at least one pair, resulting in a shorter word of $2p-2$ letters defining a non-crossing pair partition in $p$-distinct letters. By induction, the entire word may be collapsed to the identity, from which the result follows.
    
    Assume now that $\mathbf{w}$ does not define a non-crossing pair partition. First note that if there are any consecutive letters in $\mathbf{w}$, i.e. the appearance of $X_jX_j$ for some $j\in\{1,\ldots,p\}$, then the pair may be collapsed to the identity. If this collapse results in a new consecutive pair appearing, this may also be collapsed. Since the pairing of equal letters does not result in a non-crossing pair partition, the entire product cannot be collapsed to the identity. As such, we may assume without loss of generality that $\mathbf{w}$ contains no consecutive. We claim that there must be a subword of $\mathbf{w}$ of the form $X_jX_{\mathbf{l}}X_j$ for some word non-empty word $\mathbf{l}$ that contains no repeat letters. This can be checked by induction on the length of the word, with the conclusion that if no such subword exists, then there must be a consecutive pair. We may then write $\mathbf{w}=\mathbf{k}j\mathbf{l}j\mathbf{m}$, where $\mathbf{k},\mathbf{m}$ do not contain $j$. Hence
    \begin{align*}
        \Ex\big[\tr{X_{\mathbf{w}}}\big]&=\tr{\Ex[X_{\mathbf{w}}]}\\
        &=\tr{\Ex[X_{\mathbf{k}}X_jX_{\mathbf{l}}X_jX_{\mathbf{m}}]}\\
        &= \tr{\Ex\big[X_{\mathbf{k}}\Ex[X_jX_{\mathbf{l}}X_j\mid X_i:i\in[p]\setminus{j}]X_{\mathbf{m}}]\big]}\\
        &=\tr{\Ex\big[X_{\mathbf{k}}\Ex[X_jX_{\mathbf{l}}X_j\mid X_{\mathbf{l}}]X_{\mathbf{m}}\big]}\\
        &=\tr{\Ex\big[X_{\mathbf{k}}\phi(X_{\mathbf{l}})\id_{\{m(X_{\mathbf{l}})=I\}}X_{\mathbf{m}}\big]}\\
        &=\Ex\big[\phi(X_{\mathbf{l}})\id_{\{m(X_{\mathbf{l}})=I\}}\tr{X_{\mathbf{k}}X_{\mathbf{m}}}\big].
    \end{align*}
    The fifth equality used \cref{lem: single_cross}. Therefore
$$
\Big|\Ex\big[\tr{X_{\mathbf{w}}}\big]\Big|
\leq\Ex\Big[\id_{\{m(X_{\mathbf{l}})=I\}}\big|\phi(X_{\mathbf{l}})\big|\big|\tr{X_{\mathbf{k}}X_{\mathbf{m}}}\big|\Big]
    \leq \P\big(\{m(X_{\mathbf{l}})=I\}\big)
    =\frac{1}{4},
$$
since $|\phi(g)|=1$ for all $g\in G$ and $|\tr{g}|\leq 1$ for all $g\in G$ and  the final equality uses \cref{cor: uniform_pauli}.
\end{proof}

\begin{lemma}\label{lem: cross_pauli_prod}
     Let $X_1,\ldots,X_p\stackrel{\text{i.i.d.}}{\sim} \PP$ and let $\mathbf{u},\mathbf{v}$ be words in the letters $[p]$ such that
     \begin{enumerate}[label=\arabic*)]
         \item $|\mathbf{u}\mathbf{v}|=2p$;
         \item Each letter appears exactly twice in $\mathbf{u}\mathbf{v}$;
        \item There is at least one letter that appears in both $\mathbf{u}$ and $\mathbf{v}$.
     \end{enumerate}
     Then 
     \[
     \Ex\big[\tr{X_{\mathbf{u}}}\tr{X_{\mathbf{v}}}\big]\leq\frac{1}{4}.
     \]
\end{lemma}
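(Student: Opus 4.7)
The plan is to isolate a single variable $X_j$ by conditioning and reduce the expectation to a direct calculation over the four Pauli matrices, in the same spirit as the proof of \cref{prop: pauli_products}. Hypothesis (3) furnishes a letter $j \in \{1,\ldots,p\}$ appearing in both $\mathbf{u}$ and $\mathbf{v}$, and together with hypothesis (2) (each letter occurs exactly twice in $\mathbf{u}\mathbf{v}$) this letter $j$ must appear exactly once in each of $\mathbf{u}$ and $\mathbf{v}$.

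Using cyclic invariance of the normalised trace, I would rewrite
\[
\tr{X_{\mathbf{u}}} = \tr{A X_j}, \qquad \tr{X_{\mathbf{v}}} = \tr{B X_j},
\]
where $A$ and $B$ are the products of the $X_i$'s with $i \neq j$ appearing in $\mathbf{u}$ and $\mathbf{v}$ respectively, taken in the induced cyclic order. Conditioning on $\{X_i\}_{i \neq j}$ then makes $A$ and $B$ deterministic elements of $G$, so it suffices to bound
\[
\big| \Ex_{X_j}\big[ \tr{A X_j}\,\tr{B X_j} \big] \big|
\]
uniformly in $A, B \in G$, with $X_j \sim \PP$.

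The heart of the argument is the evaluation of this inner expectation. For any $g \in G$ and $X \in \PP$, writing $g = \phi(g) m(g)$ and using $m(X)^2 = I$ shows that $gX = \phi(g)\, m(g) m(X)$, whose normalised trace vanishes unless $m(g) = m(X)$, in which case $m(g) m(X) = I$ and $\tr{gX} = \phi(g)$. Consequently
\[
\tr{A X_j}\,\tr{B X_j} = \phi(A)\phi(B)\, \id_{\{m(A) = m(X_j) = m(B)\}},
\]
and averaging over the four equally likely values of $X_j \in \PP$ yields either $0$ (when $m(A) \neq m(B)$) or $\tfrac{1}{4}\phi(A)\phi(B)$ (when $m(A) = m(B)$). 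In both cases the absolute value is at most $\tfrac{1}{4}$, and the tower property of conditional expectation then delivers the claimed bound on $|\Ex[\tr{X_{\mathbf{u}}}\tr{X_{\mathbf{v}}}]|$.

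The only subtlety to verify is that the cyclic rearrangement in the first step produces well-defined group elements $A$ and $B$ regardless of how the remaining letters are distributed between $\mathbf{u}$ and $\mathbf{v}$. This is immediate since $\mathbf{u}$ and $\mathbf{v}$ are finite words and conditioning on $\{X_i\}_{i \neq j}$ leaves no randomness in $A$ or $B$; no further combinatorial structure on the two words is needed, which is why no analogue of the non-crossing/crossing case split from \cref{prop: pauli_products} arises here.
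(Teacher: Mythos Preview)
Your proposal is correct and follows essentially the same approach as the paper: isolate the shared letter $j$, condition on $\{X_i\}_{i\neq j}$, and observe that the two traces are simultaneously non-zero only when $X_j$ equals a single fixed Pauli matrix, which occurs with probability $\tfrac14$. Your use of cyclic invariance to write $\tr{X_{\mathbf{u}}}=\tr{A X_j}$ is a cosmetic variant of the paper's decomposition $\mathbf{u}=\mathbf{u}_1 j\,\mathbf{u}_2$; the conditioning step and the final $\tfrac14$ count are identical.
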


\begin{proof} 
    For any word $\mathbf w$ we have that $\tr{X_{\mathbf w}} = \phi(X_{\mathbf w}) \id_{\{m(X_{\mathbf w})=\sigma_I\}}$. We know there exists a letter $j \in [p]$ that appears once in both $\mathbf u$ and $\mathbf v$, therefore we can decompose the two words as $\mathbf u = \mathbf u_1 j \mathbf u_2$ and $\mathbf v = \mathbf v_1 j \mathbf v_2$, where none of $\mathbf u_1, \mathbf u_2, \mathbf v_1, \mathbf v_2$ contain the letter $j$. Thus, we have
    \begin{equation*}
        |\tr{X_{\mathbf u_1}X_jX_{\mathbf u_2}}\tr{X_{\mathbf v_1}X_jX_{\mathbf v_2}}| = \id_{\{m(X_{\mathbf u_1}X_jX_{\mathbf u_2})=\sigma_I\}} \id_{\{m(X_{\mathbf v_1}X_jX_{\mathbf v_2})=\sigma_I\}}. 
    \end{equation*}
    The two indicator functions are simultaneously equal to $1$ if and only if
    \begin{equation}\label{cond}
        m(X_j) = \alpha^{-1} \quad \text{and} \quad m(X_j) = \beta^{-1}
    \end{equation}
    where $\alpha = m(X_{\mathbf u_1})m(X_{\mathbf u_2})$ and $\beta = m(X_{\mathbf v_1})m(X_{\mathbf v_2})$. \medskip
    
    \noindent
    Now, consider the $\sigma$-algebra $\mathcal{F} = \sigma(X_i : i \neq j)$ generated by all Paulis $X_i$ except $X_j$. Conditioning on $\mathcal{F}$, the random matrix $X_j$ is uniformly distributed on $\PP$. Hence,
    \begin{equation*}
        \mathbb{E}\left[ |\tr{X_{\mathbf u_1}X_jX_{\mathbf u_2}}\tr{X_{\mathbf v_1}X_jX_{\mathbf v_2}}| \mid \mathcal{F}\right] = \mathbb{P}(\text{\eqref{cond} holds} \mid \mathcal{F}) = \begin{cases}
            \frac{1}{4}, & \text{if }\alpha = \beta, \\
            0, & \text{if }\alpha \neq \beta.
        \end{cases}
    \end{equation*}
    In either case, the conditional expectation does not exceed $\frac{1}{4}$. 
    Thus, taking expectation and using the fact that $|\phi(g)|=1$ for all $g\in G$ yields the result.
\end{proof}

The tensor product
$\sigma_{\mathbf{w}}=\sigma_{w_1}\otimes\cdots\otimes\sigma_{w_n}$,
where $\sigma_{w_i}\in \PP$,
is a $2^n\times 2^n$ matrix such that 
\begin{equation}\label{eq: trace_p_string}
\tr{\sigma_{\mathbf{w}}}=\tr{\sigma_{w_1}}\cdots\tr{\sigma_{w_n}}.
\end{equation}
We recall again the definition of the random Pauli string ensemble.
\begin{definition}
Let $N=2^n$. A matrix $A$ belongs to the random Pauli ensemble with~$m$ Pauli strings if
    \[
    A=\frac{1}{\sqrt{m}}\sum_{i=1}^mr^i\sigma^{i},
    \]
    where $r_i\stackrel{\text{i.i.d.}}{\sim}\mathrm{Rad}(\nicefrac{1}{2})$ and $\sigma^{i}=\sigma(1)^i\otimes\cdots\otimes\sigma(n)^{i}$, where $\sigma(j)^i\stackrel{\text{i.i.d.}}{\sim}\PP$.
\end{definition}

\begin{definition} For any alphabet size $m \geq 1$, denoted by $[m]$, and integer $p\geq 1$ define
    \begin{itemize}
        \item $W(m,p):$ number of words of length $2p$ where distinct letters appear an even number of times.
        \item $N(m,p):$ number of words of length $2p$ where distinct letters appear exactly twice. 
    \end{itemize}
\end{definition}

\begin{lemma} 
    For any $m \geq 1$ and $p\geq 1$,
    \begin{equation}\label{eqn:W}
        W(m,p) = \frac{1}{2^m} \sum_{k=0}^m \binom{m}{k} (m-2k)^{2p}
        = \sum_{\substack{k_1 + \cdots + k_m = p\\k_i \geq 0}} \frac{(2p)!}{(2k_1)!\cdots(2k_m)!}. 
    \end{equation}
\end{lemma}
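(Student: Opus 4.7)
The second equality is essentially the definition repackaged: to build a word of length $2p$ in which distinct letters appear an even number of times, one first specifies how many times each of the $m$ letters is used, say $2k_i$ for letter $i$ with $k_i \ge 0$ and $\sum_i k_i = p$, and then chooses a placement, of which there are exactly $\binom{2p}{2k_1,\ldots,2k_m} = (2p)!/\prod_i (2k_i)!$. Summing over compositions of $p$ into $m$ non-negative parts gives the second expression.

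For the first equality, I would use a parity-selection trick (equivalently, exponential generating functions). For any word $w$ of length $2p$ over $[m]$, let $n_i(w)$ be the number of occurrences of the letter $i$. The key identity is
\begin{equation*}
\frac{1}{2^m}\sum_{\varepsilon \in \{-1,+1\}^m}\prod_{i=1}^m \varepsilon_i^{\,n_i(w)} \;=\; \prod_{i=1}^m \mathbf{1}_{\{n_i(w)\text{ even}\}},
\end{equation*}
since each factor averages to $1$ when $n_i(w)$ is even and to $0$ otherwise. Summing this identity over all $m^{2p}$ words of length $2p$, the left-hand side becomes $W(m,p)$ after swapping the order of summation, because for fixed $\varepsilon$
\begin{equation*}
\sum_{w \in [m]^{2p}} \prod_{i=1}^m \varepsilon_i^{\,n_i(w)} \;=\; \Bigl(\sum_{i=1}^m \varepsilon_i\Bigr)^{2p},
\end{equation*}
by writing each word as an ordered choice of $2p$ letters and factoring. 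Grouping the $\varepsilon$'s by the number $k$ of coordinates equal to $-1$, of which there are $\binom{m}{k}$, one has $\sum_i \varepsilon_i = m - 2k$, which yields the desired formula.

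Equivalently, the same statement is the extraction of the coefficient of $x^{2p}/(2p)!$ from
\begin{equation*}
\cosh(x)^m = \Bigl(\tfrac{e^x + e^{-x}}{2}\Bigr)^m = \frac{1}{2^m}\sum_{k=0}^m \binom{m}{k} e^{(m-2k)x},
\end{equation*}
using that $\cosh(x)$ is the exponential generating function for single letters appearing an even number of times. There is no real obstacle here; the only point to be careful about is the correct normalisation ($1/2^m$) arising from the averaging/parity selector, which is why one sees it in both derivations.
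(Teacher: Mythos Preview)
Your proof is correct and follows essentially the same approach as the paper: the parity-selection identity (your $\varepsilon\in\{-1,+1\}^m$ is the paper's $(-1)^{\epsilon_i}$ with $\epsilon\in\{0,1\}^m$), the swap of summations, and the grouping by the number of $-1$ coordinates all match, as does the multinomial counting for the second equality. The generating-function remark via $\cosh(x)^m$ is an equivalent rephrasing not in the paper, but it adds nothing new to the argument.
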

\begin{proof}
For a given word $\mathbf{w}=w_1\cdots w_{2p}$, define the vector $c(\mathbf{w})=(c_1(\mathbf{w}),\ldots,c_m(\mathbf{w}))$, where $c_i(\mathbf{w})$ denotes the number of times letter $i$ appears in $\mathbf{w}$. Write
    \[
    \delta(c_i(\mathbf{w}))=\frac{1+(-1)^{c_i(\mathbf{w})}}{2},
    \]
    for the indicator that the count of letter $i$ is even. Then $\mathbf{w}$ contributes to $W(m,p)$ if and only if
    \[
    1=\prod_{i=1}^m\frac{1+(-1)^{c_i(\mathbf{w})}}{2}=\frac{1}{2^m}\sum_{\epsilon\in\{0,1\}^m}(-1)^{\sum_{i=1}^m\epsilon_ic_i(\mathbf{w})}.
    \]
    Hence, by summing over all words in the alphabet $[m]$ of length $2p$, we have
    \begin{align*}
    W(m,p)&=\sum_{\mathbf{w}}\frac{1}{2^m}\sum_{\epsilon\in\{0,1\}^m}(-1)^{\sum_{i=1}^m\epsilon_ic_i(\mathbf{w})}
    =\frac{1}{2^m}\sum_{\epsilon\in\{0,1\}^m}\sum_{\mathbf{w}}(-1)^{\sum_{i=1}^m\epsilon_ic_i(\mathbf{w})}.
    \end{align*}
    For fixed $\epsilon$, we can compute the inner sum as
    \[
    \sum_{\mathbf{w}}(-1)^{\sum_{i=1}^m\epsilon_ic_i(\mathbf{w})}=\sum_{\mathbf{w}}\prod_{j=1}^{2p}(-1)^{\epsilon_{w_j}}=\left(\sum_{i=1}^m(-1)^{\epsilon_i}\right)^{2p},
    \]
    where $\epsilon_{w_j}=\epsilon_{i}$ whenever $w_j=i$. Noticing that
    \[
    \sum_{i=1}^m(-1)^{\epsilon_i}=m-2t,\quad \text{where }t=\#\{i:\epsilon_i=1\},
    \]
    we can combine the above to get 
    \begin{align*}
        W(m,p)&=\frac{1}{2^m}\sum_{\epsilon\in\{0,1\}^m}\left(\sum_{i=1}^m(-1)^{\epsilon_i}\right)^{2p}
        =\frac{1}{2^m}\sum_{t=0}^m\binom{m}{t}(m-2t)^{2p}.
    \end{align*} 
    To see the second equality, from the conditions defining $W(m,p)$ we have $c_i(\mathbf w) = 2k_i(\mathbf w)$, for some $k_i(\mathbf w) \geq 0$, and $\sum_{i=1}^{m}k_i(\mathbf w) = p$. Using the multinomial theorem we know that the letters of $\mathbf w$ can be arranged in $\frac{(2p)!}{(2k_1(\mathbf w))!\cdots(2k_m(\mathbf w))!}$ different ways. Putting these two observations together yields the result.
\end{proof}

\begin{lemma}
    For any $m \geq 1$ and $p\geq 1$,
    \begin{equation}
        N(m,p) = \binom{m}{p}\frac{(2p)!}{2^p} 
        = \sum_{\substack{k_1 + \cdots + k_m = p\\k_i \in \{0,1\}}} \frac{(2p)!}{(2k_1)!\cdots(2k_m)!}. \label{eqn:N}
    \end{equation}
\end{lemma}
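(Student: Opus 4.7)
The plan is to establish both equalities by a direct bijective/combinatorial argument, paralleling the structure of the proof of the preceding lemma for $W(m,p)$.

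First, I would prove the first equality $N(m,p) = \binom{m}{p}\frac{(2p)!}{2^p}$ by the standard ``choose-then-arrange'' decomposition. A word $\mathbf{w} = w_1\cdots w_{2p}$ of length $2p$ over $[m]$ in which every distinct letter appears exactly twice uses a set $S \subseteq [m]$ of exactly $p$ distinct letters (since the total length is $2p$). Conversely, each such word is uniquely determined by its support $S$ together with an arrangement of the $2p$ positions into $p$ pairs of identical letters. The number of supports is $\binom{m}{p}$, and given a fixed support, the number of distinct arrangements is the multinomial coefficient $\binom{2p}{2,2,\ldots,2} = \frac{(2p)!}{(2!)^p} = \frac{(2p)!}{2^p}$. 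Multiplying yields the first equality.

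For the second equality, I would evaluate each term of the sum. The summation is over tuples $(k_1,\ldots,k_m) \in \{0,1\}^m$ with $k_1 + \cdots + k_m = p$. For any such tuple, exactly $p$ of the $k_i$ equal $1$ and the remaining $m-p$ equal $0$. Since $(2 \cdot 0)! = 1$ and $(2 \cdot 1)! = 2$, the denominator factorises as
\[
(2k_1)!\cdots(2k_m)! = 2^p,
\]
so every summand equals $\frac{(2p)!}{2^p}$. The number of such tuples is precisely $\binom{m}{p}$ (the number of ways to choose which $p$ indices are set to $1$), and therefore the sum equals $\binom{m}{p}\frac{(2p)!}{2^p}$, matching the first expression.

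This proof is purely enumerative with no real obstacle; the only subtlety worth flagging is the interpretation of the multinomial sum, which has the same structural form as the second expression for $W(m,p)$ in~\eqref{eqn:W} but is restricted to the ``squarefree'' case $k_i \in \{0,1\}$ corresponding to letters appearing exactly twice rather than an arbitrary even number of times. This parallel makes clear that $N(m,p)$ is precisely the contribution to $W(m,p)$ coming from the pair-partition-type words, which is exactly the combinatorial input required for the application to the random Pauli ensemble moment estimate via \cref{prop: pauli_products} and \cref{lem: cross_pauli_prod}.
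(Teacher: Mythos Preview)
Your proposal is correct and follows essentially the same approach as the paper: the first equality via the ``choose the $p$-letter support, then count arrangements by the multinomial $\frac{(2p)!}{2^p}$'' decomposition, and the second by observing that every admissible tuple $(k_1,\ldots,k_m)\in\{0,1\}^m$ contributes the constant term $\frac{(2p)!}{2^p}$ and that there are $\binom{m}{p}$ such tuples. The additional remark about $N(m,p)$ being the squarefree part of $W(m,p)$ is accurate and harmless.
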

\begin{proof}
    We have to choose which $p$ letters out of the $m$ possible letters will appear, and then arrange those $p$ letters twice each in a word of length $2p$. There are $\binom{m}{p}$ ways of choosing $p$ letters from the alphabet $[m]$. For the chosen $p$ letters, we must place two copies of each letter among the $2p$ positions. For any set of the form $\{w_1, w_1, w_2, w_2, \cdots\}$, the number of distinct orderings is the multinomial count $\frac{(2p)!}{2^p}$, which yields the first equality. The second equality follows from the observation that each multi‐index \((k_1,\ldots,k_m)\) with \(k_i\in\{0,1\}\) and \(\sum_i k_i=p\) has exactly \(p\) of the \(k_i\)’s equal to $1$ and the remaining \(m-p\) equal to $0$.  Hence
    \[
    \frac{(2p)!}{(2k_1)!\,(2k_2)!\,\cdots\,(2k_m)!}
    =\frac{(2p)!}{\underbrace{2!\,\cdots\,2!}_{p\text{ factors}}\;\underbrace{0!\,\cdots\,0!}_{m-p\text{ factors}}}
    =\frac{(2p)!}{(2!)^p}
    =\frac{(2p)!}{2^p}.
    \]
    There are \(\binom{m}{p}\) ways to choose which \(p\) indices satisfy \(k_i=1\), so the total sum is $\binom{m}{p}\,\frac{(2p)!}{2^p}$ as claimed.
\end{proof}

\begin{lemma}\label{lemma:W_bound}
    For any $m \geq 1$ and $p\geq 1$, the inequality 
$\displaystyle W(m,p) \leq 2^p m^p p!$ holds.
\end{lemma}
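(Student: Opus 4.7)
The plan is to work from the multinomial expression in the second equality of~\eqref{eqn:W}, namely
\[
W(m,p) = \sum_{\substack{k_1 + \cdots + k_m = p\\k_i \geq 0}} \frac{(2p)!}{(2k_1)!\cdots(2k_m)!},
\]
and convert the factorials of even integers in the denominators into factorials that allow a direct application of the multinomial theorem.

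First, I would establish the elementary inequality $(2k)! \geq 2^k\,k!$, valid for every integer $k\geq 0$ (it is trivial for $k=0,1$ and follows by an easy induction, pairing consecutive factors). Applying this bound termwise in the denominator gives
\[
\frac{(2p)!}{(2k_1)!\cdots(2k_m)!} \;\leq\; \frac{(2p)!}{2^{k_1+\cdots+k_m}\,k_1!\cdots k_m!} \;=\; \frac{(2p)!}{2^{p}\,k_1!\cdots k_m!},
\]
for any composition $k_1+\cdots+k_m = p$.

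Next, I would factor out $(2p)!/2^p$ and use the multinomial identity $\sum_{k_1+\cdots+k_m=p} \tfrac{p!}{k_1!\cdots k_m!} = m^p$, giving
\[
W(m,p) \;\leq\; \frac{(2p)!}{2^p\,p!}\,m^p.
\]
The proof concludes by applying the standard central-binomial bound $\binom{2p}{p}\leq 4^p$, equivalently $(2p)! \leq 4^p (p!)^2$, which yields
\[
W(m,p) \;\leq\; \frac{4^p (p!)^2}{2^p\,p!}\,m^p \;=\; 2^p\,m^p\,p!,
\]
as required. There is no real obstacle here; the only choice is to balance the two inequalities $(2k)!\geq 2^k k!$ and $(2p)! \leq 4^p (p!)^2$ so that the combinatorial sum reduces exactly to $m^p$ via the multinomial theorem.
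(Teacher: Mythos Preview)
Your proof is correct and follows essentially the same approach as the paper: both use $(2k)!\geq 2^k k!$ termwise, then the multinomial identity to obtain $W(m,p)\leq \frac{(2p)!}{2^p p!}m^p$, and finally the bound $(2p)!\leq 4^p(p!)^2$. The only cosmetic difference is that the paper derives this last inequality by comparing the odd product $1\cdot 3\cdots(2p-1)$ with $2\cdot 4\cdots 2p$, whereas you cite it as the central-binomial bound $\binom{2p}{p}\leq 4^p$; these are the same fact.
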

\begin{proof}
For any integer $k \geq 0$,
$(2k)! = 2^k k!(1 \cdot 3 \cdot \cdots \cdot (2k-1)) \geq 2^k k!$.
    Hence, from~\eqref{eqn:W} we obtain
    \begin{equation*}
       W(m,p) \leq \frac{(2p)!}{2^p} \sum_{\substack{k_1 + \cdots + k_m = p\\k_i \geq 0}} \frac{1}{k_1!\cdots k_m!} = \frac{(2p)!}{2^pp!}m^p
    \end{equation*}
    where we used the fact that
    \begin{equation*}
        \sum_{\substack{k_1 + \cdots + k_m = p\\k_i \geq 0}} \frac{1}{k_1!\cdots k_m!} = \frac{m^p}{p!}.
    \end{equation*}
    Finally the lemma follows since
$(2p)! = 2^pp!(1\cdot 3 \cdots (2p-1)) 
        \leq 2^pp!(2\cdot 4 \cdots 2p) = (2^pp!)^2$.
\end{proof}

\begin{lemma}\label{lemma:W_N_bound}
    For any $m \geq 1$ and $p\geq 1$  we have
    \begin{equation}
        W(m,p) - N(m,p) \leq (4e)^p m^{p-1}p!
    \end{equation}
\end{lemma}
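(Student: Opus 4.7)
The plan is to isolate the compositions $(k_1,\ldots,k_m)$ that contribute to $W(m,p)$ but not to $N(m,p)$, namely those with at least one index $i$ satisfying $k_i \geq 2$, and to reduce the corresponding sub-sum to a smaller instance $W(m,p-2)$ that is already controlled by \Cref{lemma:W_bound}. Concretely, from the multinomial formulas~\eqref{eqn:W} and~\eqref{eqn:N},
$$W(m,p)-N(m,p) \;=\; (2p)! \sum_{\substack{k_1+\cdots+k_m = p\\ \exists\, i:\, k_i\geq 2}} \frac{1}{(2k_1)!\cdots(2k_m)!},$$
so a union bound over the offending index $i$ together with permutation symmetry gives an upper bound of $m\,(2p)!$ times the same sum restricted to $k_1 \geq 2$.

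The key elementary step is the factorial identity $(2k_1)(2k_1-1)(2k_1-2)(2k_1-3) \geq 4!$ whenever $k_1 \geq 2$, so that after the substitution $k_1' = k_1 - 2 \geq 0$ the restricted sum is bounded by $W(m,p-2)/\bigl(4!\,(2p-4)!\bigr)$. This yields the clean inequality
$$W(m,p) - N(m,p) \;\leq\; m\binom{2p}{4}\, W(m,p-2),$$
which already handles the small cases $p=1$ (giving $0$) and $p=2$ (giving $m$, exactly sharp). Applying \Cref{lemma:W_bound} to the right-hand side then gives $W(m,p-2) \leq 2^{p-2} m^{p-2}(p-2)!$, so altogether
$$W(m,p) - N(m,p) \;\leq\; \binom{2p}{4} \cdot 2^{p-2}\, m^{p-1}(p-2)!.$$

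It remains to check that this right-hand side is at most $(4e)^p m^{p-1} p!$. Using $\binom{2p}{4} \leq 2p^4/3$ and $(p-2)! = p!/\bigl(p(p-1)\bigr)$, the bound becomes $\tfrac{2 p^3 \cdot 2^{p-2}}{3(p-1)} m^{p-1}p!$, which for $p \geq 2$ is at most $\tfrac{2p^2 \cdot 2^p}{3} m^{p-1} p!$ since $p^3/(p-1) \leq 4p^2$. The desired conclusion then reduces to the exponential-versus-polynomial comparison $2p^2/3 \leq (2e)^p$, straightforward to verify for all $p \geq 1$. I do not anticipate any genuine obstacle here: the argument is essentially mechanical once one spots the reduction $W-N \leq m\binom{2p}{4} W(m,p-2)$, and the only mildly delicate point is tracking constants precisely enough to arrive at $(4e)^p$ rather than a looser exponential base.
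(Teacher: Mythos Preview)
Your argument is correct. The reduction $W(m,p)-N(m,p)\le m\binom{2p}{4}W(m,p-2)$ via the union bound on the offending index and the elementary inequality $(2k_1)(2k_1-1)(2k_1-2)(2k_1-3)\ge 4!$ for $k_1\ge2$ is clean, and the subsequent constant-tracking using \Cref{lemma:W_bound} goes through as you describe (the edge cases $p=1,2$ are handled explicitly, and the final comparison $2p^2/3\le(2e)^p$ is routine).

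The paper proceeds differently: it applies $(2k)!\ge 2^k k!$ to pass from the even multinomial to the ordinary multinomial, obtaining
\[
W(m,p)-N(m,p)\;\le\;\frac{(2p)!}{2^p}\Bigl(\frac{m^p}{p!}-\binom{m}{p}\Bigr)
\;=\;\frac{(2p)!}{2^p p!}\,m^{p-1}\Bigl(m-m\prod_{j=1}^{p-1}(1-j/m)\Bigr),
\]
and then bounds the bracket via the convexity inequality $(1-\alpha)\ge e^{-\alpha/(1-\alpha)}$ together with $1-e^{-x}\le x$. Your route is more elementary in that it avoids any exponential inequalities and instead recycles \Cref{lemma:W_bound} after a purely combinatorial reduction; it also produces the structural intermediate bound $W-N\le m\binom{2p}{4}W(m,p-2)$, which is of independent interest and sharp at $p=2$. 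A minor bonus is that your argument is insensitive to the relation between $m$ and $p$, whereas the paper's product manipulation tacitly uses $0<j/m<1$, i.e.\ $m\ge p$ (the complementary case being absorbed by \Cref{lemma:W_bound} directly). The paper's approach, on the other hand, identifies the leading term $m^p - m!/(m-p)!$ exactly before bounding, which is slightly more informative about where the loss occurs.
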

\begin{proof}

Using~\eqref{eqn:W} and~\eqref{eqn:N} we obtain
\begin{equation*}
    W(m,p) - N(m,p) = \sum_{\substack{k_1 + \cdots + k_m = p\\ \text{at least one } k_i \geq 2}} \frac{(2p)!}{(2k_1)!\cdots(2k_m)!}.
\end{equation*}
Using similar arguments as in the proof of \Cref{lemma:W_bound},
\begin{align*}
    W(m,p) - N(m,p) &\leq \frac{(2p)!}{2^p} \sum_{\substack{k_1 + \cdots + k_m = p\\ \text{at least one } k_i \geq 2}} \frac{1}{k_1!\cdots k_m!} \\
    &= \frac{(2p)!}{2^p} \left(\sum_{\substack{k_1 + \cdots + k_m = p\\ k_i \geq 0}} \frac{1}{k_1!\cdots k_m!} - \sum_{\substack{k_1 + \cdots + k_m = p\\ k_i \in \{0,1\}}} \frac{1}{k_1!\cdots k_m!}\right) \\
    &=\frac{(2p)!}{2^p} \left(\frac{m^p}{p!} - \binom{m}{p}\right) \\
    &= \frac{(2p)!}{2^pp!}\left(m^p - \frac{m!}{(m-p)!}\right)\\
    &= \frac{(2p)!}{2^pp!} m^{p-1}\left(m - m\left(1 - \frac{1}{m}\right)\left(1 - \frac{2}{m}\right)\cdots\left(1 - \frac{p-1}{m}\right)\right)
\end{align*}
Now, for any $j = 1,\cdots,p-1$ then $0< \frac{j}{m} < 1$ and $(1 - \alpha) \geq \exp(-\frac{\alpha}{1-\alpha})$ for any $\alpha \in (0,1)$. Hence,
\begin{align*}
    W(m,p) - N(m,p) &\leq \frac{(2p)!}{2^pp!} m^{p-1}\left(m - m \prod_{j=1}^{p-1}\exp(-\frac{j/m}{1-j/m}) \right)\\
    &= \frac{(2p)!}{2^pp!} m^{p-1}\left(m - m \exp(-\frac{1}{m}\sum_{j=1}^{p-1} \frac{j}{1-j/m}) \right) \\
    &\leq \frac{(2p)!}{2^pp!} m^{p-1}\left(m - m \exp(-\frac{p(p-1)}{2m}) \right) \\
    &\leq \frac{(2p)!}{2^pp!} \frac{p(p-1)}{2} m^{p-1} 
    \leq (2p)^p\frac{p^2}{2} m^{p-1}
\end{align*}
where in the penultimate equality we used the simple fact that $1 - e^{-x} \leq x$ for any $x \geq 0$ and set $x = \frac{p(p-1)}{2m}$. The classical Stirling bound $p!\geq (p/e)^p$ easily yields $(2p)^p\frac{p^2}{2} \leq (4e)^p p!$, from which the result follows.
\end{proof}
\begin{proposition}
    For each $n$, suppose that $A_1^n,\ldots,A_d^n$ be i.i.d. matrices from the random Pauli string ensemble of dimension $N=2^n$ with $m_n$ strings each. If $\lim_{n\to\infty}m_n=\infty$, then
    \[
    \lim_{n\to\infty}\Ex\Big[\tr{A_{\bm{w}}^n}\Big]=\tau(X_{\bm{w}}),
    \]
    for all $\bm{w}\in\WW_d$, where $\tau$ is the law of $d$-free semicircular random variables.
\end{proposition}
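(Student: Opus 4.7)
The plan is to establish the quantitative estimate stated immediately before \cref{thm:sparse_approximation},
\begin{equation*}
    \Big|\Ex\big[\tr{A_{\bm{w}}^n}\big] - \tau(X_{\bm{w}})\Big| \leq (4e)^p\, p!\,\big(m_n^{-1} + 2^{-2n}\big),
\end{equation*}
for every word $\bm{w}\in\WW_d$ of even length $2p$; the proposition then follows at once by letting $n\to\infty$ together with $m_n\to\infty$. For odd-length words both sides vanish, by Rademacher sign symmetry on the left and by the parity of the free semicircular moments on the right.

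For even $\bm{w}$, the first step is to expand
\begin{equation*}
    \tr{A_{\bm{w}}^n}=m_n^{-p}\sum_{i_1,\ldots,i_{2p}\in [m_n]}\Big(\prod_{j=1}^{2p}r^{i_j}_{w_j}\Big)\,\tr{\prod_{j=1}^{2p}\sigma^{i_j}_{w_j}},
\end{equation*}
and take expectation. Independence of the Rademacher family from the Pauli family factorises the expectation, and the Rademacher expectation forces each pair $(i,\nu)$ to appear an even number of times in the multiset $\{(i_j,w_j)\}_{j=1}^{2p}$. I would then split the surviving index configurations into \emph{pair configurations}, in which each $(i,\nu)$ appears either $0$ or $2$ times, and \emph{higher-multiplicity configurations}, in which some $(i,\nu)$ appears at least four times. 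Applied colour-by-colour and combined via a union bound, \cref{lemma:W_bound} and \cref{lemma:W_N_bound} yield that the number of higher-multiplicity configurations is at most $(4e)^p m_n^{p-1} p!$, so their total contribution is bounded by $(4e)^p p! / m_n$ after the $m_n^{-p}$ normalisation, using only the elementary estimate $|\tr{\prod_j\sigma^{i_j}_{w_j}}|\leq 1$.

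For pair configurations, the relation declaring $j\sim j'$ precisely when $(i_j,w_j)=(i_{j'},w_{j'})$ defines a colour-respecting pair partition $\pi$ of $\{1,\ldots,2p\}$, and each such $\pi$ admits exactly $m_n^p(1+O(m_n^{-1}))$ index realisations. Writing the full Pauli string as a tensor product over the $n$ qubit factors gives $\tr{\prod_j\sigma^{i_j}_{w_j}}=\prod_{\ell=1}^n\tr{\prod_j\sigma(\ell)^{i_j}_{w_j}}$ with jointly independent components across~$\ell$, so the expectation factorises. Applying \cref{prop: pauli_products} to each single-qubit factor, with the ``letters'' of the word taken to be the $p$ blocks of $\pi$, one obtains $\Ex\big[\tr{\prod_j\sigma^{i_j}_{w_j}}\big]=1$ when $\pi$ is non-crossing and $|\cdot|\leq 4^{-n}=2^{-2n}$ when $\pi$ is crossing. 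Summing over colour-respecting non-crossing pair partitions reproduces the Wick formula of \cref{ex: semicircular} and hence $\tau(X_{\bm{w}})$, up to an $O(m_n^{-1})$ error from the counting factor, while the crossing partitions collectively contribute at most $(4e)^p p!\cdot 2^{-2n}$. Summing the three error terms yields the claimed bound. The main obstacle will be the bookkeeping in the pair-configuration step: verifying the bijective correspondence with colour-respecting pair partitions, controlling the counting $m_n^p(1+O(m_n^{-1}))$ uniformly in $\pi$, and confirming that the tensor-product structure propagates the single-qubit $1/4$ bound to $4^{-n}$ so as to produce the $2^{-2n}$ term with the correct constant.
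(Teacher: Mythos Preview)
Your proposal is correct and follows essentially the same strategy as the paper. In fact you are proving the paper's subsequent \emph{quantitative} proposition and deducing the present qualitative statement from it, which is a perfectly valid (and arguably cleaner) route. The only organisational difference is that the paper splits the index words by the number of distinct letters in $[m_n]$ (a colour-blind weight), whereas you split by the pair $(i,\nu)$ structure; both feed into \cref{prop: pauli_products} and the tensor-product argument in the same way. One small caveat: your colour-by-colour telescoping of $\prod_\nu W(m_n,p_\nu)-\prod_\nu N(m_n,p_\nu)$ via \cref{lemma:W_bound} and \cref{lemma:W_N_bound} produces an extra factor of $d$ in front of $(4e)^p p!\,m_n^{-1}$, so the constant you state is not quite attained by that route, but this is immaterial for the limit.
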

\begin{proof}
    First note that that
    \[
    \Ex\Big[\tr{A_{\bm{w}}^n}\Big]=m_n^{-\nicefrac{|\bm{w}|}{2}}\sum_{\mathbf{u}}\Ex[r_{\bm{w}}^{\mathbf{u}}]\Ex\big[\tr{\sigma_{\bm{w}}^{\mathbf{u}}}\big],
    \]
    where the sum runs over words $\mathbf{u}$ in $m_n$-letters of length $|\bm{w}|$. Since $\Ex[r_{\bm{w}}^{\mathbf{u}}]$ disappears unless each letter in $\mathbf{u}$ appears an even number of times, we need only consider the case where $|\bm{w}|=2p$ for $p\in\N$. In this case, we still require each letter in $\mathbf{u}$ to appear at least twice so that the total number of distinct letters in $\mathbf{u}$ is at most $p$. We will write $\wt{\mathbf{u}}$ for the number of distinct letters in $\mathbf{u}$. We say that two words $\mathbf{u}$ and $\mathbf{v}$ are equivalent if there is a bijection on $[m_n]$ that takes $\mathbf{u}$ to $\mathbf{v}$. We have the following facts:
    \begin{enumerate}[label=\arabic*)]
        \item The number of words in an equivalence class of a given $m$-word with weight $t$ is exactly
        \[
        m_n(m_n-1)\cdots(m_n-t+1)\leq m_n^t.
        \]
        \item The number of equivalence classes of words with weight $t$ is bounded by $t^{2p}$.
    \end{enumerate}
    By~\eqref{eq: trace_p_string} and the property of traces of the group $G$, then $\big|\Ex\big[\tr{\sigma_{\bm{w}}^{\mathbf{u}}}\big]\big|\leq 1$. Combining these facts yields
    \begin{equation}
        \sum_{\mathbf{u}:\wt{\mathbf{u}}\leq p-1}\Big| \Ex[r_{\bm{w}}^{\mathbf{u}}]\Ex\big[\tr{\sigma_{\bm{w}}^{\mathbf{u}}}\big]\Big|\leq \sum_{t=1}^{p-1}m_n^t  t^{2p}\leq m_n^{p-1} p^{2p+1}.
    \end{equation}
    We turn our attention to words of weight $p$. First, we see that
    \[
    \Ex\big[\tr{\sigma_{\bm{w}}^{\mathbf{u}}}\big]=\Ex\big[\tr{\sigma(1)_{\bm{w}}^\mathbf{u}}\cdots\tr{\sigma(n)_{\bm{w}}^{\mathbf{u}}}\big]=\Ex\big[\tr{\sigma(1)_{\bm{w}}^{\mathbf{u}}}\big]^n,
    \]
    by the i.i.d. construction of the Pauli strings. Consider first case where pairing equal letters in $\mathbf{u}$ defines a crossing pair partition, then by \cref{prop: pauli_products} we see that know that
    \[
    \Big|\Ex\big[\tr{\sigma(1)_{\bm{w}}^{\mathbf{u}}}\big]^{n}\Big|\leq 2^{-2n}.
    \]
    The number of words of length $2p$ with weight $p$, where distinct letters appear exactly twice may be bounded by $m^p \frac{p^{2p}}{p!}$. And so the total contribution of these terms is bounded by
    $m_n^p \frac{p^{2p}}{p!} 2^{-2n}$.
    Finally, consider those words $\mathbf{u}$ where pairing equal letters results in a non-crossing pair partition. Then
    \[
    \Ex[r_{\bm{w}}^{\mathbf{u}}]\Ex\big[\tr{\sigma_{\bm{w}}^{\mathbf{u}}}\big] = \prod_{(a,b)\in\pi_{\mathbf{u}}}\id_{w_a=w_b},
    \]
    where $\pi_\mathbf{u}$ is the pair partition of the letters $\{1,\ldots,2p\}$ defined by the pairing of the letters in $\mathbf{u}$. It is easy to see that two valid words $\mathbf{u}$ and $\mathbf{v}$ are equivalent if and only if the non-crossing pair partitions they each generate are equal. It remains then to count the number of non-crossing pair partitions of $\{1,\ldots,2p\}$ for which
    \[
     \prod_{(a,b)\in\pi_{\mathbf{u}}}\id_{w_a=w_b}=1.
    \]
    Recalling the discussion in \cref{sec: sd_intro}, this is one of the defining properties of $\tau(X_{\bm{w}})$. The number of words in each equivalence class is $m_n(m_n-1)\cdots (m_n-p+1)$. It follows that
    \[
    \lim_{n\to \infty}\Ex\Big[\tr{A_{\bm{w}}^n}\Big]
     = \lim_{n\to\infty}
     \left\{\tau(X_{\bm{w}})\frac{m_n(m_n-1)\cdots (m_n-p+1)}{m^p}+\OO(2^{-2n})+\OO(m_n^{-1})\right\}
     = \tau(X_{\bm{w}}),
    \]
    where the first term is the contribution of the non-crossing pair partitions, the second the crossing pair partitions and the third is the contribution of terms with weight less than $p$.
\end{proof}

We now derive a quantitative version of the previous result.
\begin{proposition}
    For each $n$, suppose that $A_1^n,\ldots,A_d^n$ be i.i.d. matrices from the random Pauli string ensemble of dimension $N=2^n$ with $m_n$ strings each. Then
    \[
    \left|\Ex\Big[\tr{A_{\bm{w}}^n}\Big] - \tau(X_{\bm{w}})\right| \leq (4e)^pp!\left(m_n^{-1} + 2^{-2n}\right),
    \]
    where $\tau$ is the law of $d$-free semicircular random variables and $\bm{w}\in\WW_d$ has length $2p$ for some $p\in\N$.
\end{proposition}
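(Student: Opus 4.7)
The plan is to revisit the three-way decomposition used in the preceding qualitative limit proof, but this time to substitute the sharper combinatorial counts of Lemmas~\ref{lemma:W_bound} and~\ref{lemma:W_N_bound} for the cruder estimates $p^{2p+1}$ and $p^{2p}/p!$ that were used there. Starting from the expansion
\[
\Ex\big[\tr{A_{\bm{w}}^n}\big] = m_n^{-p}\sum_{\mathbf{u}}\Ex\big[r_{\bm{w}}^{\mathbf{u}}\big]\,\Ex\big[\tr{\sigma_{\bm{w}}^{\mathbf{u}}}\big],
\]
the Rademacher factor vanishes unless every letter of $\mathbf{u}$ appears an even number of times, so I would partition the surviving multi-indices (whose total count is $W(m_n,p)$) by their weight $\wt{\mathbf{u}}$: the low-weight regime $\wt{\mathbf{u}}\le p-1$, and the full-weight regime $\wt{\mathbf{u}}=p$, the latter being further split according to whether the induced pair partition is crossing or non-crossing.

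For the low-weight regime, Lemma~\ref{lemma:W_N_bound} gives $W(m_n,p)-N(m_n,p)\le(4e)^p m_n^{p-1}p!$, which combined with the trivial bound $|\Ex[r_{\bm{w}}^{\mathbf{u}}]\Ex[\tr{\sigma_{\bm{w}}^{\mathbf{u}}}]|\le 1$ and the $m_n^{-p}$ prefactor contributes at most $(4e)^p p!/m_n$. For the full-weight crossing regime, Proposition~\ref{prop: pauli_products} bounds $|\Ex[\tr{\sigma(1)_{\bm{w}}^{\mathbf{u}}}]|\le 1/4$, and the tensor factorisation $\Ex[\tr{\sigma_{\bm{w}}^{\mathbf{u}}}]=\Ex[\tr{\sigma(1)_{\bm{w}}^{\mathbf{u}}}]^n$ promotes this to $2^{-2n}$ per term; together with the elementary count $N(m_n,p)\le m_n^p(2p)!/(2^p p!)\le 2^p m_n^p p!$ (via $(2p-1)!!\le 2^p p!$), this contributes at most $2^p p!\cdot 2^{-2n}\le(4e)^p p!\cdot 2^{-2n}$.

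For the full-weight non-crossing regime, equivalence classes of admissible $\mathbf{u}$ are in bijection with non-crossing pair partitions $\pi$ of $\{1,\dots,2p\}$ compatible with $\bm{w}$ (those for which $\prod_{(a,b)\in\pi}\id_{w_a=w_b}=1$), each class of size $m_n(m_n-1)\cdots(m_n-p+1)$; summing therefore returns exactly $\tau(X_{\bm{w}})$ multiplied by the falling-factorial ratio $\prod_{j=1}^{p-1}(1-j/m_n)$. The remaining deviation from $\tau(X_{\bm{w}})$ is thus $|\tau(X_{\bm{w}})|\bigl(1-\prod_{j=1}^{p-1}(1-j/m_n)\bigr)\le 4^p p(p-1)/(2m_n)$, using the inequality $1-\prod(1-x_j)\le\sum x_j$ together with the Catalan bound $|\tau(X_{\bm{w}})|\le C_p\le 4^p$.

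Adding the three contributions and applying the triangle inequality then yields the stated estimate. The only subtle point is that the low-weight term already saturates the $(4e)^p p!$ prefactor of $m_n^{-1}$, so one must verify that the non-crossing correction $4^p p(p-1)/2$ and the crossing correction $2^p p!$ both fit inside this same envelope; this follows from the elementary inequalities $2^p\le(4e)^p$ and $p(p-1)\le e^p p!$, which in fact leave substantial slack thanks to Stirling's lower bound $p!\ge (p/e)^p$.
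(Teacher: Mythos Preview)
Your proposal is correct and follows essentially the same approach as the paper: the same three-way split into low-weight words (controlled by Lemma~\ref{lemma:W_N_bound}), weight-$p$ crossing words (controlled by Proposition~\ref{prop: pauli_products} and the count $N(m_n,p)$), and weight-$p$ non-crossing words (controlled by the falling-factorial ratio and the Catalan bound $|\tau(X_{\bm w})|\le 4^p$). The only cosmetic differences are in how you package the elementary inequalities $(2p-1)!!\le 2^p p!$ and $p(p-1)\le e^p p!$, which are equivalent to the paper's Stirling-type estimates.
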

\begin{proof}
    As in the previous proposition, we note that
    \[
    \Ex\Big[\tr{A_{\bm{w}}^n}\Big]=m_n^{-\nicefrac{|\bm{w}|}{2}}\sum_{\mathbf{u}}\Ex[r_{\bm{w}}^{\mathbf{u}}]\Ex\big[\tr{\sigma_{\bm{w}}^{\mathbf{u}}}\big],
    \]
    where the sum runs over words $\mathbf{u}$ in $m$-letters of length $|\bm{w}|$. Again, it suffices to consider words $\mathbf{w}$ of even length $|\mathbf w| = 2p$, with $p \geq 1$. We write $\wt{\mathbf{u}}$ for the number of distinct letters in $\mathbf{u}$. As in the previous proposition, we split this sum into three parts.
    \begin{enumerate}
        \item[(i)] (Words of weight $t<p$) It is easy to see that
        \begin{equation*}
            W(m,p) - N(m,p) = \# \{\mathbf u : \wt{\mathbf u} < p\}
        \end{equation*}
        Thus, using \Cref{lemma:W_N_bound} we obtain directly 
        \begin{equation}
            m_n^{-p}\sum_{\mathbf{u}:\wt{\mathbf{u}}<p}\Big| \Ex[r_{\bm{w}}^{\mathbf{u}}]\Ex\big[\tr{\sigma_{\bm{w}}^{\mathbf{u}}}\big]\Big| = m_n^{-p}(W(m_n,p) - N(m_n,p))
            \leq (4e)^p p! \frac{1}{m_n}\label{i}
        \end{equation}
        \item[(ii)] (Words of weight $t=p$) As before, we need to distinguish two sub-cases:
        \begin{enumerate}
            \item[Case 1:] where pairing equal letters in $\mathbf{u}$ defines a crossing pair partition. By \Cref{prop: pauli_products}
            \begin{equation*}
                \Ex\big[\tr{\sigma_{\bm{w}}^{\mathbf{u}}}\big]=\Ex\big[\tr{\sigma(1)_{\bm{w}}^{\mathbf{u}}}\big]^n \leq 2^{-2n}.
            \end{equation*}
            The count of such words is at most $N(m_n,p) = \binom{m_n}{p}\frac{(2p)!}{2^p} \leq m_n^p\frac{(2p)!}{2^pp!}$. Hence,
            \begin{align}\label{ii}
                m_n^{-p}\sum_{\substack{\mathbf{u}:\wt{\mathbf{u}}=p \\ \text{Case 1}}}\Big| \Ex[r_{\bm{w}}^{\mathbf{u}}]\Ex\big[\tr{\sigma_{\bm{w}}^{\mathbf{u}}}\big]\Big| \leq \frac{(2p)!}{2^pp!}2^{-2n} \leq (4e)^pp!2^{-2n},
            \end{align}
            where we used the Stirling bound $(2p)!/(2^pp!) \leq (4e)^pp!$.
            \item[Case 2:] where pairing equal letters in $\mathbf{u}$ results in a non-crossing pair partition. Using the same counting arguments as in the proof of the previous proposition we have
            \begin{align*}
                m_n^{-p}\sum_{\substack{\mathbf{u}:\wt{\mathbf{u}}=p \\ \text{Case 2}}} \Ex[r_{\bm{w}}^{\mathbf{u}}]\Ex\big[\tr{\sigma_{\bm{w}}^{\mathbf{u}}}\big] &= \tau(X_{\bm{w}})\frac{m_n(m_n-1)\cdots(m_n-p+1)}{m_n^p}
            \end{align*}
            Using the fact that 
            \begin{equation}\label{eq: m_bound}
                \left|\frac{m(m-1)\cdots(m-p+1)}{m^p}-1\right|\leq \frac{p^2}{2m},
            \end{equation}
            yields
            \begin{equation}\label{iii}
                \left|m_n^{-p}\sum_{\substack{\mathbf{u}:\wt{\mathbf{u}}=p \\ \text{Case 2}}} \Ex[r_{\bm{w}}^{\mathbf{u}}]\Ex\big[\tr{\sigma_{\bm{w}}^{\mathbf{u}}}\big| - \tau(X_{\mathbf w})\right| \leq \frac{p^2}{2m_n}|\tau(X_{\bm{w}})| \leq \frac{(4e)^pp!}{m_n},
            \end{equation}
            where we used that fact that $|\tau(X_{\mathbf w})| \leq 4^p$.
        \end{enumerate}
    \end{enumerate}
    Combining the bounds~\eqref{i}, \eqref{ii} and~\eqref{iii} yields
    \[
    \left|\Ex\Big[\tr{A_{\bm{w}}^n}\Big] - \tau(X_{\bm{w}})\right| \leq (4e)^pp!\left(2m_n^{-1} + 2^{-2n}\right),
    \]
\end{proof}

\begin{proposition}
   For each $n$, suppose that $A_1^n,\ldots,A_d^n$ be i.i.d. matrices from the random Pauli string ensemble of dimension $N=2^n$ with $m_n$ strings each. Then
    \[
    \Big|\Ex\Big[\tr{A_{\bm{u}}^n}\tr{A_{\bm{v}}^n}\Big]-\tau(X_{\bm{u}})\tau(X_{\bm{v}})\Big|\leq (4e)^pp!\left(2m_n^{-1} + 2^{-2n}\right),
    \]
    for all words $\bm{u},\bm{v}\in\WW_d$ such that $|\bm{u}|+|\bm{v}|=2p$ for some $p\in\N$, with the left-hand-side being zero otherwise.
\end{proposition}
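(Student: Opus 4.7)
The proof will closely follow the strategy of the preceding proposition, so I would begin by expanding
\[
\Ex\Big[\tr{A_{\bm u}^n}\tr{A_{\bm v}^n}\Big]=m_n^{-p}\sum_{\mathbf k,\mathbf l}\Ex\big[r_{\bm u}^{\mathbf k}r_{\bm v}^{\mathbf l}\big]\,\Ex\big[\tr{\sigma_{\bm u}^{\mathbf k}}\tr{\sigma_{\bm v}^{\mathbf l}}\big],
\]
where $\mathbf k,\mathbf l$ range over words of lengths $|\bm u|,|\bm v|$ in the alphabet $[m_n]$. The concatenation $\mathbf k\mathbf l$ has length $2p$, and Rademacher independence forces every index in $[m_n]$ appearing in $\mathbf k\mathbf l$ to occur an even number of times overall, with matching $\bm u$- or $\bm v$-labels at its paired positions. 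When $|\bm u|+|\bm v|$ is odd this is impossible, so both sides of the claimed inequality vanish; I therefore reduce to the case $|\bm u|+|\bm v|=2p$ and split the sum by the weight $t$ of distinct indices in $\mathbf k\mathbf l$.

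The weight $t<p$ part is handled exactly as in the single-trace proof: \cref{lemma:W_N_bound} bounds the count by $(4e)^pm_n^{p-1}p!$, each summand has absolute value at most $1$ since traces of Pauli-group elements are bounded by $1$, and the total contribution is $(4e)^pp!/m_n$. For $t=p$, each index appears exactly twice and I would subdivide into an \emph{internal-pair} regime (each pair has both endpoints in $\mathbf k$ or both in $\mathbf l$) and a \emph{crossing} regime (at least one pair straddles $\mathbf k$ and $\mathbf l$). In the internal-pair regime the index sets of $\mathbf k$ and $\mathbf l$ are disjoint, so independence factorises both the Rademacher and the per-qubit Pauli expectations into the single-trace ones, and \cref{prop: pauli_products} applied per coordinate then reduces to the analysis already carried out: the leading contribution comes from both internal pair partitions being non-crossing and equals $\tau(X_{\bm u})\tau(X_{\bm v})\,m_n!/((m_n-p)!\,m_n^p)$, while any crossing in $\mathbf k$ or $\mathbf l$ gives a per-coordinate factor $\leq 1/4$, i.e.\ $\leq 2^{-2n}$ after multiplying over $n$ qubits. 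The crossing regime is the genuinely new ingredient: here \cref{lem: cross_pauli_prod} applied per coordinate yields $|\Ex[\tr{\sigma_{\bm u}^{\mathbf k}}\tr{\sigma_{\bm v}^{\mathbf l}}]|\leq 4^{-n}=2^{-2n}$, and the count $N(m_n,p)\leq m_n^p(4e)^pp!$ makes the total contribution of order $(4e)^pp!\cdot 2^{-2n}$.

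Assembling the four contributions and using $|\tau(X_{\bm u})\tau(X_{\bm v})|\leq 4^p$ together with $|1-m_n!/((m_n-p)!m_n^p)|\leq p^2/(2m_n)$ from~\eqref{eq: m_bound} to convert the leading coefficient back to $\tau(X_{\bm u})\tau(X_{\bm v})$ plus an $O(m_n^{-1})$ error, yields the stated bound $(4e)^pp!(2m_n^{-1}+2^{-2n})$. The main obstacle is the bookkeeping of the internal-pair regime: one must verify that when exactly one of $\mathbf k,\mathbf l$ has a crossing internal pairing the full $2^{-2n}$ saving is recovered (it is, since the factorisation of the per-qubit trace expectation lets \cref{prop: pauli_products} contribute an independent coordinate-wise factor $\leq 1/4$), and that the combined count of internal-plus-crossing configurations absorbs cleanly into the $N(m_n,p)$ bound without degrading the constants beyond the claimed $(4e)^pp!$ factor.
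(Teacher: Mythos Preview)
Your proposal is correct and follows essentially the same approach as the paper. The only cosmetic difference is in how the weight-$p$ case is organised: the paper lumps ``at least one shared index between $\mathbf a$ and $\mathbf b$'' together with ``internal pairing has a crossing'' into a single Case~1 (both handled by \cref{prop: pauli_products} and \cref{lem: cross_pauli_prod}, both bounded by $N(m_n,p)\cdot 2^{-2n}$), whereas you separate the straddling-pair regime from the internal-crossing regime; the lemmas, counts, and final constants are identical.
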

\begin{proof}
    We first write
    \[
    \Ex\Big[\tr{A_{\bm{u}}^n}\tr{A_{\bm{v}}^n}\Big]=m_n^{-\nicefrac{(|\bm{u}| + |\bm{v}|)}{2}}\sum_{\mathbf{a},\mathbf{b}}\Ex[r_{\bm{u}}^{\mathbf{a}}r_{\bm{v}}^{\mathbf{b}}]\Ex\big[\tr{\sigma_{\bm{u}}^{\mathbf{a}}}\tr{\sigma_{\bm{v}}^{\mathbf{b}}}\big],
    \]
    where the sum runs over words $\mathbf{a}$ and $\mathbf{b}$ in the letters $[m]$ of length $|\bm{u}|$ and $|\bm{v}|$ respectively. Due to the Rademacher random variables, we need only consider the case where $|\bm{u}|+|\bm{v}|=2p$ for some $p\in \N$. Moreover, we require each letter in the word $\mathbf{a}\mathbf{b}$ to appear at least twice for the expectation to be non-zero. As before, we then split the some into three parts.
    \begin{enumerate}[label=(\roman*)]
        \item (Words of weight $t<p$) Using \cref{prop: pauli_products} we directly compute
        \begin{equation}\label{eq: case_1}
        \begin{split}
            m_n^{-p}\sum_{\mathbf{a},\mathbf{b}:\wt{\mathbf{ab}}<p}\Big| \Ex[r_{\bm{u}}^{\mathbf{a}}r_{\bm{v}}^{\mathbf{b}}]\Ex\big[\tr{\sigma_{\bm{u}}^{\mathbf{a}}}\tr{\sigma_{\bm{v}}^{\mathbf{b}}}\big]\Big| &= m_n^{-p}(W(m_n,p) - N(m_n,p)) \\
            &\leq (4e)^p p! \frac{1}{m_n}
        \end{split}
        \end{equation}
        \item (Words of weight $t=p$) We now consider two subcases
        \begin{enumerate}
            \item[Case 1:] where at least one letter appears in both $\mathbf{a}$ and $\mathbf{b}$ or where pairing equal letters in $\mathbf{a}$ and $\mathbf{b}$ results in a crossing. By noting that,
             \begin{align*}
            \Ex\big[\tr{\sigma_{\bm{u}}^{\mathbf{a}}}\tr{\sigma_{\bm{v}}^{\mathbf{b}}}\big]&=\Ex\big[\tr{\sigma(1)_{\bm{u}}^\mathbf{a}}\tr{\sigma(1)_{\bm{v}}^\mathbf{b}}\cdots\tr{\sigma(n)_{\bm{u}}^{\mathbf{a}}}\tr{\sigma(n)_{\bm{v}}^{\mathbf{b}}}\big]\\
            &=\Ex\big[\tr{\sigma(1)_{\bm{u}}^{\mathbf{a}}}\tr{\sigma(1)_{\bm{v}}^{\mathbf{b}}}\big]^n,
            \end{align*}
            we may apply \cref{prop: pauli_products,lem: cross_pauli_prod} to see that
            \begin{equation}\label{eq: case_2_i}
                m_n^{-p}\sum_{\substack{\mathbf{a},\mathbf{b}:\wt{\mathbf{ab}}=p \\ \text{Case 1}}}\Big| \Ex[r_{\bm{u}}^{\mathbf{a}}r_{\bm{v}}^{\mathbf{b}}]\Ex\big[\tr{\sigma_{\bm{u}}^{\mathbf{a}}}\tr{\sigma_{\bm{v}}^{\mathbf{b}}}\big]\Big| \leq \frac{(2p)!}{2^pp!}2^{-2n} \leq (4e)^pp!2^{-2n},
            \end{equation}
            where we have used the fact that the total number of terms in the sum is bounded $N(m_n,p)$.
            \item[Case 2:] where the words $\mathbf{a}$ and $\mathbf{b}$ are distinct and pairing equal letters in $\mathbf{ab}$ results in a non-crossing pair partition. In this case
            \[
            \Ex\big[\tr{\sigma_{\bm{u}}^{\mathbf{a}}}\tr{\sigma_{\bm{v}}^{\mathbf{b}}}\big]=\Ex\big[\tr{\sigma(1)_{\bm{u}}^{\mathbf{a}}}\big]^n\Ex\big[\tr{\sigma(1)_{\bm{v}}^{\mathbf{b}}}\big]^n=1.
            \]
            We then see that
            \[
            \Ex[r_{\bm{u}}^{\mathbf{a}}r_{\bm{v}}^{\mathbf{b}}]\Ex\big[\tr{\sigma_{\bm{u}}^{\mathbf{a}}}\tr{\sigma_{\bm{v}}^{\mathbf{b}}}\big]=\prod_{(a,b)\in\pi_{\mathbf{a}}}\id_{\bm{u}_a=\bm{u}_b}\prod_{(c,d)\in\pi_{\mathbf{b}}}\id_{\bm{v}_c=\bm{v}_d},
            \]
            where $\pi_{\mathbf{a}}$ is the non-crossing pair partition of the letters $\{1,\ldots,|\bm{w}|\}$ defined by the pairing of the letters in $\mathbf{a}$, with $\pi_{\mathbf{b}}$ defined similarly. Words $\mathbf{ab}$ and $\mathbf{cd}$ in Case 2 are then equivalent if and only if $\pi_{\mathbf{a}}=\pi_{\mathbf{c}}$ and $\pi_{\mathbf{b}}=\pi_{\mathbf{d}}$. It follows that the total number of equivalence classes in Case 2 is given by $\tau(X_{\bm{u}})\tau(X_{\bm{w}})$. Using the counting arguments as in the previous proposition, we have that 
            \[
                 m_n^{-p}\sum_{\substack{\mathbf{a},\mathbf{b}:\wt{\mathbf{ab}}=p \\ \text{Case 2}}}\Ex[r_{\bm{u}}^{\mathbf{a}}r_{\bm{v}}^{\mathbf{b}}]\Ex\big[\tr{\sigma_{\bm{u}}^{\mathbf{a}}}\tr{\sigma_{\bm{v}}^{\mathbf{b}}}\big] = \tau(X_{\bm{u}})\tau(X_{\bm{v}})\frac{m_n(m_n-1)\cdots(m_n-p+1)}{m_n^p}.
            \]
            Using the fact
            \[
            \left|\frac{m_n(m_n-1)\cdots(m_n-p+1)}{m_n^p}-1\right|\leq \frac{p^2}{2m},
            \]
            we conclude that
            \begin{equation}\label{eq: case 2_ii}
                \left|m_n^{-p}\sum_{\substack{\mathbf{a},\mathbf{b}:\wt{\mathbf{ab}}=p \\ \text{Case 2}}} \Ex\left[r_{\bm{u}}^{\mathbf{a}}r_{\bm{v}}^{\mathbf{b}}\right]
                \Ex\left[\tr{\sigma_{\bm{u}}^{\mathbf{a}}}\tr{\sigma_{\bm{v}}^{\mathbf{b}}}\right] - \tau(X_{\bm{u}})\tau(X_{\bm{v}})\right|\leq (4e)^pp!\frac{1}{m_n}.
            \end{equation}
        \end{enumerate}
    \end{enumerate}
    Combining~\eqref{eq: case_1}-\eqref{eq: case_2_i}-\eqref{eq: case 2_ii} gives
    \[
    \left|m_n^{-p}\sum_{\mathbf{a},\mathbf{b}} \Ex\left[r_{\bm{u}}^{\mathbf{a}}r_{\bm{v}}^{\mathbf{b}}\right]
    \Ex\left[\tr{\sigma_{\bm{u}}^{\mathbf{a}}}\tr{\sigma_{\bm{v}}^{\mathbf{b}}}\right] - \tau(X_{\bm{u}})\tau(X_{\bm{v}})\right|\leq (4e)^pp!\left(2m_n^{-1}+2^{-2n}\right).
    \]
\end{proof}

As a corollary of all the results, we can now prove \Cref{thm:sparse_approximation}. We state it again for convenience.

\begin{theorem}[\cref{thm:sparse_approximation}]
    For each $n$, suppose that $A_1^n,\ldots,A_d^n$ be i.i.d. matrices from the random Pauli string ensemble of dimension $N=2^n$ with $m_n$ strings each. Let $U^n_\gamma$ be the solution to
    \[
    \dif U^n_\gamma(s,t) = i\sum_{j=1}^d U^n_\gamma(s,t)A_j^n\dif\gamma_t^j,\qquad U^n_\gamma(s,s)=I,
    \]
    and let
    \[
    \pdg{s}{t}=\sum_{\bm{w}\in\WW_d}i^{|\bm{w}|}\tau(X_{\bm{w}})\SS^{\bm{w}}_{s,t}(\gamma).
    \]
    Then there exists $C>0$ depending only on the length of~$\gamma$ and the dimension of the underlying space~$V$ for which
    \begin{equation}
        \Ex\bigg[\Big(\tr{U^n_\gamma(s,t)}-\pdg{s}{t}\Big)^2\bigg]< C\big(m_n^{-1}+2^{-2n}\big).
    \end{equation}
\end{theorem}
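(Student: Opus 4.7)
The plan is to expand the quantities on each side of the target inequality as convergent series indexed by words in $\WW_d$. Using the Picard iteration for the path-ordered exponential, we have
\[
\tr{U^{\text{SP}}_{\gamma_{s,t}}(\alpha(m),n)} = \sum_{\bm{w}\in\WW_d} i^{|\bm{w}|}\tr{A^n_{\bm{w}}}\,\SS^{\bm{w}}_{s,t}(\gamma),\qquad \pdg{s}{t} = \sum_{\bm{w}\in\WW_d} i^{|\bm{w}|}\tau(X_{\bm{w}})\,\SS^{\bm{w}}_{s,t}(\gamma),
\]
so that after subtracting, taking the squared modulus (the random trace is a priori complex valued), and expectation, the target quantity becomes a double sum over pairs $(\bm{u},\bm{v})\in\WW_d\times\WW_d$ of weighted covariance terms $\mathbb{E}\big[(\tr{A^n_{\bm{u}}}-\tau(X_{\bm{u}}))\overline{(\tr{A^n_{\bm{v}}}-\tau(X_{\bm{v}}))}\big]$ multiplied by signature coordinates and phases.

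The next step is to bound each covariance by decomposing it into three pieces of the form
\[
\mathbb{E}[\tr{A^n_{\bm{u}}}\tr{A^n_{\bm{v}}}]-\tau(X_{\bm{u}})\tau(X_{\bm{v}}),\quad \tau(X_{\bm{v}})\big(\mathbb{E}[\tr{A^n_{\bm{u}}}]-\tau(X_{\bm{u}})\big),\quad \tau(X_{\bm{u}})\big(\mathbb{E}[\tr{A^n_{\bm{v}}}]-\tau(X_{\bm{v}})\big),
\]
applying the two preceding propositions controlling the first and second mixed moments of the random Pauli ensemble, and using the Catalan-number estimate $|\tau(X_{\bm{w}})|\leq 2^{|\bm{w}|}$ for the free-semicircular marginals $\tau(X_{\bm{w}})$. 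Combining these yields a bound of the form $C_1(4e)^p p!\,(m_n^{-1}+2^{-2n})$ with $p=(|\bm{u}|+|\bm{v}|)/2$, vanishing when $|\bm{u}|+|\bm{v}|$ is odd by the Rademacher parity argument used in those propositions.

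Finally, I would sum over $(\bm{u},\bm{v})$ using the factorial decay $|\SS^{\bm{w}}_{s,t}(\gamma)|\leq L^{|\bm{w}|}/|\bm{w}|!$, where $L=\|\gamma\|_{1\text{-var}}$, and the fact that there are $d^{|\bm{w}|}$ words of length $|\bm{w}|$. Grouping by total length $k=|\bm{u}|+|\bm{v}|=2p$ and using $\sum_{u+v=k}\frac{1}{u!v!}=\frac{2^k}{k!}$, the sum collapses to
\[
C_2\big(m_n^{-1}+2^{-2n}\big)\sum_{p\geq 0}\frac{(16e(dL)^2)^p\,p!}{(2p)!},
\]
which converges absolutely thanks to the bound $(2p)!\geq(p!)^2$, giving a constant depending only on $d$ and $L$. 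Since $2^{-2n}=4^{-n}$, this is precisely the desired estimate.

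The main obstacle is justifying the interchange of the infinite sum over words with the expectation, mirroring the role of \Cref{lem: sub_gaussian_tail} in the proof of \Cref{prop: limits_interchange}. Here one can argue using the deterministic bound $\|A^n_j\|_{\mathrm{op}}\leq\sqrt{m_n}$ together with factorial decay of the signature, or more sharply by establishing a subgaussian tail for the operator norm of the Pauli ensemble and repeating the Fubini--Tonelli argument of \Cref{prop: limits_interchange}. The other delicate point is bookkeeping the $p!$ factor in the covariance bound: any slower growth than $(p!)^2$ in $(2p)!$ would make the final sum diverge, so the exact combinatorial cancellation $p!/(2p)!\leq 1/p!$ is essential.
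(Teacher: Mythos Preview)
Your proposal is correct and follows essentially the same approach as the paper: expand the squared difference as a double sum over words, decompose each covariance term into the same three pieces, apply the two propositions on first and second mixed moments of the Pauli ensemble, and sum using the factorial decay of the signature coordinates. Your final summation bookkeeping (via $\sum_{u+v=2p}\tfrac{1}{u!v!}=\tfrac{4^p}{(2p)!}$ and $(2p)!\ge (p!)^2$) is slightly sharper than the paper's cruder bound, and your care in justifying the sum--expectation interchange via the deterministic estimate $\|A^n_j\|_{\mathrm{op}}\le\sqrt{m_n}$ addresses a point the paper passes over silently.
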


\begin{proof}
    We start by expanding
    \begin{align*}
        \Big(\tr{U^n_\gamma(s,t)}-\pdg{s}{t}\Big)^2=&\sum_{\bm{u},\bm{v}\in \WW_d}i^{|\bm{u}|+|\bm{v}|}\Big(\tr{A_{\bm{u}}}\tr{A_{\bm{v}}}-\tau(X_{\bm{u}})\tau(X_{\bm{v}})\\
        &\qquad\qquad+\tau(X_{\bm{u}})\tau(X_{\bm{v}})-\tr{A_{\bm{u}}}\tau(X_{\bm{v}})\\
        &\qquad\qquad-\tau(X_{\bm{u}})\tr{A_{\bm{v}}}+\tau(X_{\bm{u}})\tau(X_{\bm{v}})\Big)\SS_{s,t}^{\bm{u}}(\gamma)\SS_{s,t}^{\bm{v}}(\gamma).
    \end{align*}
    We start by dealing with the first term.
    \begin{align*}
        \sum_{\bm{u},\bm{v}\in\WW_d}\Big|\Ex\big[&\tr{A_{\bm{u}}}\tr{A_{\bm{v}}}\big]-\tau(X_{\bm{u}})\tau(X_{\bm{v}})\Big|\\
        &\leq \sum_{p=0}^\infty\sum_{|\bm{u}|=0}^{2p}\sum_{|\bm{v}|=2p-|\bm{u}|}\Big|\Ex\big[\tr{A_{\bm{u}}}\tr{A_{\bm{v}}}\big]-\tau(X_{\bm{u}})\tau(X_{\bm{v}})\Big|\Big|\SS_{s,t}^{\bm{u}}(\gamma)\SS_{s,t}^{\bm{v}}(\gamma)\Big|\\
        &\leq \left(2m_n^{-1}+2^{-2n}\right)\sum_{p=0}^\infty\sum_{k=0}^{2p}\sum_{\substack{|\bm{u}|=k\\ |\bm{v}|=2p-k}}(4e)^pp!\cdot\Big|\SS_{s,t}^{\bm{u}}(\gamma)\SS_{s,t}^{\bm{v}}(\gamma)\Big|\\
        &\leq \left(2m_n^{-1}+2^{-2n}\right)\sum_{p=0}^\infty\sum_{k=0}^{2p}\frac{(4e)^pp!\cdot d^{2p}\norm{\gamma}_{\text{1-var}}^{2p}}{k!(2p-k)!}\\
        &\leq\left(2m_n^{-1}+2^{-2n}\right)\sum_{p=0}^\infty\frac{(2p+1)(4e)^p\cdot d^{2p}\norm{\gamma}_{\text{1-var}}^{2p}}{p!}\\
        &\leq C_1\big(\norm{\gamma}_{\text{1-var}},d\big)\left(2m_n^{-1}+2^{-2n}\right).
    \end{align*}
    The remaining two terms may each be bounded by
    \begin{align*}
        \sum_{\bm{u},\bm{v}\in\WW_d}\Big|\Ex\big[&\tr{A_{\bm{u}}}\big]-\tau(X_{\bm{u}})\Big|\tau(X_{\bm{v}})\Big|\SS_{s,t}^{\bm{u}}(\gamma)\SS_{s,t}^{\bm{v}}(\gamma)\Big|\\
        &\leq \sum_{\bm{u}\in\WW_d}\Big|\Ex\big[\tr{A_{\bm{u}}}\big]-\tau(X_{\bm{u}})\Big|\Big|\SS_{s,t}^{\bm{u}}(\gamma)\Big|\sum_{\bm{v}\in\WW_d}\tau(X_{\bm{v}})\Big|\SS_{s,t}^{\bm{v}}(\gamma)\Big|\\
        &\leq \left(2m_n^{-1}+2^{-2n}\right)\sum_{p=0}^\infty \frac{(4e)^pd^{2p}\norm{\gamma}_{\text{1-var}}^{2p}}{p!}\sum_{q=0}^\infty \frac{2^qd^{2q}\norm{\gamma}_{\text{1-var}}^{2q}}{(2q)!}\\
        &\leq C_2\big(\norm{\gamma}_{\text{1-var}},d\big)\left(2m_n^{-1}+2^{-2n}\right).
    \end{align*}
    Taking $C=3\max\{C_1,C_2\}$ yields the result.
\end{proof}

\section{Proofs of \cref{thm:quantumalg} and~\cref{thm:classicalalg}}
We now prove the correctness and provide the complexity analysis of the quantum and classical algorithms.

\subsection{Quantum algorithm: Proof of \cref{thm:quantumalg}}

We state \cref{thm:quantumalg} again for convenience.

\begin{theorem}[\cref{thm:quantumalg}]\label{subsec:quantumalgproof}

    The positive integer parameters $m, n, K, M$ may be chosen such that the output of $\mathcal{A}(\{\Delta_{l}^{\nu}\},\epsilon,\delta)$ estimates the unitary path development $\langle \gamma \rangle$ to within additive error $\epsilon$ with probability $1-\delta$:
    \begin{equation}
        \mathbb{P}(|Q_{\Delta}(M,m,n,K) - \langle \gamma \rangle| \ge \epsilon) < \delta.
    \end{equation}
    The circuit requires $\text{log}(1/\epsilon,1/\delta)$ qubits and $\text{poly}(1/\epsilon,1/\delta)$ gates.
\end{theorem}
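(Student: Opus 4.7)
The plan is to decompose the total error $|Q_\Delta(M,m,n,K) - \langle\gamma\rangle|$ into three sources via the triangle inequality, bound each by $\epsilon/3$, and then derive the resulting parameter scalings. First I would verify by direct computation with the reduced density matrix $\rho_1$ that the conditional measurement probability announced below \cref{fig:one-clean-qubit} is correct, so that, because $\tilde\alpha(m)$ is freshly sampled in each of the $M$ iterations of \cref{alg:qalg}, the outcomes $X_i \in \{0,1\}$ are i.i.d.\ Bernoulli with $\mathbb{E}[Q_\Delta] = \mathbb{E}_{\alpha(m)}[\mathrm{Re}\,\tr{U^Q_\gamma(\alpha(m),n,K)}]$. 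Note that $\langle\gamma\rangle$ is real-valued by symmetry of the GUE law under $A\mapsto -A$ combined with $U_\gamma(-A) = U_\gamma(A)^\dag$, so restricting to the real part loses nothing. Writing $\xi := \mathbb{E}_{\alpha(m)}[\tr{U^Q_\gamma(\alpha(m),n,K)}]$ and $\zeta := \mathbb{E}_{\alpha(m)}[\tr{U^{\text{SP}}_{\gamma}(\alpha(m),n)}]$, the bound splits as
\begin{equation*}
    |Q_\Delta - \langle\gamma\rangle| \;\le\; |Q_\Delta - \xi| \;+\; |\xi - \zeta| \;+\; |\zeta - \langle\gamma\rangle|,
\end{equation*}
corresponding to (I) Monte Carlo sampling, (II) Trotterisation, and (III) the sparse-Pauli approximation of the large-$N$ GUE limit.

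Each term is then treated with an off-the-shelf tool. For (I), Hoeffding's inequality applied to $Q_\Delta = 1 - \tfrac{2}{M}\sum_i X_i$ gives $\mathbb{P}(|Q_\Delta - \xi| > \epsilon/3) \le 2\exp(-M\epsilon^2/18)$, forcing $M = \Theta(\epsilon^{-2}\log(1/\delta))$. For (II), conditioning on $\alpha(m)$ and using the elementary bound $|\tr{\cdot}| \le \|\cdot\|_{\mathrm{op}}$ together with \cref{lemma:trotter} and the observation that $|P|_\alpha \le m$ for the random Pauli ensemble yields $|\xi - \zeta| = \mathcal{O}(m\Delta_\gamma^2/K)$, so $K = \Theta(m\Delta_\gamma^2/\epsilon)$ suffices. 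For (III), Jensen's inequality together with \cref{thm:sparse_approximation} provides
\begin{equation*}
    |\zeta - \langle\gamma\rangle|^2 \;\le\; \mathbb{E}_{\alpha(m)}\left[(\tr{U^{\text{SP}}_{\gamma}(\alpha(m),n)} - \langle\gamma\rangle)^2\right] \;\le\; C(m^{-1} + 4^{-n}),
\end{equation*}
with $C$ depending only on $d$ and $\Delta_\gamma$, dictating $m = \Theta(\epsilon^{-2})$ and $n = \Theta(\log(1/\epsilon))$. Since (II) and (III) are deterministic, only (I) contributes to the failure probability, and the triangle inequality delivers the claimed high-probability additive bound.

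Finally, I would collect the resources from these scalings: the circuit acts on $n+1 = \mathcal{O}(\log(1/\epsilon))$ qubits, contains $L\cdot K\cdot d\cdot m = \mathcal{O}(Ld\Delta_\gamma^2/\epsilon^5)$ Pauli rotations per execution, and requires $M$ independent executions for a grand total of $\mathrm{poly}(1/\epsilon,1/\delta)$ Pauli rotations, matching the claim of the theorem. The two genuine technical ingredients---\cref{lemma:trotter} and \cref{thm:sparse_approximation}---are both already in hand, so no new estimate is required; the only subtlety is the bookkeeping, since $K$ depends on the previously-chosen $m$, so the $\epsilon/3$ budget must be allocated in the order $m$, then $n$, then $K$, and finally $M$.
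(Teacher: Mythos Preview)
Your proposal is correct and follows essentially the same three-term triangle-inequality decomposition as the paper (Monte Carlo via Hoeffding, Trotterisation via \cref{lemma:trotter}, sparse-Pauli approximation via \cref{thm:sparse_approximation}). A minor difference: the paper simply sets $m=n$, whereas you decouple them and obtain the sharper $m=\Theta(\epsilon^{-2})$, $n=\Theta(\log(1/\epsilon))$; your explicit use of Jensen to pass from the $L^2$ bound of \cref{thm:sparse_approximation} to the needed $L^1$ bound, and your verification that the per-run resampling yields i.i.d.\ Bernoulli outcomes, are details the paper leaves implicit.
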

\begin{proof}
We use the triangle inequality and union bound to find
\begin{equation}\label{eq:uniontrianglebound2}
    \mathbb{P}\big(|Q_{\Delta}(M,m,n,K) - \langle \gamma \rangle| > \epsilon\big) 
    \le \mathbb{P}\left(\left|Q_{\Delta}(M,m,n,K) - \mathbb{E}[\tr{U_{\gamma}(\alpha,n,K)}]\right| > \frac{\epsilon}{3}\right),
\end{equation}
provided
\begin{equation}
    \left|\mathbb{E}[U_{\gamma}(\alpha,n,K)] - \mathbb{E}[U_{\gamma}(\alpha,n)]\right| < \frac{\epsilon}{3}, \quad |\mathbb{E}[U_{\gamma}(\alpha,n)] - \langle \gamma \rangle| < \frac{\epsilon}{3}.
\end{equation}
The first of these inequalities is ensured by \cref{lemma:trotter} provided $K > 3\Delta_{\gamma}^2 m / \epsilon$ and the second is ensured by \cref{thm:sparse_approximation} with $m=n$ provided $n > \max(6C/\epsilon, \log(6C/\epsilon))$. Finally, we ensure that the right-hand side of~\eqref{eq:uniontrianglebound2} is bounded by $\delta$ using Hoeffding's inequality (noting $|\tr{U}|<1$). We thus take $M > \frac{2}{\epsilon^2}\log(2/\delta)$.
\end{proof}

\subsection{Classical Monte Carlo algorithm: Proof of \cref{thm:classicalalg}}\label{subsec:classicalg}
We now prove \cref{thm:classicalalg}. We restate the classical algorithm and the theorem statement for convenience.
\begin{algorithm}[H]
\caption{Classical Sampling $\mathcal{A}^{\text{c}}$}
\begin{algorithmic}
\Require Data: path increments $\{\Delta_l^1,\ldots,\Delta_l^d\}_{l=1}^L$. Parameters: Integers $N>0$, $M>0$ and $K>0$.
\State Draw $d$ independent $N \times N$ GUE matrices $\{A_\nu\}_{\nu=1}^d$
\For{$m = 1$ to $M$}
    \For{$l = 1$ to $L$}
        \State Compute $A_l := \sum_{\nu=1}^d \Delta^{\nu}_lA_\nu$
        \State Compute $U_l^K := (1+A_l/K)^K$ using a truncated approximation dependent on $K>0$
    \EndFor
    \State Compute $Q_m := \mathrm{Tr}(U_1^K U_2^K \cdots U_L^K)$
\EndFor
\State \Return $Q_{\Delta}(M,N,K) := \frac{1}{M} \sum_{m=1}^M Q_m$
\end{algorithmic}
\end{algorithm}

\begin{theorem}
    The output of $\mathcal{A}^{\text{c}}(\{\Delta_i^{\nu}\})$ satisfies
    \begin{equation}
        \mathbb{P}\left(\left|Q_{\Delta}(N,M,K)-\langle \gamma \rangle)\right|>\epsilon \right) < \delta,
    \end{equation}
    for small $\epsilon > 0$ provided that $M > 2 /\epsilon^2 \log(2/ \delta)$ samples are taken, the dimension $N$ of matrices satisfies $N > e^{2 \Delta_{\gamma}}/\epsilon^2$ and the matrix exponential is truncated to $K > \Delta_\gamma e^{2 \Delta_{\gamma}}/\epsilon$.
\end{theorem}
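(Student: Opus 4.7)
The plan is to decompose the total error via the triangle inequality into three natural sources corresponding to the three approximations embedded in $\mathcal{A}^{\text c}$: (I) the Monte Carlo sampling over $M$ independent runs, (II) the truncation $U_l^K=(I+A_l/K)^K$ of the matrix exponential $e^{A_l}$, and (III) the replacement of the $N\to\infty$ free semicircular limit by a single finite-$N$ GUE expectation. Writing $Z^N_\gamma:=\tr{U_\gamma(A,N)}$ and $Z^{N,K}_\gamma:=\tr{U_1^K\cdots U_L^K}$, the bound
\[
\big|Q_\Delta(N,M,K)-\langle\gamma\rangle\big|\;\le\; \big|Q_\Delta(N,M,K)-\mathbb{E}[Z^{N,K}_\gamma]\big| + \big|\mathbb{E}[Z^{N,K}_\gamma]-\mathbb{E}[Z^{N}_\gamma]\big| + \big|\mathbb{E}[Z^{N}_\gamma]-\langle\gamma\rangle\big|
\]
reduces the problem to forcing each summand below $\epsilon/3$ with cumulative failure probability at most $\delta$, which will directly produce the three stated lower bounds on $M$, $K$ and $N$.

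For the Monte Carlo term (I), I would apply Hoeffding's inequality to $Q_\Delta=\tfrac{1}{M}\sum_m Q_m$. On the high-probability event $\{\max_\nu\|A_\nu\|\le C\}$ controlled by the sub-Gaussian tails of \cref{lem: sub_gaussian_tail}, each $Q_m$ is bounded by a constant of order one (since $\prod_l e^{iA_l}$ is unitary and each $U_l^K$ is close to it on this event), so Hoeffding yields $\mathbb{P}(|Q_\Delta-\mathbb{E}[Z^{N,K}_\gamma]|>\epsilon/3)\le\delta/3$ provided $M>2\epsilon^{-2}\log(2/\delta)$, up to absorbed constants.

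For (II), the elementary per-segment estimate $\|(I+A_l/K)^K-e^{A_l}\|\le \|A_l\|^2 e^{\|A_l\|}/(2K)+O(K^{-2})$ combined with telescoping of the product across the $L$ linear pieces bounds the total operator-norm truncation error by $\sum_l \|A_l\|^2 e^{\|A_l\|}/(2K)$ up to uniformly bounded product factors on the good GUE event. Since $\|A_l\|\le|\Delta_l|\max_\nu\|A_\nu\|\lesssim 2|\Delta_l|$ and $\sum_l|\Delta_l|=\Delta_\gamma$, the expected truncation error is at most $C\Delta_\gamma^{\,2}e^{2\Delta_\gamma}/K$, which is $\le\epsilon/3$ once $K>\Delta_\gamma e^{2\Delta_\gamma}/\epsilon$. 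For the finite-$N$ term (III), I would expand both sides in the signature basis,
\[
\mathbb{E}[Z^N_\gamma]-\langle\gamma\rangle\;=\;\sum_{\bm{w}\in\WW_d} i^{|\bm{w}|}\big(\mathbb{E}[\tr{A_{\bm{w}}^N}]-\tau(X_{\bm{w}})\big)\SS^{\bm{w}}(\gamma),
\]
and use the known rates (Harer--Zagier-type) for the convergence of GUE moments to the semicircular law, with constants growing at most geometrically in $|\bm{w}|$, combined with the factorial decay $|\SS^{\bm{w}}(\gamma)|\le\Delta_\gamma^{|\bm{w}|}/|\bm{w}|!$; summation over words then produces a total bound comfortably $\le\epsilon/3$ under the stated condition $N>e^{2\Delta_\gamma}/\epsilon^2$.

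The main obstacle will be the bookkeeping required to combine per-sample operator-norm estimates (which contain exponentials of $\|A_\nu\|$) with the only sub-Gaussian tail control on $\|A_\nu\|$: the quantitative parts of (I) and (II) must be carried out on a high-probability event $\{\max_\nu\|A_\nu\|\le C\}$, with a small tail contribution bounded separately via \cref{lem: sub_gaussian_tail}. The word-indexed combinatorial summations required for (III) closely mirror those already carried out in the proof of \cref{thm:sparse_approximation} and should be reusable essentially verbatim, with GUE moments in place of Pauli ensemble moments.
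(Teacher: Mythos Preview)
Your proposal is correct and follows essentially the same three-term triangle-inequality decomposition as the paper's proof, matching its use of Hoeffding for the Monte Carlo term, a per-segment operator-norm truncation bound telescoped across the $L$ pieces, and a finite-$N$ moment comparison for the last term. If anything, you are more careful than the paper: you explicitly flag the need to work on a high-probability GUE eigenvalue event to justify Hoeffding and the exponential bounds (the paper simply invokes Hoeffding and a Tracy--Widom norm estimate without discussing the tail), and you spell out the signature expansion for the finite-$N$ term that the paper states as an unproved lemma.
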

We begin with a series of lemmas.
\begin{lemma}\label{lemma:classicalhoeff}
    The estimator $Q_{\Delta}(N,M,K)$ of $\mathbb{E}[\tr{U(N,K)}]$ satisfies
    \begin{equation}
        \mathbb{P}(|Q_{\Delta}(M,N,K) - \mathbb{E}[\tr{U(N,K)}]| > \epsilon) < \delta,
    \end{equation}
    provided $M > \frac{2}{\epsilon^2} \log(2/\delta)$ by Hoeffding's inequality.
\end{lemma}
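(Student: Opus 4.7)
The plan is to verify the two hypotheses of Hoeffding's inequality for the scalar random variables $\{Q_m\}_{m=1}^M$ produced by the loop of Algorithm~2, and then apply the inequality directly.

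First I would observe that the samples are i.i.d.\ by construction: on each iteration of the outer loop, a fresh independent tuple $(A_1,\ldots,A_d)$ of GUE matrices is drawn, and the computation of $Q_m = \tr{U_1^K \cdots U_L^K}$ is a deterministic function of that tuple and of the fixed path increments $\{\Delta_l^\nu\}$. Hence $Q_1,\ldots,Q_M$ are i.i.d.\ copies of a single random variable $Q$ whose mean is exactly $\mathbb{E}[\tr{U(N,K)}]$, and the estimator $Q_\Delta(M,N,K)$ is their empirical mean.

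Second I would establish that each $Q_m$ lies almost surely in a bounded interval of length $2$. The natural route is to argue that $|Q_m| \le 1$: since $\tr{\cdot}$ is the normalised trace, for any matrix $B$ one has $|\tr{B}| \le \|B\|_{\mathrm{op}}$, so it suffices to bound $\|U_1^K \cdots U_L^K\|_{\mathrm{op}} \le 1$. The intended setting treats $U_l^K$ as a truncated-polynomial approximation to a unitary time-evolution operator associated to the anti-Hermitian generator (so that the product has operator norm close to $1$). In practice one must therefore either (i) replace the truncation $(I+A_l/K)^K$ by the intrinsically contractive exponential, or (ii) condition on the high-probability event $\{\lambda_{\max}^N(\mathbf{A}) \le C\}$ supplied by \Cref{lem: sub_gaussian_tail} and absorb the small probability of its complement into $\delta$. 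Either way one obtains a uniform almost-sure (or with probability $\ge 1-\delta/2$) bound $|Q_m - \mathbb{E}[Q]|\le 1$; this boundedness step is the only genuine obstacle, all other elements being mechanical.

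With these two ingredients, Hoeffding's inequality gives, for $b-a = 2$,
\begin{equation*}
\mathbb{P}\!\left(\bigl|Q_\Delta(M,N,K) - \mathbb{E}[\tr{U(N,K)}]\bigr| > \epsilon\right) \le 2\exp\!\left(-\frac{2M\epsilon^2}{(b-a)^2}\right) = 2\exp\!\left(-\frac{M\epsilon^2}{2}\right).
\end{equation*}
Requiring the right-hand side to be at most $\delta$ and solving for $M$ yields the claimed threshold $M > \frac{2}{\epsilon^2}\log(2/\delta)$, completing the proof.
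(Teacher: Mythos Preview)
Your approach is exactly the paper's: the lemma is stated with the justification ``by Hoeffding's inequality'' embedded in its statement and no separate proof is given, so the paper simply takes the i.i.d.\ structure and boundedness for granted and reads off the threshold $M > \tfrac{2}{\epsilon^2}\log(2/\delta)$.

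You are in fact more careful than the paper on the one non-trivial point, namely the almost-sure bound $|Q_m|\le 1$. As you note, the truncated product $(I+A_l/K)^K$ is not exactly unitary, so $\|U_1^K\cdots U_L^K\|_{\mathrm{op}}\le 1$ is not literally true; the paper does not address this and simply applies Hoeffding as if it were. Your suggested fixes (work with the exact exponential, or condition on the sub-Gaussian tail event from \Cref{lem: sub_gaussian_tail} and absorb the complement into $\delta$) are the right way to make the argument rigorous.
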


\begin{lemma}\label{lemma:classicalalgexp}
    The truncated matrix exponential $U(N,K):= U_1^K \cdots U_L^K$ satisfies
    \begin{equation}
        \mathbb{E}\left[\tr{U(N,K)} - \tr{U(N)}\right] < \epsilon,
    \end{equation}
    provided $K > \frac{\Delta_l e^{2 \Delta_l}}{2 /\epsilon}$.
\end{lemma}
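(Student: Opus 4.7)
The plan is to bound the expected operator-norm distance $\mathbb{E}\bigl[\|U(N,K) - U(N)\|\bigr]$, since the normalised-trace inequality $|\tr{X}| \leq \|X\|$ (with $\|\cdot\|$ the operator norm) immediately yields
\[
\bigl|\mathbb{E}[\tr{U(N,K)}] - \mathbb{E}[\tr{U(N)}]\bigr| \leq \mathbb{E}\bigl[\|U(N,K) - U(N)\|\bigr].
\]
I would proceed in three steps: a per-segment Euler error estimate, a telescoping product bound over the $L$ linear segments of $\gamma$, and an exponential-moment estimate for the maximum eigenvalue of $(A_{\nu})_{\nu=1}^{d}$ using the sub-Gaussian tails of \cref{lem: sub_gaussian_tail}.

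For a single segment, comparing the one Euler substep $I + A_l/K$ with the exact propagator $e^{A_l/K}$ via the Taylor remainder gives $\|e^{A_l/K} - (I+A_l/K)\| \leq \tfrac{1}{2}\|A_l/K\|^2 e^{\|A_l\|/K}$. Telescoping over the $K$ substeps through the identity $X^K - Y^K = \sum_{k=0}^{K-1} X^k (X-Y) Y^{K-1-k}$ and using the crude bounds $\|I+A_l/K\|, \|e^{A_l/K}\| \leq e^{\|A_l\|/K}$ upgrades this to
\[
\bigl\|e^{A_l} - (I+A_l/K)^K\bigr\| \leq \frac{\|A_l\|^2}{2K}\, e^{\|A_l\|}.
\]
Applying the same telescoping identity across the $L$ segments yields
\[
\|U(N,K) - U(N)\| \;\leq\; \frac{1}{2K}\sum_{l=1}^L \|A_l\|^2 \exp\!\Bigl(\sum_{k\neq l}\|A_k\|\Bigr) \;\leq\; \frac{\Delta_{\gamma}^{2}\,R^{2}}{2K}\, e^{\Delta_{\gamma} R},
\]
with $R := \max_\nu \|A_\nu\|$, where I have used $\|A_l\| \leq \|\Delta_l\|\, R$ (up to a harmless dimensional constant) together with $\sum_l \|\Delta_l\|^2 \leq \Delta_\gamma^2$.

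Taking expectations, \cref{lem: sub_gaussian_tail} provides a sub-Gaussian tail for $R = \lambda_{\max}^N(\mathbf{A})$ uniform in $N$, so the exponential moment $\mathbb{E}[R^2 e^{\Delta_\gamma R}]$ is bounded by a constant depending only on $\Delta_\gamma$. For large $N$ this constant is essentially $4\,e^{2\Delta_\gamma}$, since the largest eigenvalue of a normalised GUE matrix concentrates around $2$. Combining with the trace-norm inequality and choosing $K$ of order $\Delta_\gamma\, e^{2\Delta_\gamma}/\epsilon$ (up to an absolute constant) makes the right-hand side less than $\epsilon$, as required.

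The main obstacle is extracting the sharp $e^{2\Delta_\gamma}$ prefactor in step three so as to match the stated scaling of $K$. This requires splitting $\mathbb{E}[R^2 e^{\Delta_\gamma R}]$ into a bulk contribution from $R \approx 2$ (yielding the $4\, e^{2\Delta_\gamma}$ factor) and a tail contribution controlled by the quantitative estimate of \cref{lem: sub_gaussian_equiv}, whose $e^{-\alpha N t^2}$ decay overwhelms the $e^{\Delta_\gamma t}$ weight and can therefore be absorbed into a dimension-independent constant. A cruder bound using only the finiteness of polynomial moments of $R$ would still yield polynomial resources in $1/\epsilon$ but with worse constants, which is in any case sufficient for the use of the lemma in \cref{thm:classicalalg}.
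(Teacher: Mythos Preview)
Your proposal follows the same three-step architecture as the paper's proof: a per-segment Euler-type error estimate for $(I+A_l/K)^K$ versus $e^{A_l}$, a telescoping argument to pass from segments to the full product, and a spectral concentration input to control $\|A_l\|$. The difference lies only in the concentration tool. The paper invokes the Tracy--Widom edge asymptotics for the GUE directly, writing $\|A_l\| \lesssim \Delta_l(1+\mu N^{-2/3})$ and reading off the $e^{2\Delta_l}$ factor immediately; you instead route through the sub-Gaussian tail of \cref{lem: sub_gaussian_tail} and then argue that the bulk contribution of $R\approx 2$ recovers the same constant. Your approach is more self-contained (it reuses a lemma already proved in the paper for general $\mu_V^N$) and arguably more rigorous, since Tracy--Widom is a distributional limit rather than a uniform deterministic bound; the paper's shortcut, on the other hand, makes the origin of the $e^{2\Delta_\gamma}$ threshold in $K$ more transparent. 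Either route is sufficient for the application in \cref{thm:classicalalg}, where only the polynomial dependence on $1/\epsilon$ matters.
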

\begin{proof}
    We write $A_l := \sum_{\nu=1}^d \Delta_l^{\nu}A_\nu$. It is then a standard GUE result that
    \begin{equation}
        \|A_l\| \le \max_{\nu}(\Delta_l^{\nu})\left(1 + \frac{\mu}{N^{2/3}} + o(N^{-2/3}) \right),
    \end{equation}
    where $\Delta_l := \max_{\nu}(\Delta_l^{\nu})$ and $\mu$ is the mean of the Tracy-Widom distribution. Now we write $f_K(A) = (1+A/K)^K$. It is straightforward to obtain the bound
    \begin{equation}
        \mathbb{E}\left[\|\exp(A_l) - f_K(A_l)\|\right] \le \frac{\Delta_l}{K}e^{2 \Delta_l}\left(1 + \frac{\mu}{N^{2/3}} + o(N^{-2/3})\right),
    \end{equation}
    for each $l=1,\ldots,L$ this quantity is thus bounded by $\epsilon$ provided $K > \Delta_{\gamma} e^{2 \Delta_{\gamma}} / K$. Finally, we note
    \begin{equation}
        \mathbb{E}\left[\tr{U_1\cdots U_l} - \tr{f_K(A_1) \cdots f_K(A_l)} \right] < \epsilon.
    \end{equation}
\end{proof}

\begin{lemma}\label{lemma:classicalalgdev}
    The convergence of the finite $N$ development to the unitary development satisfies
    \begin{equation}
        \mathbb{E}\left[ |\tr{U(N)} - \langle \gamma \rangle| \right] < \epsilon,
    \end{equation}
    provided $N > e^{2\Delta_{\gamma}} / \epsilon^2$.
\end{lemma}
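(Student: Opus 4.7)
The plan is to follow the same variance-reduction strategy used in the proof of Theorem~\ref{thm:sparse_approximation}, but with the sparse Pauli ensemble replaced by the actual GUE on $N \times N$ Hermitian matrices. By the triangle inequality,
\[
\mathbb{E}[|\tr{U(N)} - \langle \gamma \rangle|] \le \mathbb{E}\bigl[|\tr{U(N)} - \mathbb{E}\tr{U(N)}|\bigr] + |\mathbb{E}\tr{U(N)} - \langle \gamma \rangle|,
\]
so it suffices to control a fluctuation term and a bias term, each of which should be of order $e^{c\Delta_\gamma}/\sqrt{N}$ or better.

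For the fluctuation term, I would first apply Jensen's inequality to pass to the $L^2$ norm, then expand $\tr{U(N)}$ in the signature basis using Proposition~\ref{prop: limits_interchange}, obtaining
\[
\mathrm{Var}\bigl(\tr{U(N)}\bigr) = \sum_{\bm{u},\bm{v} \in \WW_d} i^{|\bm{v}|-|\bm{u}|}\, \mathrm{Cov}\bigl(\tr{A_{\bm{u}}^N}, \tr{A_{\bm{v}}^N}\bigr)\, \SS^{\bm{u}}_{0,T}(\gamma)\,\overline{\SS^{\bm{v}}_{0,T}(\gamma)}.
\]
The interchange of sum and expectation is justified exactly as in the proof of Proposition~\ref{prop: limits_interchange}, by combining the sub-Gaussian tail estimate of Lemma~\ref{lem: sub_gaussian_tail} with the factorial decay $|\SS^{\bm{w}}_{0,T}(\gamma)| \le \Delta_\gamma^{|\bm{w}|}/|\bm{w}|!$. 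The main analytic input is then the classical GUE covariance bound $|\mathrm{Cov}(\tr{A_{\bm{u}}^N},\tr{A_{\bm{v}}^N})| \le K^{|\bm{u}|+|\bm{v}|}/N^2$, which is the trace-fluctuation part of the genus expansion \cite{Guionnet_SD} and can also be obtained via Gaussian concentration applied to the Lipschitz map $A \mapsto \tr{U_\gamma(A,N)}$ (whose Lipschitz constant in Hilbert--Schmidt norm scales like $\Delta_\gamma/\sqrt{N}$). Plugging this in and splitting the double sum as a product yields
\[
\mathrm{Var}\bigl(\tr{U(N)}\bigr) \le \frac{1}{N^2}\left(\sum_{k=0}^\infty \frac{(Kd\Delta_\gamma)^k}{k!}\right)^{\!2} = \frac{e^{2Kd\Delta_\gamma}}{N^2},
\]
so the fluctuation term is at most $e^{Kd\Delta_\gamma}/N$.

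The bias term is handled by the same signature expansion together with the standard rate $|\mathbb{E}[\tr{A_{\bm{w}}^N}] - \tau(X_{\bm{w}})| = O(c^{|\bm{w}|}/N^2)$ of convergence of GUE mixed moments to the free semicircular law; combined with factorial decay of the signature this gives $|\mathbb{E}\tr{U(N)} - \langle\gamma\rangle| \le C\,e^{c\Delta_\gamma}/N^2$, which is of lower order and is absorbed into the fluctuation term. Adding the two contributions, setting each below $\epsilon/2$, and verifying that the stated threshold $N > e^{2\Delta_\gamma}/\epsilon^2$ is (loosely) sufficient concludes the argument. The main obstacle I foresee is tracking the constants in the covariance estimate so that the dependence on word length really collapses into the exponential $e^{c\Delta_\gamma}$, rather than a worse growth that would spoil the summation; this is where one either invokes a clean off-the-shelf statement about GUE loop correlators or rederives it from the finite-$N$ Schwinger--Dyson equations using the machinery of Section~\ref{sec: sd_intro}.
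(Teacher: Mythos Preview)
The paper states this lemma without proof: after the statement it moves directly to ``We now turn to the proof of \cref{thm:classicalalg}'', so there is no in-paper argument to compare against. Your proposal therefore has to be judged on its own merits.

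Your strategy is the natural one and is essentially correct in outline. Splitting into bias and fluctuation, expanding each piece in the signature basis, and invoking the genus-expansion estimates
\[
\bigl|\mathbb{E}[\tr{A_{\bm{w}}^N}]-\tau(X_{\bm{w}})\bigr|\le \frac{C^{|\bm{w}|}}{N^{2}},
\qquad
\bigl|\mathrm{Cov}\bigl(\tr{A_{\bm{u}}^N},\tr{A_{\bm{v}}^N}\bigr)\bigr|\le \frac{C^{|\bm{u}|+|\bm{v}|}}{N^{2}}
\]
for i.i.d.\ GUE matrices is exactly what is needed; such bounds (uniform in $N$, with at-most-exponential dependence on word length) are available in the literature you cite, e.g.\ via the finite-$N$ Schwinger--Dyson recursion in \cite{Guionnet_SD} or Haagerup--Thorbj{\o}rnsen-type estimates. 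Combined with the factorial decay $|\SS^{\bm{w}}(\gamma)|\le \Delta_\gamma^{|\bm{w}|}/|\bm{w}|!$, the resulting series converge and give fluctuation and bias both of order $e^{c\Delta_\gamma}/N$ (not merely $1/\sqrt{N}$). This is \emph{stronger} than what the stated threshold $N>e^{2\Delta_\gamma}/\epsilon^{2}$ requires, which corresponds to a $1/\sqrt{N}$ rate; so your argument, once the constants are pinned down, actually overshoots the lemma. The caveat you flag at the end---that the word-length dependence in the covariance bound must be genuinely exponential rather than factorial for the sum to collapse into an $e^{c\Delta_\gamma}$---is the right thing to worry about, and it is precisely here that one must either quote a clean off-the-shelf statement or rederive it carefully. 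The alternative route you mention via Gaussian concentration for the Lipschitz functional $A\mapsto\tr{U_\gamma(A,N)}$ is also viable and arguably cleaner, since it bypasses the word-by-word bookkeeping entirely; it gives a variance bound of order $\Delta_\gamma^{2}/N$ directly (Lipschitz constant $\sim\Delta_\gamma/\sqrt{N}$ in the Euclidean norm on entries), which matches the $1/\sqrt{N}$ rate implicit in the lemma's threshold more naturally.
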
    
We now turn to the proof of \cref{thm:classicalalg}.
\begin{proof}
    We first note that
    \begin{equation}
        \left|Q_\Delta - \langle \gamma \rangle\right| \le \left|Q_{\Delta} - \mathbb{E}[\tr{U(N,K)}]\right| + \left|\mathbb{E}[\tr{U(N)}] - \mathbb{E}[\tr{U(N,K)}]\right| + \left|\mathbb{E}[\tr{U(N)}] - \langle \gamma \rangle\right|.
    \end{equation}
    Thus 
    \begin{equation}
        \mathbb{P}(|Q_\Delta - \langle \gamma \rangle| > \epsilon) \le \mathbb{P}(|Q_\Delta - \mathbb{E}[\tr{U(N,K)}]| > \epsilon) 
    \end{equation}
    provided, from \cref{lemma:classicalalgdev} and~\cref{lemma:classicalalgexp}, $N > e^{2\Delta_{\gamma}} / \epsilon^2$ and $K > \frac{\Delta_\gamma e^{2 \Delta_\gamma}}{2 /\epsilon}$. Finally, we apply Hoeffding's inequality and \cref{lemma:classicalhoeff}, to obtain the result.
\end{proof}

\end{document}